\def\section{\@startsection{section}{1}\z@{.9\linespacing\@plus\linespacing}%
  {.7\linespacing} {\fontsize{13}{15}\selectfont\scshape\centering}}
\def\paragraph{\@startsection{paragraph}{4}%
  \z@{0.3em}{-.5em}%
  {$\bullet$ \ \normalfont\itshape}}
\newtheorem{theo}{Theorem}[section]
\newtheorem{proposition}[theo]{Proposition}
\newtheorem{lemma}[theo]{Lemma}
\newtheorem{corollary}[theo]{Corollary}
\theoremstyle{definition}
\newtheorem{notation}[theo]{Notation}
\newtheorem{hyp}[theo]{Assumption}
\renewcommand{\geq}{\geqslant}
\renewcommand{\leq}{\leqslant}
\newcommand{\h}{\hbar}
\newcommand{\deriv}[2]{\frac{\partial #1}{\partial #2}}
\newcommand{\norm}[1]{\left\|#1\right\|}
\newcommand{\dd}{\theta}
\theoremstyle{remark}
\newtheorem{rem}[theo]{Remark}
\definecolor{gr}{rgb}   {0.,   0.69,   0.23 }
\definecolor{bl}{rgb}   {0.,   0.5,   1. }
\definecolor{mg}{rgb}   {0.85,  0.,    0.85}
\definecolor{yl}{rgb}   {0.8,  0.7,   0.}
\definecolor{webred}{rgb}{0.75,0,0}
\definecolor{webgreen}{rgb}{0,0.75,0}
\newcommand{\R}{\mathbb{R}}
\newcommand{\C}{\mathbb{C}}
\newcommand{\N}{\mathbb{N}}
\newcommand{\bA}{\mathbf{A}}
\newcommand{\bB}{\mathbf{B}}
\newcommand{\bb}{\mathbf{b}}
\newcommand{\bc}{\mathbf{c}}
\newcommand{\bd}{\mathbf{d}}
\newcommand{\f}{\mathbf{f}}
\newcommand{\e}{\mathbf{e}}
\newcommand{\z}{\mathbf{z}}
\newcommand{\dx}{\mathrm{d}}
\newcommand{\dr}{\partial}
\newcommand{\eps}{\varepsilon}
\newcommand{\ess}{\mathsf{ess}}
\newcommand{\spe}{\mathsf{sp}}
\newcommand{\pscal}[2]{\langle #1, #2 \rangle }
\newcommand{\Op}{\operatorname{Op}}
\newcommand{\OO}{\mathcal{O}}
\newcommand{\opweyl}{\Op_{\hbar}^w}
\newcommand{\abs}[1]{\left|#1\right|}
\newcommand{\formel}[1]{[\![#1]\!]}
\begin{document}

\title[Magnetic wells in dimension three]{\large Magnetic wells in
  dimension three} 
\author{B. Helffer}
\address{~\newline B. Helffer: D\'epartement de Math\'ematiques (B\^at. 425), Universit\'e Paris-Sud et CNRS, 91405 Orsay C\'edex (France) \& Universit\'e de Nantes}
\email{bernard.helffer@math.u-psud.fr}
\author{Y. Kordyukov}
\address{~\newline Y. Kordyukov: Institute of Mathematics, Russian Academy of Sciences, 112 Chernyshevsky str. 450008 Ufa (Russia)}
\email{yurikor@matem.anrb.ru}
\author{N. Raymond} 
\address{~\newline N. Raymond: IRMAR (UMR 6625), Universit{\'e}
  de Rennes 1,
  Campus de Beaulieu, 35042 Rennes cedex (France)}
  \email{nicolas.raymond@univ-rennes1.fr}
\author{S. V\~u Ng\d{o}c}
\address{~\newline S. V\~u Ng\d{o}c: IRMAR (UMR 6625), Universit{\'e}
  de Rennes 1,
  Campus de Beaulieu, 35042 Rennes cedex (France) \& Institut Universitaire de France.}
\email{san.vu-ngoc@univ-rennes1.fr}


\begin{abstract}
  This paper deals with semiclassical asymptotics of the
  three-dimensional magnetic Laplacian in presence of magnetic
  confinement. Using generic assumptions on the geometry of the
  confinement, we exhibit three semiclassical scales and their
  corresponding effective quantum Hamiltonians, by means of three
  microlocal normal forms \textit{\`a la Birkhoff}. As a consequence,
  when the magnetic field admits a unique and non degenerate minimum,
  we are able to reduce the spectral analysis of the low-lying
  eigenvalues to a one-dimensional $\hbar$-pseudo-differential
  operator whose Weyl's symbol admits an asymptotic expansion in powers of
  $\hbar^{\frac1 2}$.
\end{abstract}
\maketitle

\section{Introduction}

\subsection{Motivation and context}
The analysis of the magnetic Laplacian $(-i\hbar\nabla-\bA)^2$ in the
semiclassical limit $\hbar\to 0$ has been the object of many
developments in the last twenty years. The existence of discrete
spectrum for this operator, together with the analysis of the
eigenvalues, is related to the notion of ``magnetic bottle'', or
quantum confinement by a pure magnetic field, and has important
applications in physics. Moreover, motivated by investigations of the
third critical field in Ginzburg-Landau theory for superconductivity,
there has been a great attention focused on estimates of the lowest
eigenvalue. In the last decade, it appears that the spectral analysis
of the magnetic Laplacian has acquired a life on its own. For a story
and discussions about the subject, the reader is referred to the
recent reviews \cite{FouHel10, HelKo14, Ray14}. 

In contrast to the wealth of studies exploring the semiclassical
approximations of the Schrödinger operator $-\h^2\Delta+V$, the
\textit{classical} picture associated with the Hamiltonian
$\|p-\bA(q)\|^2$ has almost never been investigated to describe the
\textit{semiclassical} bound states (\textit{i.e.} the eigenfunctions
of low energy) of the magnetic Laplacian. The paper by Raymond and
V{\~u}~Ng{\d{o}}c \cite{RVN13} is to our knowledge the first rigorous
work in this direction. In that paper, which deals with the
two-dimensional case, the notion of magnetic drift, well known to
physicists, is cast in a symplectic framework, and using a
semiclassical Birkhoff normal form (see for instance \cite{Vu06,
  VuCha08, Vu09}) it becomes possible to describe all the eigenvalues
of order $\OO(\hbar)$. Independently, the asymptotic expansion of a
smaller set of eigenvalues was established in \cite{HelKo11, HelKo13b}
through different methods which act directly on the quantum side:
explicit unitary transforms and a Grushin like reduction are used to
reduce the two-dimensional operator to an effective one-dimensional
operator.

The three-dimensional case happens to be much harder.  The only known
results in this case that provide a full asymptotic expansion of a
given eigenvalue concern toy models where the confinement is obtained
by a boundary carrying a Neumann condition on an half space in
\cite{Ray12} or on a wedge in \cite{PoRay12}. In the case of smooth
confinement without boundary, a construction of quasimodes by Helffer
and Kordyukov in \cite{HelKo13} suggests what the expansions of the
low lying eigenvalues could be. But, as was expected by Colin de
Verdière in his list of open questions in~\cite{colin-moyennisation},
extending the symplectic and microlocal techniques to the
three-dimensional case contains an intrinsic difficulty in the fact
that the symplectic form cannot be nondegenerate on the characteristic
hypersurface. The goal of our paper is to answer this question by
fully carrying out this strategy. After averaging the cyclotron
motion, the effect of the degeneracy of the symplectic form can be
observed on the fact that the reduced operator is only partially
elliptic. Hence, the key ingredient will be a separation of scales via
the introduction of a new semiclassical parameter for only one part of
the variables. These semiclassical scales are reminiscent of the three
scales that have been exhibited in the classical picture in the large
field limit, see~\cite{benettin-sempio, cheverry14}.  They are also
related to the Born-Oppenheimer type of approximation in quantum
mechanics (see for instance \cite{BO27, Martinez07}). In fact, in a
partially semiclassical context and under generic assumptions, a full
asymptotic expansion of the first magnetic eigenvalues (and the
corresponding WKB expansions) has been recently established in any
dimension in the paper by
Bonnaillie-No\"el--H\'erau--Raymond~\cite{BHR14}.

\subsection{Magnetic geometry}
Let us now describe the geometry of the problem. The configuration
space is
$$\R^3=\{q_{1}\e_{1}+q_{2}\e_{2}+q_{3}\e_{3},\quad q_{j}\in\R, \quad j=1,2,3\},$$
where $(\e_{j})_{j=1,2,3}$ is the canonical basis of $\R^3$.  The
phase space is
$$\R^6=\{(q,p)\in\R^3\times\R^3\}$$
and we endow it with the canonical $2$-form
\begin{equation}\label{omega0}
  \omega_{0}=\dx p_{1}\wedge \dx q_{1}+\dx p_{2}\wedge \dx q_{2}+\dx p_{3}\wedge \dx q_{3}.
\end{equation}
We will use the standard Euclidean scalar product
$\langle\cdot,\cdot\rangle$ on $\R^3$ and $\|\cdot\|$ the associated
norm. In particular, we can rewrite $\omega_{0}$ as
$$\omega_{0}((u_{1}, u_{2}),(v_{1},v_{2}))=\langle v_{1}, u_{2}\rangle-\langle v_{2}, u_{1}\rangle,\quad \forall u_{1}, u_{2}, v_{1}, v_{2}\in\R^3.$$
The main object of this paper is the magnetic Hamiltonian, defined for
all $(q,p)\in\R^6$ by
\begin{equation}\label{H}
  H(q,p)=\|p-\bA(q)\|^2,
\end{equation}
where $\bA\in\mathcal{C}^\infty(\R^3, \R^3)$.

Let us now introduce the magnetic field. The vector field $\bA=(A_{1},
A_{2}, A_{3})$ is associated (via the Euclidean structure) with the
following $1$-form
$$\alpha=A_{1}\dx q_{1}+A_{2}\dx q_{2}+A_{3}\dx q_{3}$$
and its exterior derivative is a $2$-form, called magnetic $2$-form
and expressed as
$$\dx \alpha=(\dr_{1}A_{2}-\dr_{2}A_{1})\dx q_{1}\wedge \dx q_{2}+(\dr_{1}A_{3}-\dr_{3}A_{1})\dx q_{1}\wedge \dx q_{3}+(\dr_{2}A_{3}-\dr_{3}A_{2})\dx q_{2}\wedge \dx q_{3}\,.$$
The form $\dx\alpha$ may be identified with a vector field. If we let:
$$\bB=\nabla\times\bA=(\dr_{2}A_{3}-\dr_{3}A_{2}, \dr_{3}A_{1}-\dr_{1}A_{3}, \dr_{1}A_{2}-\dr_{2}A_{1})=(B_{1}, B_{2}, B_{3}),$$
then, we can write
\begin{equation}\label{B}
  \dx \alpha=B_{3}\dx q_{1}\wedge \dx q_{2}-B_{2}\dx q_{1}\wedge \dx q_{3}+B_{1}\dx q_{2}\wedge \dx q_{3}.
\end{equation}
The vector field $\bB$ is called the magnetic field. Let us notice
that we can express the $2$-form $\dx \alpha$ thanks to the magnetic
matrix
$$
M_{\bB}= \left(\begin{array}{ccc}
    0&B_{3}&-B_{2} \\
    -B_{3}&0&B_{1} \\
    B_{2}&-B_{1}&0
\end{array}\right).
$$
Indeed we have
\begin{equation}\label{MB}
  \dx \alpha(U,V)=\langle U,M_{\bB}V\rangle=\langle U,V\times\bB\rangle=[U, V,\bB],\quad\forall (U,V)\in\R^3\times \R^3,
\end{equation}
where $[\cdot,\cdot,\cdot]$ is the canonical mixed product on
$\R^3$. We note that $\bB$ belongs to the kernels of $M_{\bB}$ and
$\dx\alpha$.

An important role will be played by the characteristic hypersurface
$$\Sigma=H^{-1}(0),$$ 
which is the submanifold defined by the
parametrization:
\[
\R^3 \ni q \mapsto j(q):= (q,\bA(q)) \in \R^3\times\R^3.
\]
We may notice the relation between $\Sigma$, the symplectic structure and the magnetic field in the following relation
\begin{equation}\label{lemma1}
  j^*\omega_0 =\dx \alpha\,,
\end{equation}
where $\dx \alpha$ is defined in~\eqref{B}.

\subsection{Confinement assumptions and discrete spectrum}
This paper is devoted to the semiclassical analysis of the discrete
spectrum of the magnetic Laplacian
$\mathcal{L}_{\hbar,\bA}:=(-i\h\nabla_{q} - \bA(q))^2$, which is the semiclassical Weyl quantization of $H$ (see \eqref{Weyl}). This means that we will consider that
$\hbar$ belongs to $(0,\hbar_{0})$ with $\hbar_{0}$ small enough.

Let us recall the assumptions under which discrete spectrum actually exist. In two dimensions, with a non vanishing magnetic field, a standard estimate (see \cite{AHS, CFKS87}) gives
$$
\hbar \int_{\mathbb R^2} |B(q)| |u(q)|^2 \dx q \leq \langle \mathcal{L}_{\hbar,\bA} u\,|\, u \rangle\,,\, \forall u \in \mathcal{C}_0^\infty(\mathbb R^2)\,.
$$
Except in special cases when some components of the magnetic field have constant sign, this is no more the case in higher dimension (see \cite{Duf83}). We should impose a control of the oscillations of $\bB$ at infinity. Under this condition, we get a similar estimate at the price of a small loss.
This kind of estimate actually follows from an analysis developed in \cite{HelMo96}. Let us define 
$$b(q):=\|\bB(q)\|.$$
Let us now state the confining assumptions under which we will constantly work in this paper.
\begin{hyp}\label{hyp1-2}
We consider the case of $\mathbb R^3$ and assume
\begin{equation}\label{hyp1}
b(q)\geq b_{0}:=\inf_{q\in\R^3}b(q)>0\,,
\end{equation}
and the existence of a constant $C>0$ such that
\begin{equation}\label{hyp2}
\|\nabla \bB (q)\| \leq C\, (1 + b(q))\,,\, \forall q\in \mathbb R^3\,.
\end{equation}
\end{hyp}
Under Assumption \ref{hyp1-2}, it is proven in \cite[Theorem 3.1]{HelMo96} that there exist $h_0>0$ and $C_0 >0$ such that, for all $\hbar\in (0,h_0)$,
\begin{equation}\label{lbB}
\hbar (1- C_0 \hbar^\frac 14)\int_{\mathbb R^3}b(q) |u(q)|^2 \dx q \leq \langle \mathcal{L}_{\hbar,\bA} u\,|\, u \rangle\,,\, \forall u \in \mathcal{C}_0^\infty(\mathbb R^3)\,.
\end{equation}

As a corollary, using Persson's theorem (see \cite{Persson60}), we obtain that the bottom of the essential spectrum is  asymptotically above $\hbar b_1$, where
$$
b_1 :=  \liminf_{|q| \rightarrow +\infty} b(q).
$$
More precisely, under Assumption \ref{hyp1-2}, there exist $h_0>0$ and $C_0 >0$ such that, for all $\hbar\in (0,h_0)$,
\begin{equation}\label{spessentiel}
 \mathfrak{s}_{\ess} (\mathcal L_{\hbar,\bA} ) \subset [\hbar b_1 (1-C_0\hbar^\frac 14), +\infty).
\end{equation}
\begin{hyp}\label{hyp3-4}
We assume that
\begin{equation}\label{hyp3}
0 < b_0 < b_1\,.
\end{equation}
Moreover we will assume that there exists a point $q_{0}\in\R^3$ and $\eps>0$, $\tilde\beta_{0}\in(b_{0}, b_{1})$ such that
\begin{equation}\label{hyp4}
\{ b(q) \leq \tilde\beta_0 \}\subset D(q_{0},\eps),
\end{equation}
where $D(q_{0},\eps)$ is the Euclidean ball centered at $q_{0}$ and of
radius $\eps$. For the rest of the article we let
$\beta_0\in(b_0,\tilde\beta_0)$. Without loss of generality, we can assume that $q_{0}=0$ and that $\bA(0)=0$ (which can be obtained with a change of gauge).
\end{hyp}
Note that Assumption \ref{hyp3-4} implies that the minimal value of $b$ is attained inside $D(q_{0},\eps)$.

All along this paper, we will strengthen the assumptions on the nature of the point $q_{0}$. At some stage of our investigation, $q_{0}$ will be the unique minimum of $b$. Note in particular that \eqref{hyp4} is satisfied as soon as $b$ admits a unique and non degenerate minimum.

\subsection{Informal description of the results}
Let us now informally walk through the main results of this paper. We
will assume (as precisely formulated in \eqref{hyp3}-\eqref{hyp4}) that the magnetic field does not vanish and is confining.

Of course, for eigenvalues of order $\OO(\hbar)$, the corresponding
eigenfunctions are microlocalized in the semi-classical sense near the
characteristic manifold $\Sigma$ (see for instance \cite{Ro87,
  Z13}). Moreover the confinement assumption implies that the
eigenfunctions of $\mathcal{L}_{\hbar,\bA}$ associated with
eigenvalues less that $\beta_{0}\hbar$ enjoy localization estimates
\textit{\`a la} Agmon. Therefore we will be reduced to investigate the
magnetic geometry locally in space near a point $q_{0}=0\in\R^3$
belonging to the confinement region and which, for notational
simplicity, we may assume to be the origin.

Then, in a neighborhood of $(0,\bA(0))\in\Sigma$, there exist
symplectic coordinates $(x_{1},\xi_{1}, x_{2},\xi_{2}, x_{3},\xi_{3})$
such that $\Sigma=\{x_{1}=\xi_{1}=\xi_{3}=0\}$ and $(0,\bA(0))$ has coordinates $0\in\R^6$. Hence $\Sigma$ is
parametrized by $(x_{2},\xi_{2},x_{3})$.
\subsubsection{First Birkhoff form}
In these coordinates suited for the magnetic geometry, it is possible
to perform a semiclassical Birkhoff normal form and microlocally
unitarily conjugate $\mathcal{L}_{\hbar,\bA}$ to a first normal form
$\mathcal{N}_{\hbar}=\Op_{\hbar}^w\left(N_{\hbar}\right)$ with an
operator valued symbol $N_{\hbar}$ depending on $(x_{2},\xi_{2},
x_{3},\xi_{3})$ in the form
$$N_{\hbar}=\xi_{3}^2+b(x_{2},\xi_{2},x_{3})\mathcal{I}_{\hbar}+f^{\star}(\hbar,\mathcal{I}_{\hbar}, x_{2},\xi_{2}, x_{3},\xi_{3})+\OO(|\mathcal{I}_{\hbar}|^{\infty},|\xi_{3}|^\infty).$$ 
where $\mathcal{I}_{h}=\hbar^2D_{x_{1}}^2+x_{1}^2$ is the first encountered harmonic oscillator and where $(\hbar,I, x_{2},\xi_{2}, x_{3},\xi_{3})\mapsto f^\star (\hbar,I, x_{2},\xi_{2}, x_{3},\xi_{3})$ satisfies, for $I\in(0, I_{0})$,
$$|f^\star (\hbar,I, x_{2},\xi_{2}, x_{3},\xi_{3})|\leq C\left(|I|^{\frac{3}{2}}+|\xi_{3}|^{3}+\hbar^{\frac{3}{2}}\right).$$
Since we wish to describe the spectrum in a spectral window containing
at least the lowest eigenvalues, we are led to replace
$\mathcal{I}_{\hbar}$ by its lowest eigenvalue $\hbar$ and thus, we
are reduced to the two-dimensional pseudo-differential operator
$\mathcal{N}_{\hbar}^{[1]}=\Op_{\hbar}^w\left(N_{\hbar}^{[1]}\right)$
where
$$N^{[1]}_{\hbar}=\xi_{3}^2+b(x_{2},\xi_{2},x_{3})\hbar+f^{\star}(\hbar,\hbar, x_{2},\xi_{2}, x_{3},\xi_{3})+\OO(\hbar^{\infty},|\xi_{3}|^\infty).$$ 

\subsubsection{Second Birkhoff form}
If we want to continue the normalization, we shall assume a new
non-degeneracy condition (the first one was the positivity of
$b$). 

Now we assume that, for any $(x_2,\xi_2)$ in a neighborhood of $(0,0)$, the function $x_{3}\mapsto b(x_{2},\xi_{2}, x_{3})$ admits a unique and non-degenerate minimum denoted by $s(x_{2},\xi_{2})$. Then, by using a new symplectic transformation in
order to center the analysis at the partial minimum
$s(x_{2},\xi_{2})$, we get a new operator
$\underline{\mathcal{N}}_{\hbar}^{[1]}$ whose Weyl symbol is in the
form
$$\underline{N}_{\hbar}^{[1]}=\nu^2(x_{2},\xi_{2})(\xi^2_{3}+\hbar x_{3}^2)+\hbar b(x_{2},\xi_{2}, s(x_{2},\xi_{2}))+\mathsf{remainders},$$
with 
\begin{equation}\label{eq.nu}
\nu(x_{2},\xi_{2}) = (\tfrac{1}{2}\partial_3^2 b(x_2,\xi_2,s(x_2,\xi_2)))^{1/4}
\end{equation}
and where the remainders have been properly normalized to be at least
formal perturbations of the second harmonic oscillator
$\xi_{3}^2+\hbar x_{3}^2$. Since the frequency of this oscillator is
$\hbar^{-\frac{1}{2}}$ in the classical picture, we are naturally led
to introduce the new semiclassical parameter $$
h=\hbar^{\frac{1}{2}}$$ and the new
impulsion $$\xi=\hbar^{\frac{1}{2}}\tilde\xi\,$$ so
that $$\Op_{\hbar}^w\left(\xi_{3}^2+\hbar
  x_{3}^2\right)=h^2\Op_{h}^w\left(\tilde\xi_{3}^2+x_{3}^2\right).$$
We therefore get the $h$-symbol of
$\underline{\mathcal{N}}_{\hbar}^{[1]}$:
$$\underline{\mathsf{N}}_{h}^{[1]}=h^2\nu^2(x_{2},h\tilde\xi_{2})(\tilde\xi^2_{3}+ x_{3}^2)+h^2 b(x_{2},h\tilde\xi_{2}, s(x_{2},h\tilde\xi_{2}))+\mathsf{remainders}.$$
We can again perform a Birkhoff analysis in the space of formal series given by $\mathscr{E}= \mathscr{F}[\![x_3,\tilde \xi_3, h]\!]$ where $\mathscr{F}$ is a space of symbols in the form $c(h, x_{2},h\tilde\xi_{2})$. We get the new operator $\mathfrak{M}_{h}=\Op_{h}^w\left(\mathsf{M}_{h}\right)$,
with
\begin{multline*}
\mathsf{M}_{h}=h^2 b(x_2,h \tilde\xi_2,s(x_2, h\tilde\xi_2))+ h^2\mathcal{J}_{h}\Op_{ h}^w\nu^2(x_2, h\tilde\xi_2) +h^2 g^\star(h, \mathcal{J}_{ h},x_{2},h\tilde\xi_{2})\\
+\mathsf{remainders},
\end{multline*}
where $\mathcal{J}_{h}=\Op_{h}^w\left(\tilde\xi_{3}^2+x_{3}^2\right)$
and $g^\star (h, J, x_{2}, \xi_{2})$ is of order three
with respect to $(J^{\frac{1}{2}}, h^{\frac{1}{2}})$. Motivated again by the
perspective of describing the low lying eigenvalues, we replace
$\mathcal{J}_{h}$ by $h$ and rewrite the symbol with the old
semiclassical parameter $\hbar$ to get the operator
$\mathcal{M}^{[1]}_{\hbar}=\Op_{h}^w\left(\mathsf{M}_{h}^{[1]}\right)=\Op_{\h}^w\left(M_{\h}^{[1]}\right)$,
with
\begin{equation}\label{nf3}
M_{\hbar}^{[1]}=\hbar b(x_2, \xi_2,s(x_2, \xi_2))+ \hbar^{\frac{3}{2}}\nu^2(x_2,\xi_2) +\hbar g^{\star}(\hbar^{\frac{1}{2}},\hbar^{\frac{1}{2}},x_{2},\xi_{2})+\mathsf{remainders}.
\end{equation}

\subsubsection{Third Birkhoff form}
The last generic assumption is the uniqueness and non-degeneracy of
the minimum of the new \enquote{principal} symbol
$$(x_{2},\xi_{2})\mapsto b(x_2, \xi_2,s(x_2, \xi_2))$$ 
that implies that $b$ admits a unique and non-degenerate minimum at
$(0,0,0)$. Up to an $\hbar^{\frac{1}{2}}$-dependent translation in the
phase space and a rotation, we are essentially reduced to a standard
Birkhoff normal form with respect to the third harmonic oscillator
$\mathcal{K}_{\hbar}=\hbar^2D_{x_{2}}^2+x_{2}^2$.

Note that all our normal forms may be used to describe the classical dynamics of a charged particle in a confining magnetic field (see Figure \ref{bouteille}).

\begin{figure}[ht]
  \includegraphics[width=0.75\textwidth]{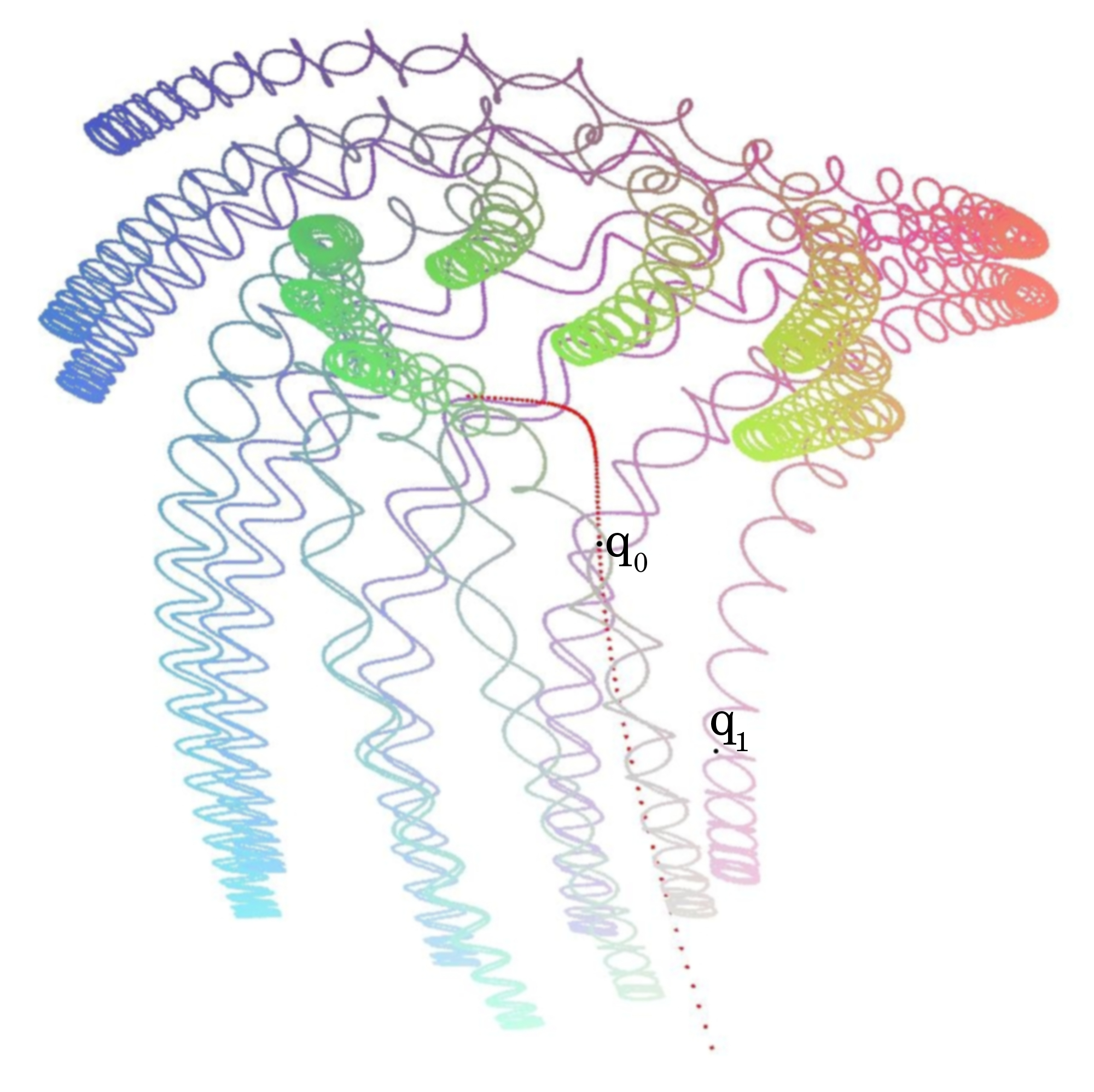}
  \caption{ The dashed line represents the integral curve of the
    confining magnetic field $\bB ={ \rm curl} {\bf A}$ through
    $q_{0}=(0.5, 0.6, 0.7 )$ for
    $\bB(x,y,z)=\left(\frac{y}{2},\frac{z}{2},\sqrt{1+x^2}\right)$ and
    the full line represents the projection in the $q$-space of the
    Hamiltonian trajectory with initial condition $(q_0, p_{0})$ (with
    $p_{0}= (-0.6, 0.01, 0.2)$) ending at
    $(q_{1},p_{1})$.}\label{bouteille}
\end{figure}

\subsubsection{Microlocalization}
Of course, at each step, we will have to provide accurate microlocal
estimates of the eigenfunctions of the different operators to get a
good control of the different remainders. In a first approximation, we
will get localizations at the following scales $x_{1}, \xi_{1},
\xi_{3}\sim \hbar^{\delta}$ ($\delta>0$ is small enough) and $x_{2},
\xi_{2}, x_{3}\sim 1$. In a second approximation, we will get $x_{3},
\tilde \xi_{3}\sim \hbar^{\delta}$. In the final step, we will refine the localization by $x_{2}, \xi_{2}\sim \hbar^{\delta}$.

\subsection{A semiclassical eigenvalue estimate} 
Let us already state one of the consequences of our investigation. It
will follow from the third normal form that we have a complete
description of the spectrum below the threshold $b_{0}\hbar+3
\nu^2(0,0)\hbar^{\frac{3}{2}}$. This description is reminiscent of the
results \textit{\`a la} Bohr-Sommerfeld of \cite{HelRo84} and
\cite[Appendix B]{HelSj89} (see also \cite[Remark 1.4]{HelKo13b})
obtained in the case of one dimensional semiclassical operators.

\begin{theo}
  \label{t:main}
  Assume that $b$ admits a unique and non degenerate minimum at
  $q_{0}$.  Denote
   \begin{equation}  \sigma=\frac{\mathsf{Hess}_{q_0} b \,(\bB,
      \bB)}{2b_0^2},\quad \dd=\sqrt{\frac{\det {\mathsf{Hess}}_{q_0}
        b}{\mathsf{Hess}_{q_0} b \,(\bB, \bB)}}. \label{equ:alpha} 
    \end{equation}
     There exists a function
  $k^\star\in\mathcal{C}^\infty_0(\R^2)$ with arbitrarily small
  compact support, and $k^\star(\hbar^{\frac{1}{2}},Z)=\mathcal{O}((\h
  + \abs{Z})^{\frac{3}{2}})$ when $(\h,Z)\to(0,0)$, such that the
  following holds.

  For all $c\in(0,3)$, the spectrum of $\mathcal{L}_{\hbar,\bA}$ below
  $b_{0}\hbar+c\sigma^{\frac12}\hbar^{\frac{3}{2}}$ coincides modulo
  $\OO(\hbar^\infty)$ with the spectrum of the operator
  $\mathcal{F}_{\hbar}$ acting on $L^2(\R_{x})$ given by
  \[
  \mathcal{F}_{\hbar}=b_{0}\hbar +
  \sigma^{\frac12}\hbar^{\frac{3}{2}}- \frac{\zeta}{2\dd}\hbar^2+
  \hbar\left(\frac{\dd}2 \mathcal{K}_{\hbar}
    +k^\star(\hbar^{\frac{1}{2}}, \mathcal{K}_{\hbar})\right) , \quad
  \ \mathcal{K}_\h=\hbar^2D_{x}^2+x^2 ,
  \]
  with some constant $\zeta$.
\end{theo}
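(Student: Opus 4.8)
The plan is to chain together the three microlocal normal forms announced in the introduction, tracking constants carefully in the last step.

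\medskip

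\noindent\textbf{Step 1: Reduction to a neighborhood of $\Sigma$ and Agmon localization.} I would first invoke the Agmon-type estimates (consequences of \eqref{lbB}--\eqref{spessentiel} and the confinement Assumptions \ref{hyp1-2}--\ref{hyp3-4}) to show that eigenfunctions of $\mathcal{L}_{\hbar,\bA}$ associated with eigenvalues below $b_0\hbar + c\sigma^{1/2}\hbar^{3/2}$ (with $c<3$) are exponentially localized near $q_0=0$ and microlocalized near $(0,\bA(0))\in\Sigma$. This legitimizes working with the germ of the symbol near the origin in the adapted symplectic coordinates $(x_1,\xi_1,x_2,\xi_2,x_3,\xi_3)$ in which $\Sigma = \{x_1=\xi_1=\xi_3=0\}$, and guarantees that all the $\OO(\hbar^\infty)$ and $\OO(|\mathcal{I}_\hbar|^\infty,|\xi_3|^\infty)$ remainders appearing in the normal forms do not affect the spectrum in the relevant window.

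\medskip

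\noindent\textbf{Step 2: Apply the three Birkhoff reductions in succession.} Applying the first normal form, $\mathcal{L}_{\hbar,\bA}$ is microlocally unitarily equivalent to $\Op_\hbar^w(N_\hbar)$; projecting $\mathcal{I}_\hbar$ onto its ground state $\hbar$ gives the two-dimensional operator $\mathcal{N}_\hbar^{[1]}$. Under the non-degeneracy of the partial minimum $s(x_2,\xi_2)$ of $x_3\mapsto b(x_2,\xi_2,x_3)$ — which holds near $(0,0)$ since $b$ has a non-degenerate minimum at $q_0$ — the second Birkhoff form, after rescaling $h=\hbar^{1/2}$ and projecting $\mathcal{J}_h$ onto $h$, yields $\mathcal{M}_\hbar^{[1]} = \Op_\hbar^w(M_\hbar^{[1]})$ with $M_\hbar^{[1]}$ as in \eqref{nf3}. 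Finally, since $(x_2,\xi_2)\mapsto b(x_2,\xi_2,s(x_2,\xi_2))$ inherits a unique non-degenerate minimum at $(0,0)$ with value $b_0$, the third Birkhoff form with respect to $\mathcal{K}_\hbar = \hbar^2 D_{x_2}^2 + x_2^2$ — after an $\hbar^{1/2}$-dependent translation and a rotation in the $(x_2,\xi_2)$-phase space that diagonalizes the relevant Hessian — produces an operator whose symbol is a function of $\mathcal{K}_\hbar$ alone, modulo $\OO(\hbar^\infty)$ in the spectral window. Relabelling $x_2 \rightsquigarrow x$ gives $\mathcal{F}_\hbar$ on $L^2(\R_x)$.

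\medskip

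\noindent\textbf{Step 3: Identification of the constants.} The remaining work is bookkeeping. The $\hbar$-coefficient of $\mathcal{F}_\hbar$ is $b_0$, the minimal value of $b$. The $\hbar^{3/2}$-coefficient must be $\nu^2(0,0) = (\tfrac12\partial_3^2 b(0,0,s(0,0)))^{1/2}$ from \eqref{eq.nu}; I would compute this by diagonalizing $\mathsf{Hess}_{q_0}b$ in an adapted frame where one axis is along $\bB(q_0)$ (recall $\bB$ spans $\ker M_\bB$, and by the geometry $\Sigma$ is parametrized so that the "$x_3$" direction corresponds to the direction of $\bB$), giving the identification $\nu^2(0,0) = \sigma^{1/2}$ with $\sigma = \mathsf{Hess}_{q_0}b\,(\bB,\bB)/(2b_0^2)$. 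The rotation angle and the scaling that bring the third harmonic oscillator into the form $\tfrac{\dd}2\mathcal{K}_\hbar$ are dictated by $\det\mathsf{Hess}_{q_0}b$ versus $\mathsf{Hess}_{q_0}b\,(\bB,\bB)$, yielding $\dd = \sqrt{\det\mathsf{Hess}_{q_0}b / \mathsf{Hess}_{q_0}b\,(\bB,\bB)}$. The $\hbar^2$ term $-\frac{\zeta}{2\dd}\hbar^2$ collects: the subprincipal/Weyl correction from the translation, the first correction coming from replacing $\mathcal{I}_\hbar$ and $\mathcal{J}_h$ by their ground-state eigenvalues beyond leading order, and the constant part of $g^\star$; I would leave $\zeta$ implicit as the statement does. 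The function $k^\star$ is what remains of $g^\star$ and the third-Birkhoff remainder expressed as a symbol in $(\hbar^{1/2},\mathcal{K}_\hbar)$, and the order estimate $k^\star(\hbar^{1/2},Z)=\OO((\hbar+|Z|)^{3/2})$ follows from the cubic-order property of $g^\star$ with respect to $(J^{1/2},h^{1/2})$.

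\medskip

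\noindent\textbf{Main obstacle.} The delicate point is \emph{not} any single normal form but the compatibility of the successive microlocalizations and spectral reductions: at each stage one replaces an operator-valued symbol by a scalar one by projecting onto a ground state, and one must show that the neglected higher Landau levels (of $\mathcal{I}_\hbar$, then of $\mathcal{J}_h$) and the $\OO(\cdot^\infty)$ remainders contribute only $\OO(\hbar^\infty)$ to the spectrum below $b_0\hbar + 3\sigma^{1/2}\hbar^{3/2}$ — this is exactly where the spectral gap condition $c<3$ enters (the next Landau level of the second oscillator sits at relative height $3\nu^2(0,0)\hbar^{3/2}$), and it requires the partial-ellipticity control and the scale separation $h=\hbar^{1/2}$ to be propagated consistently through all three reductions.
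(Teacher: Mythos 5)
Your proposal follows essentially the same route as the paper: the proof is exactly the chain of the three Birkhoff normal forms with the intermediate spectral comparisons (Theorems \ref{normal-form1}, \ref{pre-normal-form2}, \ref{normal-form2}, \ref{normal-form3} together with Corollaries \ref{normal-form1-d}, \ref{pre-normal-form2-d}, \ref{normal-form2-d}, \ref{normal-form3-d}), justified at each stage by eigenvalue counting and (micro)localization estimates, with the constants $\sigma$, $\dd$ identified via the straightening diffeomorphism ($\det T\chi=\|\bB\|^{-1}$, $x_3$ along $\bB$) and $\zeta=\|\nabla\nu^2(0,0)\|^2$ coming from completing the square in the third form. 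Your attribution of the $\hbar^2$ term is slightly vaguer than the paper's explicit computation, but since the statement only asserts the existence of some constant $\zeta$, this does not affect correctness.
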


\begin{rem}
The constant $\zeta$ in Theorem \ref{t:main} is given by the formula
 \[
    \zeta=\|\nabla\nu^2(0,0)\|^2, 
 \] 
where the function $\nu$ is given in \eqref{eq.nu}. Observe also that $\sigma=\nu^4(0,0)$.
\end{rem}

  \begin{corollary}\label{coro:main} Under the hypothesis
    of Theorem~\ref{t:main}, let $\left(\lambda_m(\h)\right)_{m\geq
      1}$ be the non decreasing sequence of the eigenvalues of
    $\mathcal L_{\h,\bA}$.  For any $c\in (0,3)$, let \[
    \mathsf{N}_{\h,c} := \{m\in \N^*; \quad \lambda_m(\h) \leq \hbar
    b_0 +c\sigma^{\frac{1}{2}}\hbar^{\frac{3}{2}}\}.  \] Then the cardinal of
    $\mathsf{N}_{\h,c}$ is of order $\h^{-\frac{1}{2}}$, and there exist
    $\upsilon_1, \upsilon_2\in\R$ and $\h_0>0$ such
    that \begin{equation*}\label{appli} \lambda_m(\h) \!=\! \hbar b_0
      +\sigma^{\frac12}\hbar^{\frac32}+ \left[{\dd}(m-\tfrac12) -
        \frac{\zeta}{2\dd} \right]\!\hbar^2 +
      \upsilon_1(m-\tfrac12)\h^{\frac52} +
      \upsilon_2(m-\tfrac12)^2\h^3 + \OO(
      \hbar^{\frac{5}{2}})\,, \end{equation*} uniformly for $\h\in(0,\h_0)$ and $m\in \mathsf{N}_{\h,c}$.  
      
 In particular, the splitting between two consecutive eigenvalues satisfies
 $$\lambda_{m+1}(\h)-\lambda_{m}(\h) = \dd\hbar^2 + \OO(\h^{\frac52}).$$         
  \end{corollary}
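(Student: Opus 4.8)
The plan is to read off all the assertions from the explicit one‑dimensional model operator $\mathcal{F}_{\hbar}$ furnished by Theorem~\ref{t:main}. The key observation is that $\mathcal{F}_{\hbar}$ is, by construction, a function of the single harmonic oscillator $\mathcal{K}_{\hbar}=\hbar^2D_{x}^2+x^2$ on $L^2(\R_x)$, whose spectrum is $\{(2m-1)\hbar:\ m\in\N^*\}$, with the rescaled Hermite eigenbasis $(\psi_m^{\hbar})_{m\geq1}$. Hence, by the functional calculus, $\mathcal{F}_{\hbar}\psi_m^{\hbar}=\mu_m(\hbar)\psi_m^{\hbar}$ with
\[
\mu_m(\hbar)=b_0\hbar+\sigma^{\frac12}\hbar^{\frac32}-\frac{\zeta}{2\dd}\hbar^2+\hbar\Big(\tfrac{\dd}{2}(2m-1)\hbar+k^\star\big(\hbar^{\frac12},(2m-1)\hbar\big)\Big),\qquad m\in\N^*.
\]
On the (small, fixed) support of $k^\star$ the derivative of $z\mapsto\tfrac{\dd}{2}z+k^\star(\hbar^{\frac12},z)$ equals $\tfrac{\dd}{2}+\OO(\hbar^{\frac12})>0$ for $\hbar$ small, so $(\mu_m(\hbar))_m$ is strictly increasing in the range of interest; and applying Theorem~\ref{t:main} with a fixed $c'\in(c,3)$, every $\mu_m(\hbar)$ lying below $\hbar b_0+c'\sigma^{\frac12}\hbar^{\frac32}$ equals $\lambda_m(\hbar)$ up to $\OO(\hbar^\infty)$. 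It therefore suffices to analyse the numbers $\mu_m(\hbar)$.

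\textbf{Counting.} Keeping only the leading part $\mu_m(\hbar)=b_0\hbar+\sigma^{\frac12}\hbar^{\frac32}+\dd(m-\tfrac12)\hbar^2+\OO(\hbar^2)$ --- uniform for $m\leq A\hbar^{-1/2}$ with any fixed $A$ --- the condition $\mu_m(\hbar)\leq\hbar b_0+c\sigma^{\frac12}\hbar^{\frac32}$ is equivalent to $m-\tfrac12\leq\dd^{-1}(c-1)\sigma^{\frac12}\hbar^{-1/2}(1+o(1))$. Hence $\mathsf{N}_{\h,c}=\{1,\dots,M(\hbar)\}$ with $M(\hbar)=\dd^{-1}(c-1)\sigma^{\frac12}\hbar^{-1/2}+\OO(1)$, which is of order $\hbar^{-1/2}$. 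In particular $(2m-1)\hbar=\OO(\hbar^{1/2})$ for every $m\in\mathsf{N}_{\h,c}$, and since one checks that $\mu_{m+1}(\hbar)=\mu_m(\hbar)+\dd\hbar^2+\OO(\hbar^{5/2})$, the value $\mu_{m+1}(\hbar)$ still lies below $\hbar b_0+c'\sigma^{\frac12}\hbar^{\frac32}$ for $\hbar$ small, so $\lambda_{m+1}(\hbar)=\mu_{m+1}(\hbar)+\OO(\hbar^\infty)$ as well.

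\textbf{Expansion and splitting.} By Taylor's theorem, and since the hypothesis $k^\star(\hbar^{\frac12},Z)=\OO((\hbar+\abs{Z})^{\frac32})$ forces the Taylor coefficients of $k^\star$ at the origin of degree $\leq1$, as well as the coefficient of $u^2$, to vanish, we have $k^\star(u,Z)=c_{11}uZ+c_{02}Z^2+\OO((\abs{u}+\abs{Z})^3)$ near $(0,0)$. Substituting $u=\hbar^{\frac12}$ and $Z=(2m-1)\hbar$, and using $(2m-1)\hbar=\OO(\hbar^{1/2})$ to bound the cubic remainder by $\hbar\cdot\OO(\hbar^{3/2})=\OO(\hbar^{5/2})$, we get, uniformly in $m\in\mathsf{N}_{\h,c}$,
\[
\hbar\,k^\star\big(\hbar^{\frac12},(2m-1)\hbar\big)=\upsilon_1(m-\tfrac12)\hbar^{\frac52}+\upsilon_2(m-\tfrac12)^2\hbar^{3}+\OO(\hbar^{\frac52}),\qquad\upsilon_1=2c_{11},\quad\upsilon_2=4c_{02}.
\]
Adding $\hbar\cdot\tfrac{\dd}{2}(2m-1)\hbar=\dd(m-\tfrac12)\hbar^2$ and the scalar part of $\mu_m(\hbar)$, and recalling $\lambda_m(\hbar)=\mu_m(\hbar)+\OO(\hbar^\infty)$, yields precisely the expansion of $\lambda_m(\hbar)$ claimed in the corollary, with the stated uniformity (all remainders are uniform because $k^\star\in\mathcal{C}_0^\infty(\R^2)$). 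For the splitting, subtract the expansions at $m+1$ and $m$: the scalar terms cancel, the linear term gives $\dd\hbar^2$, the $\hbar^{\frac52}$‑term gives $\upsilon_1\hbar^{\frac52}=\OO(\hbar^{\frac52})$, and the quadratic term gives $2\upsilon_2m\hbar^{3}=\OO(\hbar^{\frac52})$ since $m=\OO(\hbar^{-1/2})$; hence $\lambda_{m+1}(\hbar)-\lambda_m(\hbar)=\dd\hbar^2+\OO(\hbar^{\frac52})$.

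The whole argument is essentially bookkeeping once Theorem~\ref{t:main} is in hand; the one genuinely delicate point is the order counting in the last step. Because the index $m$ is allowed to grow like $\hbar^{-1/2}$, the effective small parameter hidden inside $k^\star$ is $(2m-1)\hbar\sim\hbar^{1/2}$, not $\hbar$, so one must assign the weight $1$ to $u$ and the weight $2$ to $Z$ in the Taylor monomials $u^jZ^l$ of $k^\star$ and check that, after multiplication by $\hbar$, exactly $uZ$ and $Z^2$ survive the threshold $\OO(\hbar^{5/2})$ --- and this is precisely the mechanism producing the two additional terms $\upsilon_1(m-\tfrac12)\hbar^{5/2}$ and $\upsilon_2(m-\tfrac12)^2\hbar^3$ in the expansion of $\lambda_m(\hbar)$.
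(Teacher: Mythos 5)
Your proposal is correct and follows essentially the same route as the paper: reduce via Theorem~\ref{t:main} to the explicit operator $\mathcal{F}_{\hbar}$, read off its eigenvalues from those of $\mathcal{K}_{\hbar}$, count those below the threshold, and Taylor-expand $k^\star$ at $Z=(2m-1)\hbar=\OO(\hbar^{1/2})$ so that only the $uZ$ and $Z^2$ monomials survive above $\OO(\hbar^{5/2})$. The only (cosmetic) difference is the counting step, which the paper does by sandwiching $\mathcal{K}_{\hbar}+\tfrac{2}{\dd}k^\star(\hbar^{1/2},\mathcal{K}_{\hbar})$ between $(1\pm\eta)\mathcal{K}_{\hbar}$ and applying the min--max principle, whereas you use monotonicity of the explicit eigenvalue formula (note your derivative bound $\tfrac{\dd}{2}+\OO(\hbar^{1/2})$ should also carry an $\OO(\varepsilon)$ term from the size of the support of $k^\star$, harmless since that support may be taken arbitrarily small, exactly as in the paper's argument).
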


\begin{proof}
  If the support of $k^\star$ is small enough, the hypothesis
  $k^\star(\hbar^{\frac{1}{2}},Z)=\mathcal{O}((\h +
  \abs{Z})^{\frac{3}{2}})$ implies that, when $\h$ is small enough,
\[
(1+\eta)  \mathcal{K}_{\hbar} \geq  \mathcal{K}_{\hbar}
    + \frac2{\dd} k^\star(\hbar^{\frac{1}{2}}, \mathcal{K}_{\hbar}) \geq (1-\eta)  \mathcal{K}_{\hbar},
\]
for some small $\eta>0$. Therefore, since the eigenvalues of
$\mathcal{K}_{\hbar}$ are $(2m-1)\h$, $m\in\N^*$, the variational
principle implies that the number of eigenvalues of
$\mathcal{K}_{\hbar} +\frac2{\dd}k^\star(\hbar^{\frac{1}{2}},
\mathcal{K}_{\hbar})$ below a threshold $C_\h$ belongs to
$[\frac{1}{2}(\frac{C_\h}{\h(1+\eta)}+1),
\frac{1}{2}(\frac{C_\h}{\h(1-\eta)}+1)]$. Taking
$C_\h=\frac{2}{\dd}(c-1)\sigma^{1/2}\h^{1/2} + \frac{\zeta}{\dd^2}\h$,
and applying the theorem, we obtain the estimate for the cardinal of
$\mathsf{N}_{\h,c}$. The corresponding eigenvalues of $\mathcal
L_{\h,\bA}$ are of the form
\[
\lambda_m(\h) = \hbar b_0
      +\sigma^{\frac12}\hbar^{\frac32}
      - \frac{\zeta}{2\dd}\hbar^2
      + \h\left[{\dd}(m-\tfrac12) 
        +k^\star(\hbar^{\frac{1}{2}}, 2m-1)\right] + \OO(\h^\infty) ,
\]
with $(2m-1)\h\leq \frac{C_\h}{1-\eta}$. Therefore there exists a
constant $\tilde C>0$, independent of $\h$, such that all $m\in
\mathsf{N}_{\h,c}$ satisfy the inequality $(2m-1)\h \leq \tilde
C\h^{1/2}$. Writing
\begin{multline*}
k^\star(\hbar^{\frac{1}{2}},Z) = c_0\h^{3/2} +\upsilon_1\h^{1/2} (Z/2) +
c_1\h^2 + \upsilon_2 (Z/2)^2 + \upsilon_3 \h Z \\
+ \h^{1/2}\OO(h+\abs{Z})^2 + \OO(Z^3) ,
\end{multline*}
we see that, for $m\in \mathsf{N}_{\h,c}$,
\[
k^\star(\hbar^{\frac{1}{2}},(2m-1)\h) = \upsilon_1\h^{3/2}\left(m-\frac{1}{2}\right)  +
 \upsilon_2 \h^2\left(m-\frac{1}{2}\right)^2 + \OO(\h^{3/2}) ,
\]
which gives the result.
\end{proof}
\begin{rem}
 An upper bound of $\lambda_{m}(\h)$ for fixed $\h$-independent $m$ with remainder in
  $\mathcal{O}(\h^\frac{9}{4})$ was obtained in \cite{HelKo13} through
  a quasimodes construction involving powers of $\hbar^{\frac{1}{4}}$. To the authors' knowledge, Corollary \ref{coro:main} gives the most accurate description
  of magnetic eigenvalues in three dimensions, in such a large
  spectral window. Note also that the non-degeneracy
  assumption on the norm of $\bB$ is not purely technical. Indeed, at the quantum level, it appears through microlocal reductions matching with the splitting of the Hamiltonian dynamics into three
  scales: the cyclotron motion around field lines, the center-guide oscillation along the field lines, and the oscillation
  within the space of field lines. 
 \end{rem}

\subsection{Organization of the paper}
The paper is organized as follows. In Section \ref{results}, we state
our main results. Section \ref{BNF1} is devoted to the investigation
of the first normal form (see Theorem \ref{normal-form1} and Corollary
\ref{normal-form1-d}). In Section \ref{BNF2} we analyze the second
normal form (see Theorems \ref{pre-normal-form2} and
\ref{normal-form2} and Corollaries \ref{pre-normal-form2-d} and
\ref{normal-form2-d}). Section \ref{BNF3} is devoted to the third
normal form (see Theorem \ref{normal-form3} and Corollary
\ref{normal-form3-d}).

\section{Statements of the main results}\label{results}

We recall (see \cite[Chapter 7]{DiSj99}) that a function $m:\R^d\to
[0,\infty)$ is an order function if there exist constants $N_0, C_0>0$
such that 
\[
m(X)\leq C_0\langle X-Y\rangle^{N_0} m(Y)
\]
for any $X,Y\in\R^d$. The symbol class $S(m)$ is the space of smooth
$\h$-dependent functions $a_\h:\R^d\to\C$ such that
\[
\forall
\alpha\in\N^d, \qquad \abs{\partial_x^\alpha a_\h(x)} \leq C_\alpha m(x), \qquad  \forall h\in(0,1].
\]

Throughout this paper, we assume that the components of the vector
potential $\bA$ belong to a symbol class $S(m)$. Note that this
implies that $\bB\in S(m)$, and conversely, if $\bB\in S(m)$, then
there exist a potential $\bA$ and another order function $m'$ such
that $\bA\in S(m')$. Moreover, the magnetic Hamiltonian
$H(x,\xi)=\norm{\xi-\bA(x)}^2$ belongs to $S(m'')$ for an order
function $m''$ on $\R^6$.

We will work with the Weyl quantization; for a classical symbol
$a_{\h}=a(x,\xi;\h)\in S(m)$, it is defined as:
\begin{equation}\label{Weyl}
\Op_{\h}^w a\, \psi(x)=\frac{1}{(2\pi\h)^d}\int_{\R^{2d}} e^{i\pscal{x-y}{\xi}/\h} a\left(\frac{x+y}{2}, \xi\right)\psi(y)\dx y\dx \xi,\quad \forall \psi\in\mathcal{S}(\R^d).
\end{equation}
The Weyl quantization of $H$ is the magnetic Laplacian
$\mathcal{L}_{\h,\bA}=(-i\h\nabla-\bA)^2$.

\subsection{Normal forms and spectral reductions}
Let us introduce our first Birkhoff normal form $\mathcal{N}_{\hbar}$.
\begin{theo}\label{normal-form1}
  If $\bB(0)\neq 0$, there exists a neighborhood of $(0,\bA(0))$
  endowed with symplectic coordinates $(x_{1},\xi_{1}, x_{2},\xi_{2},
  x_{3},\xi_{3})$ in which $\Sigma=\{x_{1}=\xi_{1}=\xi_{3}=0\}$ and $(0,\bA(0))$ has coordinates $0\in\R^6$, and
  there exist an associated unitary Fourier integral operator
  $U_{\hbar}$ and a smooth function, compactly supported with respect
  to $Z$ and $\xi_{3}$, $f^\star(\hbar,Z, x_{2}, \xi_{2}, x_{3},
  \xi_{3})$ whose Taylor series with respect to $Z, \xi_{3},\hbar$ is
\[
\sum_{k\geq 3}\sum_{2\ell+2m+\beta=k}\hbar^\ell
c^\star_{\ell,m,\beta}(x_2,\xi_2,x_3) Z^{m}\xi_3^\beta
\]
such that
\begin{equation}\label{Nh}
U_{\hbar}^*\mathcal{L}_{\hbar,\bA}U_{\hbar}=\mathcal{N}_{\hbar}+\mathcal{R}_{\hbar},
\end{equation}
with $$\mathcal{N}_{\hbar}=\hbar^2D_{x_{3}}^2+\mathcal{I}_{\hbar}\Op_{\hbar}^w b+ \Op_{\hbar}^w f^\star(\hbar,\mathcal{I}_{\hbar}, x_{2}, \xi_{2}, x_{3}, \xi_{3}),$$
and where 
\begin{enumerate}[(a)]
\item we have $\mathcal{I}_{\hbar}=\hbar^2D_{x_{1}}^2+x_{1}^2$,
\item the operator $\Op_{\hbar}^w f^\star(\hbar,\mathcal{I}_{\hbar}, x_{2}, \xi_{2}, x_{3}, \xi_{3})$ has to be understood as the Weyl quantization of an operator valued symbol,
\item the remainder $\mathcal{R}_{\hbar}$ is a pseudo-differential
  operator such that, in a neighborhood of the origin, the Taylor
  series of its symbol with respect to $(x_{1}, \xi_{1}, \xi_{3}, \hbar)$ is $0$.
\end{enumerate}
\end{theo}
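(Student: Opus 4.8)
\emph{Strategy and geometric step.} The plan is to split the argument into a purely symplectic normalization of the geometry near $\Sigma$, a formal semiclassical Birkhoff procedure adapted to the first harmonic oscillator $\mathcal{I}_{\hbar}$, and a resummation controlling the flat remainder. Since $j^*\omega_0=\dx\alpha$ and $\dx\alpha(U,V)=[U,V,\bB]$ has rank $2$ with one-dimensional kernel $\R\bB$ as soon as $\bB(0)\neq 0$, the characteristic hypersurface $\Sigma=j(\R^3)$ is coisotropic near $(0,\bA(0))$, its null foliation being tangent to the magnetic field lines. A Darboux--Weinstein type normal form then provides symplectic coordinates $(x_1,\xi_1,x_2,\xi_2,x_3,\xi_3)$ vanishing at $(0,\bA(0))$, in which $\Sigma=\{x_1=\xi_1=\xi_3=0\}$, with $(x_2,\xi_2)$ symplectic transversally to the foliation and $x_3$ running along the field lines. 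Because $H\geq 0$ vanishes exactly to second order on $\Sigma$ with transversally nondegenerate Hessian, its Taylor expansion reads $H=\xi_3^2+Q_{(x_2,\xi_2,x_3)}(x_1,\xi_1)+\OO(\abs{(x_1,\xi_1,\xi_3)}^3)$ for a positive definite quadratic form $Q$; a further $(x_2,\xi_2,x_3)$-dependent linear symplectic change in $(x_1,\xi_1)$, together with a rescaling absorbing the null direction $x_3$, diagonalizes the principal part to $\xi_3^2+b(x_2,\xi_2,x_3)(x_1^2+\xi_1^2)$, the frequency being $b=\|\bB\|$ because the eigenvalues of the relevant Hamilton map are $\pm i b$, matching the spectrum $\{0,\pm ib\}$ of the antisymmetric matrix $M_{\bB}$.

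\emph{Formal Birkhoff normalization.} I would then work in the graded algebra of formal symbols $\sum\hbar^{\ell}a_{\ell,m,\beta}(x_2,\xi_2,x_3)Z^m\xi_3^{\beta}$, where $Z$ is a placeholder for $x_1^2+\xi_1^2$, assigning weight $2$ to $\hbar$ and to $Z$ and weight $1$ to $\xi_3$, so that the principal part $\xi_3^2+bZ$ has weight $2$ and the corrections start at weight $3$. Quantizing with the semiclassical Weyl calculus, the key algebraic fact is that $x_1^2+\xi_1^2$ being quadratic, its Moyal bracket with any symbol equals the Poisson bracket, hence $\tfrac{i}{\hbar}[\mathcal{I}_{\hbar},\Op_{\hbar}^w g]=\Op_{\hbar}^w\{x_1^2+\xi_1^2,g\}$ exactly. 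Proceeding by increasing weight $k\geq 3$, I solve the cohomological equation $\{x_1^2+\xi_1^2,g_k\}=r_k-\Pi_0 r_k$, where $\Pi_0$ denotes the average over the $2\pi$-periodic flow of $x_1^2+\xi_1^2$: the resonant part $\Pi_0 r_k$ depends on $(x_1,\xi_1)$ only through $Z$ and becomes a new normal-form coefficient $c^\star_{\ell,m,\beta}$, while the non-resonant part is removed by conjugating with $\exp(i\Op_{\hbar}^w(g_k)/\hbar)$ (an Egorov-type computation). Iterating $N$ times produces the coefficients $c^\star_{\ell,m,\beta}$ with $2\ell+2m+\beta=k\leq N$ and a leftover whose symbol contains no monomial of weight $\leq N$ in $(x_1,\xi_1,\xi_3,\hbar)$.

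\emph{Resummation and obstacle.} Borel's lemma turns the $c^\star_{\ell,m,\beta}$ into a genuine smooth function $f^\star(\hbar,Z,x_2,\xi_2,x_3,\xi_3)$ with the prescribed Taylor series in $(Z,\xi_3,\hbar)$, which I cut off to be compactly supported in $(Z,\xi_3)$; resumming the generators into a single symbol $G_{\hbar}$ and setting $U_{\hbar}=U_{\hbar}^{(0)}\exp(i\Op_{\hbar}^w(G_{\hbar})/\hbar)$, with $U_{\hbar}^{(0)}$ a unitary Fourier integral operator implementing the symplectomorphism of the geometric step, one obtains $U_{\hbar}^*\mathcal{L}_{\hbar,\bA}U_{\hbar}=\mathcal{N}_{\hbar}+\mathcal{R}_{\hbar}$ where the symbol of $\mathcal{R}_{\hbar}$ has vanishing Taylor series in $(x_1,\xi_1,\xi_3,\hbar)$ near the origin (flatness in $Z$ giving flatness in $(x_1,\xi_1)$), and all the operations being local the cutoffs account for the restriction to a neighborhood. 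I expect the main obstacle to be the geometric step together with setting up the ensuing operator-valued calculus: since $\omega_0|_{\Sigma}$ is degenerate one cannot invoke the usual nondegenerate symplectic normal form, so one must single out the field-line direction carefully so that the transverse oscillator is exactly the $b$-weighted $\mathcal{I}_{\hbar}$; then $\Op_{\hbar}^w f^\star(\hbar,\mathcal{I}_{\hbar},\cdot)$ must be read as the quantization in $(x_2,\xi_2,x_3,\xi_3)$ of an operator-valued symbol acting on $L^2(\R_{x_1})$, and one has to verify that the whole Birkhoff iteration, including the cohomological equations and the remainder estimates, is compatible with this fibered structure.
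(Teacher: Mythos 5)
Your overall strategy coincides with the paper's: adapted symplectic coordinates near $\Sigma$ bringing $H$ to $\xi_3^2+b(x_2,\xi_2,x_3)(x_1^2+\xi_1^2)$ modulo terms of weight $\geq 3$, a formal semiclassical Birkhoff normalization with the grading that gives weight $1$ to $x_1,\xi_1,\xi_3$ and weight $2$ to $\hbar$, and finally Borel summation plus Egorov-type conjugation, leaving a remainder flat in $(x_1,\xi_1,\xi_3,\hbar)$. But two of your key steps are stated in a way that does not go through as written. For the homological equation: the operator you conjugate is not $\mathcal{I}_{\hbar}$ but the full operator whose weight-$2$ part is $\xi_3^2+b\,|z_1|^2$, so the weight-$k$ effect of conjugating by $\exp(i\hbar^{-1}\Op^w_{\hbar}(g_k))$ is $b\{|z_1|^2,g_k\}$, not $\{|z_1|^2,g_k\}$; the equation must be $b\{|z_1|^2,g_k\}=r_k-\Pi_0 r_k$ (solvable because $b>0$, since division by $b$ commutes with the average $\Pi_0$). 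More importantly, you must prove that the remaining contributions of the conjugation, namely the brackets of $g_k$ with $\xi_3^2$ (which produces $-2\xi_3\partial_{x_3}g_k$) and with the variable frequency $b(x_2,\xi_2,x_3)$, land in weight $\geq k+1$: this is exactly the paper's key lemma, and it is where the degeneracy of $\omega_0|_{\Sigma}$ (the non-elliptic $\xi_3$ direction) is actually handled; you assume it implicitly. Note also that the algebra you work in cannot consist of series in $Z$ only: the remainders and generators are general polynomials in $(x_1,\xi_1)$ with coefficients in $\mathcal{C}^\infty_{x_2,\xi_2,x_3}$, and only the output $\kappa$ is a series in $Z$ and $\xi_3$.

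On the geometric step, $\Sigma$ is not coisotropic: it is $3$-dimensional and $\omega_0|_{\Sigma}=\dx\alpha$ has rank $2$, so it is neither symplectic nor coisotropic; what is needed is the constant-rank statement with kernel along the field lines. Moreover, a Darboux--Weinstein normal form for $\Sigma$ alone does not yield the splitting $H=\xi_3^2+Q_{(x_2,\xi_2,x_3)}(x_1,\xi_1)+\OO(|(x_1,\xi_1,\xi_3)|^3)$: cross terms between $\xi_3$ and $(x_1,\xi_1)$ and a non-unit coefficient of $\xi_3^2$ are not excluded, and a fibrewise linear symplectic change in $(x_1,\xi_1)$ plus a rescaling cannot remove them. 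The paper first constructs an explicit transverse frame $(\f_4,\f_5,\f_6)$ in which the Hessian of $H$ on $\Sigma$ is diagonal with entries $2\|\bB\|,2\|\bB\|,2$, and only then applies the Moser--Weinstein argument to correct the map to a symplectomorphism. Finally, when conjugating by the Fourier integral operator of the geometric step you must control the order-$\hbar$ subprincipal term: a term $\hbar\,c(x_2,\xi_2,x_3)$ has weight $2$ and cannot be absorbed into $f^\star$; the paper disposes of it by choosing $U_{\hbar}$ according to the improved Egorov theorem, a point absent from your outline.
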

\begin{rem}\label{rem.thm1}
In Theorem \ref{normal-form1}, the direction of $\bB$ considered as a vector field on $\Sigma$ is $\frac{\partial}{\partial x_{3}}$ and the function $b\in \mathcal{C}^\infty(\mathbb R^6)$ stands for $b\circ j^{-1}_{|\Sigma}\circ\pi$ where $\pi : \mathbb R^6 \to \Sigma : \pi(x_{1},\xi_{1}, x_{2},\xi_{2},x_{3},\xi_{3})=(0,0, x_{2},\xi_{2},x_{3},0)$. In addition, note that the support of $f^{\star}$ in $Z$ and $\xi_{3}$ may be chosen as small as we want.
\end{rem}
\begin{rem}
  In the context of Weyl's asymptotics, a close version of this
  theorem appears in \cite[Chapter 6]{I98}.
\end{rem}

In order to investigate the spectrum of $\mathcal{L}_{\hbar,\bA}$ near the low lying energies, we introduce the following pseudo-differential operator
$$\mathcal{N}^{[1]}_{\hbar}=\hbar^2D_{x_{3}}^2+\hbar\Op_{\hbar}^w b+ \Op_{\hbar}^w f^\star(\hbar,\hbar, x_{2}, \xi_{2}, x_{3}, \xi_{3}),$$
obtained by replacing $\mathcal{I}_{\hbar}$ by $\hbar$. 

\begin{corollary}\label{normal-form1-d}
We introduce
\begin{equation}\label{Nheta}
\mathcal{N}^\sharp_{\hbar}=\Op_{\hbar}^w\left(N_{\hbar}^\sharp\right),
\end{equation}
with
$$N_{\hbar}^\sharp=\xi_{3}^2+\mathcal{I}_{\hbar}\underline{b}(x_{2},\xi_{2},x_{3})+ f^{\star,\sharp}(\hbar,\mathcal{I}_{\hbar}, x_{2}, \xi_{2}, x_{3}, \xi_{3})$$
and where $\underline{b}$ is a smooth extension of $b$ away from $D(0,\eps)$ such that \eqref{hyp4} still holds and where  $f^{\star,\sharp}=\chi(x_{2},\xi_{2}, x_{3})f^{\star}$, with $\chi$ is a smooth cutoff function being $1$ in a neighborhood of $D(0,\eps)$. We also define the operator attached to the first eigenvalue of $\mathcal{I}_{\hbar}$
\begin{equation}\label{Nheta1}
\mathcal{N}^{[1],\sharp}_{\hbar}=\Op_{\hbar}^w\left(N_{\hbar}^{[1],\sharp}\right),
\end{equation}
where $N_{\hbar}^{[1],\sharp}=\xi_{3}^2+\hbar\underline{b}(x_{2},\xi_{2},x_{3})+ f^{\star,\sharp}(\hbar,\hbar, x_{2}, \xi_{2}, x_{3}, \xi_{3})$.

If $\eps$ and the support of $f^\star$ are small enough, then we have
\begin{enumerate}[(a)]
\item \label{normal-form1-d-a}The spectra of $\mathcal{L}_{\hbar,\bA}$ and $\mathcal{N}^\sharp_{\hbar}$ below $\beta_{0}\hbar$ coincide modulo $\OO(\hbar^\infty)$.
\item \label{normal-form1-d-b} For all $c\in(0,\min(3b_{0},\beta_{0}))$, the spectra of $\mathcal{L}_{\hbar,\bA}$ and $\mathcal{N}^{[1],\sharp}_{\hbar}$ below $c\hbar$ coincide modulo $\OO(\hbar^\infty)$.
\end{enumerate}

\end{corollary}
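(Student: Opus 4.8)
\textbf{Proof proposal for Corollary \ref{normal-form1-d}.}

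The plan is to combine the microlocal normal form of Theorem \ref{normal-form1} with Agmon-type exponential decay estimates coming from the confinement Assumption \ref{hyp3-4}, so that the artificial cutoffs introduced in $\underline{b}$ and $f^{\star,\sharp}$ cost only $\OO(\hbar^\infty)$ in the relevant spectral window. First I would establish that eigenfunctions of $\mathcal{L}_{\hbar,\bA}$ associated with eigenvalues below $\beta_0\hbar$ are localized, up to $\OO(\hbar^\infty)$ in weighted $L^2$-norm, in the ball $D(0,\eps)$: this follows from the lower bound \eqref{lbB}, the strict inequality $\beta_0<\tilde\beta_0<b_1$ in \eqref{hyp3}--\eqref{hyp4}, and a standard IMS localization / Agmon argument applied to the effective electric potential $q\mapsto\hbar b(q)$ (one conjugates by $e^{\phi/\hbar}$ with $\phi$ a suitable Lipschitz weight vanishing on $\{b\leq\beta_0\}$). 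Once this spatial localization is in hand, modifying $b$ into $\underline{b}$ outside $D(0,\eps)$ and multiplying $f^\star$ by $\chi\equiv 1$ near $D(0,\eps)$ changes the operator only by something that is $\OO(\hbar^\infty)$ on the spectral subspace considered, so by the usual min-max comparison the spectra of $\mathcal{L}_{\hbar,\bA}$ and of the globally defined operator $\mathcal{N}^\sharp_\hbar$ below $\beta_0\hbar$ coincide modulo $\OO(\hbar^\infty)$; this proves \eqref{normal-form1-d-a}, after transporting the conclusion through the unitary $U_\hbar$ and discarding the tail remainder $\mathcal{R}_\hbar$, whose symbol is flat in $(x_1,\xi_1,\xi_3,\hbar)$ and hence negligible on the same microlocal region.

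For \eqref{normal-form1-d-b} I would additionally microlocalize in the harmonic-oscillator variable $\mathcal{I}_\hbar=\hbar^2D_{x_1}^2+x_1^2$. Since the eigenvalues of $\mathcal{I}_\hbar$ are $(2n-1)\hbar$, $n\geq 1$, and the principal part of the symbol of $\mathcal{N}^\sharp_\hbar$ is $\xi_3^2+\mathcal{I}_\hbar\underline{b}$ with $\underline{b}\geq b_0$, the spectral projector of $\mathcal{N}^\sharp_\hbar$ below $c\hbar$ with $c<3b_0$ essentially lives in the first Landau band $\{\mathcal{I}_\hbar=\hbar\}$: the contribution of the bands $n\geq 2$ is bounded below by $3b_0\hbar+\OO(\hbar^{3/2})$, hence above the threshold. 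Therefore one may replace $\mathcal{I}_\hbar$ by its lowest eigenvalue $\hbar$ at the cost of $\OO(\hbar^\infty)$, which turns $\mathcal{N}^\sharp_\hbar$ into $\mathcal{N}^{[1],\sharp}_\hbar$; combined with the constraint $c<\beta_0$ that keeps us inside the window of part \eqref{normal-form1-d-a}, this yields the claimed spectral equivalence below $c\hbar$ for $c\in(0,\min(3b_0,\beta_0))$.

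The routine steps are the symbolic composition estimates and the min-max bookkeeping; the main obstacle is the first one, namely proving the $\OO(\hbar^\infty)$ spatial (and then microlocal) confinement uniformly in $\hbar$. The delicate point is that the available lower bound \eqref{lbB} has a loss $C_0\hbar^{1/4}$, so the effective well $\{b\leq\beta_0\}$ is confining only for $\hbar$ small enough and with an Agmon distance that must be built from $b-\beta_0$ rather than from $b-b_0$; one has to check that the weight $\phi$ can be chosen with $\|\nabla\phi\|^2<b-\beta_0$ outside a slightly larger ball while remaining bounded, which is where Assumption \eqref{hyp4} (the sublevel set $\{b\leq\tilde\beta_0\}$ is contained in $D(q_0,\eps)$) is used in an essential way. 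Granting this, the decay estimate is standard and the rest of the argument is the bookkeeping described above.
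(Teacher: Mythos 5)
There is a genuine gap in your treatment of part (a): your spectral comparison is essentially one-directional. Localizing the eigenfunctions of $\mathcal{L}_{\hbar,\bA}$ and observing that the modifications $b\rightsquigarrow\underline{b}$, $f^\star\rightsquigarrow f^{\star,\sharp}$ are supported away from where those eigenfunctions live only shows that eigenfunctions of $\mathcal{L}_{\hbar,\bA}$ produce quasimodes for $\mathcal{N}^\sharp_{\hbar}$. To get coincidence of the spectra modulo $\OO(\hbar^\infty)$ you also need the reverse direction, and there a \enquote{usual min--max comparison} is not available because the two operators are conjugate only \emph{microlocally}, through the Fourier integral operator $U_{\hbar}$ of Theorem \ref{normal-form1}. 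Concretely, you must prove that the eigenfunctions of the globally defined model $\mathcal{N}^\sharp_{\hbar}$ with eigenvalues below $\beta_0\hbar$ are themselves microlocalized: in the region $\{b\leq\beta_0+\eps\}$ with respect to $(x_2,\xi_2,x_3)$, and at scale $\hbar^{\delta}$ in $\xi_3$ and in $(x_1,\xi_1)$; only then can you transport them back through $U_\hbar$ and argue that the remainder $\mathcal{R}_{\hbar}$ -- whose symbol is merely flat at $x_1=\xi_1=\xi_3=\hbar=0$, not small elsewhere -- acts as $\OO(\hbar^\infty)$ on them. This is exactly what the paper establishes through the commutator/iteration arguments of Propositions \ref{space-loc-N}, \ref{xi3-loc-N} and \ref{z1-loc-N}, together with Lemma \ref{lem.spess} (which guarantees that the spectrum of $\mathcal{N}^\sharp_{\hbar}$ below $\beta_0\hbar$ is actually discrete, a point you take for granted) and the polynomial eigenvalue counting of Corollary \ref{numbers-L-N}, which is needed to convert the two families of quasimodes into coincidence of the spectra (clusters and multiplicities). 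None of this appears in your proposal, and it is the real content of part (a).

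A second, repairable, error is the scale of your Agmon weight: conjugating by $e^{\phi/\hbar}$ cannot work here. Since both the spectral window and the magnetic lower bound \eqref{lbB} are of size $\hbar$, conjugation by $e^{\phi/\hbar^{s}}$ creates an error of order $\hbar^{2-2s}\|\nabla\phi\|^2$, which must be absorbed by the gap $\hbar(b-\beta_0)$; this forces $s=\tfrac12$, and indeed the paper's localization proposition uses weights $e^{(1-\epsilon)\phi/\hbar^{1/2}}$ with $\phi$ the Agmon distance for the metric $(b-\beta_0)_+\,g$ -- consistent with your condition $\|\nabla\phi\|^2<b-\beta_0$ but not with the exponent $\hbar^{-1}$ you wrote. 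The resulting decay $e^{-c/\hbar^{1/2}}$ still gives the $\OO(\hbar^\infty)$ spatial localization you need, but you also need the frequency localization $\psi=\chi_0(\hbar^{-2\delta}\mathcal{L}_{\hbar,\bA})\chi_1(q)\psi+\OO(\hbar^\infty)$ before conjugating by $U_\hbar$. By contrast, your part (b) is essentially the paper's argument: since $\mathcal{N}^\sharp_{\hbar}$ commutes exactly with $\mathcal{I}_{\hbar}$, it decomposes along the Hermite modes, and the G\aa rding-type band lower bound (Proposition \ref{finite-k}) shows that below $c\hbar$, $c<\min(3b_0,\beta_0)$, the spectrum is \emph{exactly} that of $\mathcal{N}^{[1],\sharp}_{\hbar}$ -- no $\OO(\hbar^\infty)$ loss is even incurred at that step.
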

Let us now state our results concerning the normal form of $\mathcal{N}_{\hbar}^{[1]}$ (or $\mathcal{N}_{\hbar}^{[1],\sharp}$) under the following assumption. 
\begin{notation}
If $f=f(\z)$ is a differentiable function, we denote by $T_{\z}f(\cdot)$ its tangent map at the point $\z$. Moreover, if $f$ is twice differentiable, the second derivative of $f$ is denoted by $T^2_{\z}f(\cdot,\cdot)$.
\end{notation}
\begin{hyp}\label{non-deg}
We assume that $T^2_{0}b(\bB(0),\bB(0))>0$.
\end{hyp}

\begin{rem}
If the function $b$ admits a unique and positive minimum at $0$ and that it is non degenerate, then Assumption \ref{non-deg} is satisfied.
\end{rem}

Under Assumption \ref{non-deg}, we have $\partial_3b(0,0,0)=0$ and, in the coordinates $(x_{2},\xi_{2},x_{3})$ given in Theorem \ref{normal-form1},
\begin{equation}
\partial^2_3b(0,0,0) > 0\,.\label{eq:positive}
\end{equation}
 It follows
from~\eqref{eq:positive} and the implicit function theorem that, for
small $x_2$, there exists a smooth function $(x_2,\xi_2)\mapsto s(x_2,\xi_2)$,
$s(0,0)=0$, such that
\begin{equation}
\partial_3b(x_2,\xi_2,s(x_2,\xi_2)) = 0\,.\label{eq:critical}
\end{equation}
The point $s(x_{2},\xi_{2})$ is the unique (in a neighborhood of $(0,0,0)$) minimum of $x_{3}\mapsto b(x_{2},\xi_{2},x_{3})$.
We define
$$\nu(x_{2},\xi_{2}):= (\tfrac{1}{2}\partial_3^2 b(x_2,\xi_2,s(x_2,\xi_2)))^{1/4}.$$
\begin{theo}\label{pre-normal-form2}
Under Assumption \ref{non-deg}, there exists a neighborhood $\mathcal{V}_{0}$ of $0$ and a Fourier integral operator $V_{\hbar}$ which is microlocally unitary near $\mathcal{V}_{0}$ and such that
$$V_{\hbar}^*\mathcal{N}_{\hbar}^{[1]}V_{\hbar}=:\underline{\mathcal{N}}^{[1]}_{\hbar}=\Op_{\hbar}^w\left(\underline{N}^{[1]}_{\hbar}\right),$$
where $\underline{N}^{[1]}_{\hbar}=\nu^2(x_2,\xi_2)\left(\xi_3^2 + \h
  x_3^2\right)+ \h b(x_2,\xi_2,s(x_2,\xi_2))+\underline{r}_{\hbar}$
and $\underline{r}_{\hbar}$ is a semiclassical symbol such that
$\underline{r}_{\hbar}=\OO(\h x_3^3) +\OO(\h\xi_3^2) + \OO(\xi_3^3) +
\OO(\h^2)$.
\end{theo}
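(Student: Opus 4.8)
The plan is to start from the operator $\mathcal{N}_{\hbar}^{[1]} = \hbar^2 D_{x_3}^2 + \hbar\Op_\hbar^w b + \Op_\hbar^w f^\star(\hbar,\hbar,x_2,\xi_2,x_3,\xi_3)$ produced by Theorem~\ref{normal-form1} (with $\mathcal{I}_\hbar$ frozen at its lowest eigenvalue $\hbar$), and to implement in two stages the program sketched informally in the introduction. In this operator, $\Op_\hbar^w$ acts only in the $(x_2,\xi_2)$ variables, so $b$, $f^\star$ and the eventual remainders are to be treated as symbols in $(x_2,\xi_2)$ valued in functions (or operators) of $(x_3,\xi_3)$; this is the framework of Theorem~\ref{normal-form1}(b). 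The two stages are: (i) a classical change of variables in the phase space recentering the $x_3$-variable at the partial minimum $s(x_2,\xi_2)$ and rescaling so that the $x_3$-oscillator has the right shape; (ii) an honest microlocal conjugation absorbing the error terms of order $\hbar^{1/2}$ away from the model Hamiltonian.

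\textbf{Stage 1: recentering and diagonalization of the quadratic part in $x_3$.} Using Assumption~\ref{non-deg}, which gives $\partial_3 b(0,0,0)=0$ and $\partial_3^2 b(0,0,0)>0$, the implicit function theorem produces the smooth function $s(x_2,\xi_2)$ with $\partial_3 b(x_2,\xi_2,s(x_2,\xi_2))=0$ and $\partial_3^2 b > 0$ near the origin. I would introduce the symplectic change of variables in the $(x_3,\xi_3)$-fibre that translates $x_3 \mapsto x_3 + s(x_2,\xi_2)$; this is a symplectomorphism of $\R^6$ (one checks it preserves $\omega_0$ using that $s$ depends only on $x_2,\xi_2$, exactly as $\bA(0)=0$ was arranged). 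Quantizing it by a metaplectic-type / Fourier integral operator $V_\hbar^{(1)}$, we conjugate $\mathcal{N}_\hbar^{[1]}$ to an operator whose symbol, after Taylor expansion of $b$ at $x_3 = s(x_2,\xi_2)$, reads $\xi_3^2 + \hbar\big(\tfrac{1}{2}\partial_3^2 b(x_2,\xi_2,s(x_2,\xi_2))\, x_3^2 + b(x_2,\xi_2,s(x_2,\xi_2))\big) + (\text{higher order in } x_3) + (\text{the transformed } f^\star)$. Since $\partial_3 b$ vanishes at $s$, the linear term in $x_3$ disappears; since $f^\star = \OO((|Z|+|\xi_3|)^{3})$ with $Z$ frozen at $\hbar$, the transformed $f^\star$ contributes $\OO(\hbar^{3/2}) + \OO(\xi_3^3)$-type terms; and the remaining Taylor terms of $b$ beyond second order give $\OO(\hbar x_3^3)$. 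Finally a scalar rescaling $x_3 \mapsto \nu^{-1}(x_2,\xi_2)x_3$, $\xi_3\mapsto \nu(x_2,\xi_2)\xi_3$ — again symplectic in the fibre, quantized by a unitary $V_\hbar^{(2)}$ whose symbol depends on $(x_2,\xi_2)$ — turns $\xi_3^2 + \hbar\,\tfrac12\partial_3^2 b\, x_3^2$ into $\nu^2(x_2,\xi_2)(\xi_3^2 + \hbar x_3^2)$, at the cost of further $\OO(\hbar x_3^3)$, $\OO(\hbar\xi_3^2)$, $\OO(\xi_3^3)$, $\OO(\hbar^2)$ terms coming from subprincipal corrections in the Weyl calculus of the conjugation (the $\hbar^2$ term is the usual half-form / subprincipal contribution when quantizing a non-constant linear symplectic change). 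Collecting, with $V_\hbar := V_\hbar^{(1)} V_\hbar^{(2)}$, gives exactly $\underline{N}_\hbar^{[1]} = \nu^2(x_2,\xi_2)(\xi_3^2+\hbar x_3^2) + \hbar b(x_2,\xi_2,s(x_2,\xi_2)) + \underline{r}_\hbar$ with $\underline{r}_\hbar = \OO(\hbar x_3^3) + \OO(\hbar\xi_3^2) + \OO(\xi_3^3) + \OO(\hbar^2)$.

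\textbf{The microlocal-unitarity and support bookkeeping.} The operator $V_\hbar$ is only microlocally unitary near a neighborhood $\mathcal{V}_0$ of $0$: the changes of variables $s(x_2,\xi_2)$ and $\nu(x_2,\xi_2)$ are only defined near the origin, so one uses the compact support of $f^\star$ in $(Z,\xi_3)$ (and a spatial cutoff as in Corollary~\ref{normal-form1-d}) to globalize them harmlessly, following the standard truncation procedure for Birkhoff normal forms — this is why the statement asserts microlocal unitarity rather than global unitarity, and why all the error classes are stated only near $\mathcal{V}_0$. I would be careful to note that conjugation by $V_\hbar$ is done at the level of symbols via the exact $\hbar$-pseudodifferential calculus (Egorov / metaplectic composition), controlling each remainder term through its order in $x_3, \xi_3, \hbar$.

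\textbf{Main obstacle.} The routine parts are the Taylor expansions and the symplectic-geometry check; the genuinely delicate point is keeping the two semiclassical scales straight while conjugating. Here $\hbar$ multiplies $x_3^2$ but not $\xi_3^2$, so the $(x_3,\xi_3)$-oscillator is \emph{not} $\hbar$-elliptic in the usual sense — it has frequency $\hbar^{-1/2}$ — and one must verify that the $\hbar$-pseudodifferential calculus in $(x_2,\xi_2)$ commutes correctly with the metaplectic operators acting in the $(x_3,\xi_3)$-fibre (whose symbols depend on $(x_2,\xi_2)$), so that the cross terms are genuinely of the claimed orders $\OO(\hbar x_3^3), \OO(\hbar\xi_3^2), \OO(\xi_3^3), \OO(\hbar^2)$ and not larger. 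Equivalently, one must make sure that the subprincipal corrections generated by quantizing the non-constant linear change $(x_3,\xi_3)\mapsto(\nu^{-1}x_3,\nu\xi_3)$ land in $\OO(\hbar^2)$ (after frozen $\mathcal{I}_\hbar=\hbar$), which is the source of the $\OO(\hbar^2)$ term in $\underline{r}_\hbar$. This is where the bulk of the care in the full proof will go; everything else follows the two-dimensional template of \cite{RVN13}.
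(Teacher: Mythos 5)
Your outline reproduces the first two steps of the paper's argument (recenter $x_3$ at $s(x_2,\xi_2)$, Taylor expand, rescale by $\nu$), but two of your claims are genuinely wrong or missing, and together they leave the theorem unproved as stated. First, the fibrewise translation $x_3\mapsto x_3-s(x_2,\xi_2)$ with $(x_2,\xi_2,\xi_3)$ untouched is \emph{not} a symplectomorphism precisely because $s$ depends on $(x_2,\xi_2)$: it produces cross terms $\dx s\wedge\dx\xi_3$ in the pulled-back form. The same is true of the fibre dilation $(x_3,\xi_3)\mapsto(\nu^{-1}x_3,\nu\xi_3)$ with variable $\nu(x_2,\xi_2)$. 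The paper must therefore modify both maps by $\OO(\xi_3)$, resp.\ $\OO(x_3\xi_3)$, shifts of $(x_2,\xi_2)$ (see \eqref{eq:non-symplectic} and \eqref{eq:check}) and then invoke the Darboux--Weinstein lemma to correct them into exact symplectomorphisms, before quantizing by the improved Egorov theorem. This is not mere bookkeeping: the induced $\OO(\xi_3)$ displacement of $(x_2,\xi_2)$ fed through $\hbar b$ is one of the sources of first-order cross terms of size $\OO(\hbar\xi_3)$ in the transformed symbol, which your error accounting never records.

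Second, and more seriously, a term $c_{1,1}(x_2,\xi_2,x_3)\,\hbar\,\xi_3$ is present already in $N^{[1]}_{\hbar}$ itself (take $2\ell+2m+\beta=3$ with $\ell+m=1$, $\beta=1$ in the Taylor series of $f^\star(\hbar,\hbar,\cdot)$, cf.\ \eqref{eq:series}--\eqref{eq:N}), and it survives, with modified coefficient, after both changes of variables. It does \emph{not} belong to the class $\OO(\h x_3^3)+\OO(\h\xi_3^2)+\OO(\xi_3^3)+\OO(\h^2)$ claimed for $\underline{r}_{\hbar}$; calling it an ``$\OO(\hbar^{3/2})$-type term'' conflates its heuristic size after the microlocalization $\xi_3\sim\hbar^{1/2}$ with membership in the stated symbol classes, and with such a term left over the subsequent second Birkhoff form (which needs a perturbation of valuation $3$ in $(x_3,\tilde\xi_3,h)$) would also break down. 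The paper devotes its entire last step to removing it: one conjugates by $\exp(iA)$ with $A=\opweyl(a)$, $a=-\tfrac12\int_0^{x_3}\check c_{1,1}(x_2,\xi_2,t)\,dt$, so that the commutator contributes $\h\{\xi_3^2,a\}=-\h\xi_3\check c_{1,1}$ modulo $\OO(\h\xi_3^2)+\OO(\h^2)$, cancelling the offending term. Your announced ``stage (ii)'' (a microlocal conjugation absorbing the $\hbar^{1/2}$-order errors) is where such an argument would live, but it is never carried out in the proposal; as written, you only obtain a remainder $\OO(\h\xi_3)$, which is strictly weaker than the theorem.
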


\begin{corollary}\label{pre-normal-form2-d}
Let us introduce
$$\underline{\mathcal{N}}^{[1],\sharp}_{\hbar}=\Op_{\hbar}^w\left(\underline{N}^{[1],\sharp}_{\hbar}\right),$$
where $\underline{N}^{[1],\sharp}_{\hbar}=
\underline{\nu}^2(x_2,\xi_2)\left(\xi_3^2 + \h x_3^2\right)+ \h
\underline{b}(x_2,\xi_2,s(x_2,\xi_2))+\underline{r}^\sharp_{\hbar},$
with
$\underline{r}^\sharp_{\hbar}=\chi(x_{2},\xi_{2},x_{3},\xi_{3})\underline{r}_{\hbar}$,
and where $\underline{\nu}$ denotes a smooth and constant (with a
positive constant) extension of the function $\nu$.

There exists a constant $\tilde c>0$ such that, for any cut-off
function $\chi$ equal to 1 on $D(0,\eps)$ with support in
$D(0,2\eps)$, we have:
\begin{enumerate}[(a)]
\item\label{pre-normal-form2-d-a} The spectra of $\underline{\mathcal{N}}^{[1],\sharp}_{\hbar}$ and $\mathcal{N}^{[1],\sharp}_{\hbar}$ below $(b_{0}+\tilde c\eps^2)\hbar$ coincide modulo $\OO(\hbar^{\infty})$.
\item\label{pre-normal-form2-d-b} For all $c\in(0,\min(3b_{0},b_0+\tilde c\eps^2))$, the spectra of $\mathcal{L}_{\hbar,\bA}$ and $\underline{\mathcal{N}}^{[1],\sharp}_{\hbar}$ below $c\hbar$ coincide modulo $\OO(\hbar^{\infty})$.
\end{enumerate}
\end{corollary}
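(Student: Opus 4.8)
\textbf{Proof plan for Corollary~\ref{pre-normal-form2-d}.}
The plan is to prove the two statements by combining a precise microlocalization of the eigenfunctions of $\mathcal{N}^{[1],\sharp}_{\hbar}$ (and of $\mathcal{L}_{\hbar,\bA}$) with the fact, coming from Theorem~\ref{pre-normal-form2}, that $\underline{\mathcal{N}}^{[1]}_{\hbar}$ and $\mathcal{N}^{[1]}_{\hbar}$ are microlocally unitarily equivalent near $\mathcal{V}_{0}$ via $V_{\hbar}$. The key observation is that the cut-off functions, the extension $\underline{b}$ and the constant extension $\underline{\nu}$ only modify the operators \emph{away} from a fixed neighborhood of $0$, so one only has to control that the relevant eigenfunctions do not leak into that region. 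First I would establish an Agmon-type estimate: using \eqref{lbB} together with the structure of $N^{[1],\sharp}_{\hbar}$ (whose effective potential $\hbar\underline b(x_2,\xi_2,x_3)$ has a unique minimum well $\{b\le\beta_0\}\subset D(0,\eps)$) one shows that any eigenfunction of $\mathcal{N}^{[1],\sharp}_{\hbar}$ with eigenvalue $\le (b_0+\tilde c\eps^2)\hbar$, for $\tilde c$ small enough, is $\OO(\hbar^\infty)$ microlocalized in $\{(x_2,\xi_2,x_3);\ b(x_2,\xi_2,x_3)\le \beta_0+\text{const}\cdot\eps^2\}\subset D(0,2\eps)$, hence inside the region where $\chi\equiv 1$, $\underline b=b$ and $\underline\nu=\nu$; the constant $\tilde c$ is precisely the one governing how fast $b$ grows away from its minimum. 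Symmetrically, the same effective-potential argument applied to $\underline{N}^{[1],\sharp}_{\hbar}$ — whose effective potential is $\underline\nu^2 \h x_3^2 + \h\underline b$, again confining — gives the matching localization for eigenfunctions of $\underline{\mathcal{N}}^{[1],\sharp}_{\hbar}$ below the same threshold.

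For part \eqref{pre-normal-form2-d-a}, once both families of low-energy eigenfunctions are microlocalized in a common neighborhood $\mathcal{W}\subset\mathcal{V}_0$ on which the two sharp symbols agree with the genuine (non-sharp) ones and on which $V_{\hbar}$ is unitary, I would run the standard ``microlocal spectral comparison'' argument: conjugating by $V_{\hbar}$ a genuine eigenfunction of $\mathcal{N}^{[1],\sharp}_{\hbar}$ (which microlocally coincides with one of $\mathcal{N}^{[1]}_{\hbar}$) produces, up to $\OO(\hbar^\infty)$, a quasimode of $\underline{\mathcal{N}}^{[1],\sharp}_{\hbar}$, and conversely; the spectral theorem / a Grushin-type a priori bound on resolvents then upgrades the quasimode statement to the equality of spectra modulo $\OO(\hbar^\infty)$ below $(b_0+\tilde c\eps^2)\hbar$. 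For part \eqref{pre-normal-form2-d-b}, I would simply chain this with Corollary~\ref{normal-form1-d}\eqref{normal-form1-d-b}: for $c<\min(3b_0,\beta_0)$ the spectra of $\mathcal{L}_{\hbar,\bA}$ and $\mathcal{N}^{[1],\sharp}_{\hbar}$ below $c\hbar$ agree modulo $\OO(\hbar^\infty)$, and if moreover $c< b_0+\tilde c\eps^2$ we may compose with part \eqref{pre-normal-form2-d-a}; shrinking $\eps$ if necessary so that $\beta_0$ can be taken as close to $b_0$ as wanted (keeping $b_0+\tilde c\eps^2$ above $c$) yields the claimed range $c\in(0,\min(3b_0,b_0+\tilde c\eps^2))$.

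The main obstacle I anticipate is the Agmon/microlocalization step for $\mathcal{N}^{[1],\sharp}_{\hbar}$ and $\underline{\mathcal{N}}^{[1],\sharp}_{\hbar}$, because these are genuine pseudo-differential operators in the $(x_2,\xi_2)$ variables (the ``potential'' $b(x_2,\xi_2,x_3)$ depends on the momentum $\xi_2$), so the classical Agmon exponential-decay technology does not apply verbatim; one needs a phase-space (FBI/Gabor) version of the exponential-weight estimate, or an iterated commutator argument à la IMS localization adapted to $\hbar$-pseudo-differential operators, to show that being microlocally outside $\{b\le\beta_0+C\eps^2\}$ costs energy $\gtrsim \tilde c\eps^2\hbar$. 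Tracking the constant $\tilde c$ uniformly, and making sure it does not degenerate as $\eps\to 0$, is the delicate bookkeeping point; everything else (the conjugation by $V_\hbar$, the use of the earlier corollary, the choice of $\chi$) is routine given the normal forms already proven.
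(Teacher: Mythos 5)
Your plan follows essentially the same route as the paper: point \eqref{pre-normal-form2-d-a} is proved there by combining a rough eigenvalue count $\OO(\hbar^{-2})$ (Lemma \ref{number-N-uN}) with microlocalization of the low-lying eigenfunctions of \emph{both} operators obtained precisely by the iterated commutator/G\aa rding argument you anticipate (Propositions \ref{space-loc-N}, \ref{xi3-loc-N} and \ref{micro-uN} --- not an Agmon exponential weight), after which quasimodes are transferred through the Fourier integral operator $V_\hbar$ of Theorem \ref{pre-normal-form2} as in \cite[Section 4.3]{RVN13}, and point \eqref{pre-normal-form2-d-b} is, as you say, deduced from Corollary \ref{normal-form1-d}. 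The only ingredient you leave implicit is the polynomial bound on the number of eigenvalues below $\beta_{0}\hbar$, which the paper records separately to make the comparison modulo $\OO(\hbar^{\infty})$ uniform over the whole spectral window.
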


\begin{notation}[Change of semiclassical parameter]\label{notation}
  We let $h=\hbar^{\frac{1}{2}}$ and, if $A_{\hbar}$ is a
  semiclassical symbol on $T^*\R^2$, admitting a semiclassical
  expansion in $\hbar^{\frac{1}{2}}$, we write
\[
\mathcal{A}_{\hbar}:=\Op_{\hbar}^w A_{\hbar}=\Op_{h}^w \mathsf{A}_{ h}=:\mathfrak{A}_{h},
\]
with
\[
\mathsf{A}_{h}(x_{2},\tilde \xi_{2}, x_{3}, \tilde \xi_{3}) = 
A_{h^2}(x_{2}, h\tilde\xi_{2}, x_{3}, h\tilde\xi_{3}).
\]
Thus, $\mathcal{A}_{\hbar}$ and $\mathfrak{A}_{h}$ represent the same
operator when $h=\h^{\frac12}$, but the former is viewed as an
$\h$-quantization of the symbol $A_{\hbar}$, while the latter is an
$h$-pseudo-differential operator with symbol $\mathsf{A}_{ h}$. Notice
that, if $A_{\hbar}$ belongs to some class $S(m)$, then $\mathsf{A}_{
  h}\in S(m)$ as well. This is of course not true the other way
around.
\end{notation}

\begin{theo}\label{normal-form2}
  Under Assumption \ref{non-deg}, there exist a unitary operator
  $W_{h}$ and a smooth function $g^\star(h, Z, x_{2},\xi_{2})$, with compact support as
  small as we want with respect to $Z$ and with
  compact support in $(x_{2},\xi_2)$, whose Taylor series with
  respect to $Z$, $h$ is
$$\sum_{2m+2\ell\geq 3} c_{m,\ell}(x_{2}, \xi_{2}) Z^{m}  h^\ell,$$
such that
$$W^*_{h}\underline{\mathfrak{N}}^{[1],\sharp}_{h}W_{h}=:\mathfrak{M}_{h}=\Op_{h}^w\left(\mathsf{M}_{h}\right),$$
with
\begin{multline*}
  \mathsf{M}_{h}=h^2\underline{b}(x_2,h \tilde\xi_2,s(x_2,
  h\tilde\xi_2))+ h^2\mathcal{J}_{h}\Op_{ h}^w\underline{\nu}^2(x_2,
  h\tilde\xi_2) +h^2 g^\star(h, \mathcal{J}_{
    h},x_{2},h\tilde\xi_{2})\\
+h^2\mathsf{R}_{h} + h^\infty S(1).
\end{multline*}

where 
\begin{enumerate}[(a)]
\item the operator $\underline{\mathfrak{N}}^{[1],\sharp}_{h}$ is
  $\underline{\mathcal{N}}^{[1],\sharp}_{\hbar}$ (but written in the
  $h$-quantization),
\item we have let $\mathcal{J}_{ h}=\Op_{h}^w\left(\tilde\xi_{3}^2+x_{3}^2\right)$,
\item the function $\mathsf{R}_{h}$ satisfies
  $\mathsf{R}_{h}(x_{2},h\tilde \xi_{2}, x_{3},
  \tilde\xi_{3})=\OO((x_{3},\tilde\xi_{3})^\infty)$.
\end{enumerate}
\end{theo}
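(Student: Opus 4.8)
\textbf{Proof plan for Theorem \ref{normal-form2}.}
The plan is to treat this as a semiclassical Birkhoff normal form with respect to the harmonic oscillator $\mathcal{J}_h=\Op_h^w(\tilde\xi_3^2+x_3^2)$, carried out in the algebra of formal power series $\mathscr{E}=\mathscr{F}[\![x_3,\tilde\xi_3,h]\!]$ introduced in the informal discussion, where $\mathscr{F}$ is the space of $h$-dependent symbols $c(h,x_2,h\tilde\xi_2)$ (i.e.\ symbols of operators in the "slow" variables $(x_2,\tilde\xi_2)$ in the $h$-quantization, which may be regarded as scalars for the purpose of the construction in $(x_3,\tilde\xi_3)$). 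The starting point is Corollary \ref{pre-normal-form2-d}, read in the $h$-quantization: by Notation \ref{notation}, $\underline{\mathfrak{N}}^{[1],\sharp}_{h}=\Op_h^w(\underline{\mathsf N}^{[1],\sharp}_h)$ with
\[
\underline{\mathsf N}^{[1],\sharp}_h=h^2\underline{\nu}^2(x_2,h\tilde\xi_2)(\tilde\xi_3^2+x_3^2)+h^2\underline{b}(x_2,h\tilde\xi_2,s(x_2,h\tilde\xi_2))+h^2\,\mathsf r_h,
\]
where the rescaled remainder $\mathsf r_h=\OO(x_3^3)+\OO(\tilde\xi_3^2\cdot h^{?})+\ldots$ inherits from $\underline r^\sharp_{\hbar}=\OO(\h x_3^3)+\OO(\h\xi_3^2)+\OO(\xi_3^3)+\OO(\h^2)$ the crucial feature that, after dividing by $h^2$ and substituting $\xi_3=h\tilde\xi_3$, it is a \emph{formal perturbation} of $\tilde\xi_3^2+x_3^2$ of order $\geq 3$ in $(x_3,\tilde\xi_3,h)$ modulo terms that are already of the desired shape. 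This is the reason the second normal form is possible at all.

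The core is the standard homological argument. First I would Taylor-expand $\underline{\mathsf N}^{[1],\sharp}_h/h^2$ in $\mathscr{E}$ and set $\tau_0=\tilde\xi_3^2+x_3^2$, $p_0=\underline\nu^2\tau_0+\underline b(\cdot,s(\cdot))$. Writing the sought conjugation as $W_h=e^{i\Op_h^w(G_h)/h}$ (or a finite product of such), conjugation acts on symbols by $e^{\frac{i}{h}\{G_h,\cdot\}_h}$ where $\{\cdot,\cdot\}_h$ is the Moyal bracket; the principal part is the Poisson bracket $\{G,\cdot\}$ in $(x_3,\tilde\xi_3)$. At each homogeneity degree $k\geq 3$ one must solve the cohomological equation $\{ \underline\nu^2\tau_0, G_k\}=\Pi^\perp_k R_k$, where $R_k$ is the degree-$k$ error produced by the previous steps and $\Pi_k$ is the projector onto the kernel of $\mathrm{ad}_{\tau_0}$ acting on degree-$k$ homogeneous polynomials in $(x_3,\tilde\xi_3)$ with coefficients in $\mathscr{F}$. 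Since $\tau_0$ is the harmonic oscillator, $\mathrm{ad}_{\tau_0}$ is semisimple on polynomials, its kernel is exactly the polynomials in $\tau_0$ (tensored with $\mathscr{F}$), and the homological equation is always solvable off that kernel; the $\underline\nu^2$ factor is an invertible element of $\mathscr{F}$ (as $\underline\nu$ is bounded below by a positive constant by hypothesis) so it can be divided out. Iterating over $k$ and using Borel summation in $(x_3,\tilde\xi_3,h)$ to pass from formal series to a genuine smooth symbol yields $G_h$, hence $W_h$, such that the conjugated symbol equals a function of $\tau_0$, $h$, and $(x_2,h\tilde\xi_2)$ alone, modulo $\OO((x_3,\tilde\xi_3)^\infty)$ — this last error is $h^2\mathsf R_h$ — plus a genuinely $h^\infty$-small symbol class term $h^\infty S(1)$ coming from the difference between the formal (Borel-summed) object and the true conjugation, together with the non-compactness correction from the cut-offs. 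Collecting the resulting function of $\tau_0$ as $h^2\big(\underline b(x_2,h\tilde\xi_2,s(\cdot))+\mathcal J_h\,\Op_h^w\underline\nu^2+g^\star(h,\mathcal J_h,x_2,h\tilde\xi_2)\big)$, where $g^\star$ gathers all contributions of degree $\geq 3$, gives exactly the claimed form; the stated Taylor series $\sum_{2m+2\ell\geq 3}c_{m,\ell}(x_2,\xi_2)Z^m h^\ell$ is precisely the bookkeeping of the homological iteration (each step produces a correction even in the oscillator variable, hence a polynomial in $\tau_0$, homogeneous of the appropriate total degree in $(Z^{1/2},h^{1/2})$).

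There is one genuinely non-routine point: the conjugation $W_h$ must be realized as a bounded unitary operator on $L^2(\R_{x_3})\otimes L^2(\R_{x_2})$ (not merely a formal symplectomorphism), and it must be compatible with the operator-valued / two-scale structure — the generators $G_h$ depend on $(x_2,h\tilde\xi_2)$ and one must check that quantizing in $(x_3,\tilde\xi_3)$ first and treating $(x_2,\tilde\xi_2)$ as frozen parameters, then quantizing in $(x_2,\tilde\xi_2)$, produces the same operator up to $\OO(h^\infty)$ as the honest $4$-dimensional $h$-quantization. This is the analogue of the operator-valued symbol calculus already used for the first normal form (Theorem \ref{normal-form1}), and I expect the main work of the proof to be in verifying that the Moyal products and the Egorov-type estimates behave well under this partial freezing, so that the error really is $h^\infty S(1)$ uniformly; the purely algebraic homological step and the solvability of the cohomological equations are standard once $\tau_0$ is recognized as the harmonic oscillator. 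The localization to a small neighborhood $\mathcal{V}_0$ — reflected in the compact support of $g^\star$ in $(x_2,\xi_2)$ and in $Z$ — is handled exactly as in the previous sections by inserting cut-offs and absorbing the commutators into the $h^\infty S(1)$ remainder, using the Agmon-type and microlocal localization estimates that accompany Corollaries \ref{normal-form1-d} and \ref{pre-normal-form2-d}.
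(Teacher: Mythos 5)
Your formal core coincides with the paper's: after passing to the $h$-quantization via Notation \ref{notation}, the paper performs exactly the Birkhoff iteration you describe in $\mathscr{E}=\mathscr{F}\formel{x_3,\tilde\xi_3,h}$, solving homological equations relative to $\underline{\nu}^2(x_2,\mu\tilde\xi_2)(\tilde\xi_3^2+x_3^2)$ using the semisimplicity of $\operatorname{ad}_{x_3^2+\tilde\xi_3^2}$ and the invertibility of $\underline{\nu}^2$, then Borel-sums the generator $\tau$ and conjugates by the exactly unitary $e^{ih^{-1}\Op_h^w\tau_h}$, collecting the commuting part $\kappa$ into $g^\star$ and the flat remainder into $h^2\mathsf{R}_h+h^\infty S(1)$. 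Where you diverge is precisely at the point you flag as ``non-routine'': you propose to justify the quantization by freezing $(x_2,\tilde\xi_2)$ and comparing an operator-valued (nested) quantization with the honest four-dimensional one, a verification you leave open. The paper never does this. Instead it keeps the genuine $4$-dimensional Weyl calculus throughout and encodes the two-scale structure by introducing the auxiliary parameter $\mu$ (so that symbols depend on $(x_2,\mu\tilde\xi_2)$), working in the classes $\mathcal{C}(m)$ of the Appendix, which are stable under the Moyal product because the maps $\varphi_\mu$ are \emph{linear}; the conjugation is then controlled by the tailored Egorov theorems \ref{theo:egorov3} and \ref{theo:egorov4}, and one sets $\mu=h$ only at the very end. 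This device also resolves a point your sketch leaves implicit: the symbol $\underline{\mathsf N}^{[1],\sharp}_h$ is \emph{not} in $S(1)$ (it grows like $\langle(x_3,\tilde\xi_3)\rangle^2$), so a naive $S(1)$ Egorov argument does not apply; the paper chooses the Borel sum $\tau_h$ to decay in $(x_3,\tilde\xi_3)$ (in $S(m')$ with $mm'=\OO(1)$) so that Theorems \ref{theo:egorov2}--\ref{theo:egorov3} give a conjugated symbol in $\mathcal{C}(m)$ modulo $h^\infty S(1)$, and Theorem \ref{theo:egorov4}, with the filtration by $(x_3,\tilde\xi_3,h)$-degree, identifies its Taylor series with the formal output of the Birkhoff procedure. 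So your plan is sound in outline, but the step you defer (compatibility of the partial freezing with the full quantization, uniformly and with control of unbounded symbol growth) is exactly the content the paper supplies by its $\mu$-parametrized symbol classes and Egorov theorems; if you pursue your operator-valued route you would have to build the analogous calculus and remainder estimates yourself, which is the bulk of the actual proof.
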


\begin{rem}
Note that the support of $g^\star$ with respect to $Z$ may be chosen as small as we want. Note also that we have used $\underline{\mathfrak{N}}^{[1],\sharp}_{h}$ instead of $\underline{\mathfrak{N}}^{[1]}_{h}$: Since $W_{h}$ is exactly unitary, we get a direct comparison of the spectra.
\end{rem}

\begin{corollary}\label{normal-form2-d}
We introduce
$$\mathfrak{M}^\sharp_{h}=\Op_{h}^w\left(\mathsf{M}_{h}^{\sharp}\right),$$
with
$$\mathsf{M}_{h}^{\sharp}=h^2 \underline{b}(x_2,h \tilde\xi_2,s(x_2, h\tilde\xi_2))+h^2\mathcal{J}_{h}\underline{\nu}^2(x_2,h\tilde\xi_2) + h^2 g^{\star}(h, \mathcal{J}_{ h},x_{2},h\tilde\xi_{2}).$$
We also define
$$\mathfrak{M}^{[1],\sharp}_{h}=\Op_{h}^w\left(\mathsf{M}_{h}^{[1],\sharp}\right),$$
with 
$$\mathsf{M}_{h}^{[1],\sharp}=h^2 \underline{b}(x_2, h \tilde\xi_2,s(x_2, h\tilde\xi_2))+ h^3\underline{\nu}^2(x_2,h\tilde\xi_2) + h^2 g^{\star}(h,h,x_{2},h\tilde\xi_{2}).$$
If $\eps$ and the support of $g^{\star}$ are small enough, we have
\begin{enumerate}[(a)]
\item\label{normal-form2-d-a} For all $\eta>0$, the spectra of $\underline{\mathfrak{N}}^{[1],\sharp}_{h}$ and $\mathfrak{M}^\sharp_{h}$ below $b_{0}h^2+\OO(h^{2+\eta})$ coincide modulo $\OO(h^\infty)$.
\item\label{normal-form2-d-b} For $c\in(0,3)$, the spectra of $\mathfrak{M}^{\sharp}_{h}$ and $\mathfrak{M}^{[1],\sharp}_{h}$ below $b_{0}h^2+c\sigma^{\frac{1}{2}}h^3$ coincide modulo $\OO(h^\infty)$.
\item\label{normal-form2-d-c} If $c\in(0,3)$, the spectra of $\mathcal{L}_{\hbar,\bA}$ and $\mathcal{M}^{[1],\sharp}_{\hbar}=\mathfrak{M}^{[1],\sharp}_{h}$ below $b_{0}\hbar+c\sigma^{\frac{1}{2}}\hbar^{\frac{3}{2}}$ coincide modulo $\OO(\hbar^\infty)$.
\end{enumerate}
\end{corollary}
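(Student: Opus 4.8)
\emph{Assertion (a).} The plan is as follows. By Theorem~\ref{normal-form2} the operator $W_h$ is \emph{exactly} unitary and $W_h^{*}\,\underline{\mathfrak{N}}^{[1],\sharp}_h\,W_h=\mathfrak{M}_h=\mathfrak{M}^\sharp_h+h^2\mathsf{R}_h+h^\infty S(1)$, where $\mathsf{R}_h$ vanishes to infinite order at $\{x_3=\tilde\xi_3=0\}$; hence it is enough to show that the perturbation $h^2\mathsf{R}_h+h^\infty S(1)$ does not change the spectrum below $b_0h^2+\OO(h^{2+\eta})$ modulo $\OO(h^\infty)$. First I would prove a microlocalization estimate in the elliptic variables $(x_3,\tilde\xi_3)$: using $\underline{\nu}^2\geq c_0>0$, $\underline{b}\geq b_0$, and the fact that $g^\star$ is of order three in $(\mathcal{J}_h^{1/2},h^{1/2})$, one has the lower bound $\mathfrak{M}^\sharp_h\geq b_0h^2+c_0\,h^2\mathcal{J}_h-\OO(h^{7/2})$, and a bootstrap based on this ellipticity shows that any normalized eigenfunction $u$ (of $\mathfrak{M}^\sharp_h$, or of $\mathfrak{M}_h$) with eigenvalue below $b_0h^2+\OO(h^{2+\eta})$ satisfies $\langle\mathcal{J}_h^N u,u\rangle=\OO(h^{N\delta})$ for every $N$, with some $\delta=\delta(\eta)>0$. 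Since $\mathsf{R}_h=\OO((x_3,\tilde\xi_3)^\infty)$ is dominated by arbitrary powers of $\tilde\xi_3^2+x_3^2$, this makes $h^2\Op_h^w\mathsf{R}_h$ and $h^\infty S(1)$ act as $\OO(h^\infty)$ perturbations on the spectral window in question, and a by-now-standard comparison argument (min-max together with the above localization), as for Corollary~\ref{normal-form1-d}, would conclude. The compact-support restrictions on $g^\star$ in $Z$ and in $(x_2,\xi_2)$ are harmless for the same reason.

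\emph{Assertion (b).} Here $\mathsf{M}^\sharp_h$ is a function of the harmonic oscillator $\mathcal{J}_h$ with coefficients depending only on $(x_2,\tilde\xi_2)$; since $\mathcal{J}_h$ acts on the $x_3$-variable alone, $\mathfrak{M}^\sharp_h$ commutes with $\mathcal{J}_h$ and decomposes orthogonally along its eigenspaces, labelled by the eigenvalues $(2k-1)h$, $k\geq1$. Substituting $\mathcal{J}_h=h$ (the $k=1$ sector) produces exactly $\mathfrak{M}^{[1],\sharp}_h$, so the task reduces to a lower bound on the sectors $k\geq2$, where $\mathcal{J}_h\geq3h$. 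Using $h^2\Op_h^w[\underline{b}(x_2,h\tilde\xi_2,s)]\geq b_0h^2-\OO(h^4)$ (valid since that symbol has the non-degenerate minimum $b_0$), $\Op_h^w\underline{\nu}^2\geq c_0-\OO(h)$, and the boundedness of the $Z$-support of $g^\star$, one gets $\mathfrak{M}^\sharp_h\geq b_0h^2+3c_0h^3\bigl(1+\OO(h^{1/2})\bigr)$ on the sectors with $2k-1$ bounded, and $\mathfrak{M}^\sharp_h\geq b_0h^2+c'h^2$ for a suitable $c'>0$ on the remaining ones. Because $\eps$ is small, $c_0=\inf\underline{\nu}^2$ is close to $\nu^2(0,0)=\sigma^{1/2}$ (recall $\sigma=\nu^4(0,0)$), so that $3c_0>c\,\sigma^{1/2}$ for the given $c<3$; taking also the support of $g^\star$ in $Z$ small enough, one concludes that the $k\geq2$ part of $\mathfrak{M}^\sharp_h$ lies above $b_0h^2+c\,\sigma^{1/2}h^3$ for $h$ small. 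Hence the spectrum of $\mathfrak{M}^\sharp_h$ below that threshold coincides with that of $\mathfrak{M}^{[1],\sharp}_h$ (in fact exactly).

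\emph{Assertion (c).} This follows by chaining the previous reductions. Put $h=\hbar^{1/2}$, so that $b_0\hbar+c\,\sigma^{1/2}\hbar^{3/2}=b_0h^2+c\,\sigma^{1/2}h^3$ and $\OO(h^\infty)=\OO(\hbar^\infty)$. For $\hbar$ small, $b_0+c\,\sigma^{1/2}\hbar^{1/2}<\min(3b_0,b_0+\tilde c\eps^2)$, so Corollary~\ref{pre-normal-form2-d}(b) identifies the spectrum of $\mathcal{L}_{\hbar,\bA}$ below $b_0\hbar+c\,\sigma^{1/2}\hbar^{3/2}$ with that of $\underline{\mathcal{N}}^{[1],\sharp}_\hbar=\underline{\mathfrak{N}}^{[1],\sharp}_h$ modulo $\OO(\hbar^\infty)$; Theorem~\ref{normal-form2} (through the exactly unitary $W_h$) together with assertion (a), applied with some $\eta\in(0,1)$ so that $c\,\sigma^{1/2}h^3=\OO(h^{2+\eta})$, identify this with the spectrum of $\mathfrak{M}^\sharp_h$ below $b_0h^2+c\,\sigma^{1/2}h^3$ modulo $\OO(h^\infty)$; and assertion (b) then identifies it with the spectrum of $\mathfrak{M}^{[1],\sharp}_h=\mathcal{M}^{[1],\sharp}_\hbar$ below the same threshold.

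\emph{Main obstacle.} The technical heart of the argument is the microlocalization estimate of assertion (a) (and, in the sharper form needed near $(x_2,\tilde\xi_2)=0$, of assertion (b)): one must show, uniformly in $h$, that the eigenfunctions with eigenvalue in $b_0h^2+\OO(h^{2+\eta})$ concentrate at a scale $h^\delta$, $\delta>0$, near $\{x_3=\tilde\xi_3=0\}$, so that the infinitely-flat remainder $\mathsf{R}_h$ is genuinely an $\OO(h^\infty)$ perturbation. This is an Agmon/IMS-type estimate for an operator whose transverse part $\underline{\nu}^2(x_2,\tilde\xi_2)(\tilde\xi_3^2+x_3^2)$ is an $x_2$-dependent rescaled harmonic oscillator — the positivity $\underline{\nu}^2\geq c_0>0$ being essential — and one must also keep track of the interaction of these estimates with the cutoff $\chi$ entering the $\sharp$-operators.
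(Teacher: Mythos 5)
Your plan follows the paper's strategy quite closely, and the parts you leave as ``the main obstacle'' are exactly the estimates the paper proves: for (a) the paper establishes microlocalization of the eigenfunctions of $\underline{\mathfrak{N}}^{[1],\sharp}_{h}$ and of $\mathfrak{M}^{\sharp}_{h}$ at scale $h^{\delta}$, $\delta\in(0,\eta/2)$, in $(x_{3},\tilde\xi_{3})$ (Propositions \ref{microloc-gN} and \ref{microloc-gM}), together with a rough counting lemma ($\mathsf{N}=\OO(h^{-4})$, Lemma \ref{number-N-uN'}) needed for the ``standard comparison'' you invoke; note that the paper's proof of the $(x_{3},\tilde\xi_{3})$-localization itself uses, inside the commutator estimate, the prior localization in $(x_{2},h\tilde\xi_{2})$ inherited from Proposition \ref{micro-uN}, so your remark about the interaction with the cutoffs is on target. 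For (c) your chaining through Corollary \ref{pre-normal-form2-d}\ref{pre-normal-form2-d-b}, part (a) and part (b) with the threshold bookkeeping is what the paper intends.

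The one genuine divergence is in (b). You bound the $k\geq 2$ sectors by the global infimum $c_{0}=\inf\underline{\nu}^{2}$ and then argue $3c_{0}>c\,\sigma^{1/2}$ ``because $\eps$ is small''. This is not automatic: $\underline{\nu}$ is a constant extension of $\nu$ away from the well, and the paper never specifies that the constant is $\geq\nu^{2}(0,0)$; if it were chosen smaller, your global operator inequality would give a threshold below $b_{0}h^{2}+c\,\sigma^{1/2}h^{3}$ and the argument would not close. The paper avoids this by using the $(x_{2},\xi_{2})$-microlocalization of Proposition \ref{microloc-gM}: eigenfunctions of $\mathfrak{M}^{[k],\sharp}_{h}$ below $\beta_{0}h^{2}$ are microlocalized in an arbitrarily small neighborhood of $(0,0)$, so $\underline{\nu}^{2}$ may be replaced there by $\nu^{2}(0,0)-2\eps$, independently of how the extension was made, and then the G\aa rding inequality in the $\hbar$-quantization gives the $-Ch^{4}$ error (your $b_{0}h^{2}-\OO(h^{4})$ bound for the $\underline b$ term indeed requires this $\hbar$-quantization viewpoint, not $h$-G\aa rding, which would only give $-\OO(h^{3})$ and ruin the gap). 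Your route can be repaired by stipulating that the constant value of $\underline\nu$ is chosen $\geq \nu^{2}(0,0)$ (a choice available in Corollary \ref{pre-normal-form2-d}, harmless since the spectra below the threshold are insensitive to the extension), but as written it silently relies on that choice; you should also note that the absorption of the $g^{\star}$ term works uniformly in $k$ only because its $Z$-support is small, which lets you dominate $h^{2}g^{\star}$ by $\eps(2k-1)h^{3}$ rather than splitting into ``bounded'' and ``large'' $k$.
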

Finally, we can perform a last Birkhoff normal form for the operator $\mathcal{M}^{[1],\sharp}_{\hbar}$ as soon as $(x_{2},\xi_{2})\mapsto\underline{b}(x_{2},\xi_{2},s(x_{2},\xi_{2}))$ admits a unique and non degenerate minimum at $(0,0)$. Under this additional assumption, $b$ admits a unique and non degenerate minimum at $(0,0,0)$.

Therefore we will use the following stronger assumption.
\begin{hyp}\label{non-deg2}
The function $b$ admits a unique and positive minimum at $0$ and it is non degenerate.
\end{hyp}

\begin{theo}\label{normal-form3}
Under Assumption \ref{non-deg2}, there exist a unitary $\hbar$-Fourier Integral Operator $Q_{\hbar^{\frac{1}{2}}}$ whose phase admits an expansion in powers of $\hbar^{\frac{1}{2}}$ such that
$$Q^*_{\hbar^{\frac{1}{2}}}\mathcal{M}^{[1],\sharp}_{\hbar}Q_{\hbar^{\frac{1}{2}}}=\mathcal{F}_{\hbar}+\mathcal{G}_{\hbar},$$
where
\begin{enumerate}[(a)]
\item $\mathcal{F}_{\hbar}$ is defined in Theorem \ref{t:main},
\item the remainder is in the form $\mathcal{G}_{\hbar}=\Op_{\hbar}^w
  \left(G_{\hbar}\right)$, with $G_{\hbar}=\hbar\OO(|z_{2}|^\infty)$.
\end{enumerate}
\end{theo}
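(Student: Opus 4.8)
The plan is to apply the standard semiclassical Birkhoff normal form with respect to the harmonic oscillator $\mathcal{K}_{\hbar}=\hbar^2D_{x_2}^2+x_2^2$, exactly as in the third-scale step, but now carried out in the $h$-quantization with $h=\hbar^{\frac12}$ and with the book-keeping organized by the weight of $(x_2,\tilde\xi_2,h)$. Starting from the operator $\mathcal{M}^{[1],\sharp}_{\hbar}=\Op_h^w(\mathsf{M}_h^{[1],\sharp})$ given in Corollary \ref{normal-form2-d}, whose symbol reads
\[
\mathsf{M}_h^{[1],\sharp}=h^2\underline{b}(x_2,h\tilde\xi_2,s(x_2,h\tilde\xi_2))+h^3\underline{\nu}^2(x_2,h\tilde\xi_2)+h^2g^\star(h,h,x_2,h\tilde\xi_2),
\]
I would first use Assumption \ref{non-deg2}: since $b$ has a unique non-degenerate minimum at $0$ which, by construction of the coordinates, corresponds to a critical point of $(x_2,\xi_2)\mapsto \underline{b}(x_2,\xi_2,s(x_2,\xi_2))$, a translation in phase space (depending smoothly on $h$, hence admitting an expansion in powers of $h=\hbar^{\frac12}$, produced by a metaplectic-type FIO) followed by a linear symplectic rotation diagonalizes the Hessian of the principal part; this puts the quadratic part of $h^{-2}\mathsf{M}_h^{[1],\sharp}$ into the model form $b_0+\frac{\dd}{2}(\tilde\xi_2^2+x_2^2)+O((x_2,\tilde\xi_2)^3)+O(h)$, where the identification of the leading coefficient with $\dd=\sqrt{\det\mathsf{Hess}_{q_0}b/\mathsf{Hess}_{q_0}b(\bB,\bB)}$ comes from tracking the successive changes of variables back to the original magnetic data (this is where the explicit constants $\sigma$, $\zeta$, $\dd$ of Theorem \ref{t:main} get pinned down, $\sigma^{\frac12}\hbar^{\frac32}$ being the contribution of $h^3\underline\nu^2(0,0)=\hbar^{3/2}\sqrt{\sigma}$ and the $-\frac{\zeta}{2\dd}\hbar^2$ term coming from the subprincipal/averaging correction involving $\|\nabla\nu^2(0,0)\|^2$).

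Next I would run the formal Birkhoff iteration: working in the space of formal power series $\mathscr{F}[\![x_2,\tilde\xi_2,h]\!]$ (with coefficients independent of $(x_3,\tilde\xi_3)$, which have already been eliminated), I solve the cohomological equations $\{\,\tilde\xi_2^2+x_2^2,\ \cdot\,\}=\ldots$ order by order. The key algebraic fact is the usual one for a one-dimensional oscillator: the kernel of the adjoint action of $\mathcal{K}$ on polynomials is exactly the set of polynomials in $\mathcal{K}=\tilde\xi_2^2+x_2^2$, and on the complement the operator $\{\mathcal{K},\cdot\}$ is invertible, so at each step the non-resonant part can be removed by conjugating with $\exp(\tfrac{i}{h}\Op_h^w\chi_j)$ for a suitable real symbol $\chi_j$, leaving behind only a function of $\mathcal{K}_{\hbar}$ plus a remainder of strictly higher weight. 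Summing the resulting symbol using Borel's theorem yields a smooth function $k^\star(h,\mathcal{K}_{\hbar})$ with the stated $O((h+|Z|)^{3/2})$ vanishing (the $3/2$ threshold being inherited from the order-three vanishing of $g^\star$ in $(J^{1/2},h^{1/2})$ established in Theorem \ref{normal-form2}), together with a tail $\mathcal{G}_{\hbar}=\Op_\hbar^w(G_\hbar)$ whose symbol is $O(|z_2|^\infty)$ — i.e. flat at the origin in the $(x_2,\xi_2)$ variables — multiplied by $\hbar$, as claimed. The cut-offs are reintroduced at the end exactly as in the previous corollaries so that the formal conjugation becomes a genuine microlocally unitary $\hbar$-FIO near the relevant energy shell; outside a fixed neighborhood of the minimum the symbol is already $\geq (b_0+c)\hbar$ by the localization \textit{à la} Agmon recalled in the introduction, so the flat remainder $\mathcal{G}_{\hbar}$ does not affect the spectrum below the threshold of Theorem \ref{t:main}.

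The main obstacle, as in the earlier Birkhoff steps, is not the formal solvability of the cohomological equations but the \emph{quantization and remainder control}: one must verify that each symbol $\chi_j$ and each truncated symbol belongs to a fixed class $S(m)$, that the composition of the (now infinitely many) elementary conjugations converges in a suitable sense after truncation at any finite order, and that the commutators produced by Weyl calculus — whose $\hbar$-expansions must be matched against the $h=\hbar^{1/2}$ grading — do not spoil the claimed orders. Concretely the delicate point is bounding $\mathcal{G}_{\hbar}$: one needs that, after removing all terms that are polynomial in $\mathcal{K}_{\hbar}$, what remains is both $O(\hbar^\infty)$ in the \emph{operator} sense on the range of the spectral projector below the threshold \emph{and} representable by a symbol flat in $z_2$, which requires combining the Birkhoff construction with the quantitative microlocalization estimates $x_2,\xi_2\sim\hbar^\delta$ announced in the introduction. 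Once those estimates are in place (they are proved by the same Agmon/IMS-localization arguments used for the previous two scales), the theorem follows.
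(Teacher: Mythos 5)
Your overall skeleton (translate and rotate at the minimum, formal Birkhoff normal form relative to a one--dimensional harmonic oscillator graded in half powers of $\hbar$, Borel summation and an Egorov-type quantization leaving a remainder flat in $z_2$) is indeed the paper's strategy, but your implementation contains a genuine scale error. You claim that, working in the $h$-quantization, a translation plus a linear symplectic rotation puts the quadratic part of $h^{-2}\mathsf{M}_h^{[1],\sharp}$ into the form $b_0+\tfrac{\dd}{2}(\tilde\xi_2^2+x_2^2)+\OO((x_2,\tilde\xi_2)^3)+\OO(h)$. This is impossible: the symbol depends on $\tilde\xi_2$ only through $h\tilde\xi_2$, so its Hessian at the critical point, viewed as a quadratic form in $(x_2,\tilde\xi_2)$, has determinant $\OO(h^2)$, and linear symplectic changes of variables preserve this determinant, so no such transformation can produce the nondegenerate model $\tfrac{\dd}{2}(x_2^2+\tilde\xi_2^2)$ modulo $\OO(h)$ (an $h^{1/2}$-dilation would give $h\cdot\tfrac{\dd}{2}(x_2^2+\tilde\xi_2^2)$ and turn the cubic remainders into half-integer powers of $h$, i.e.\ quarter powers of $\hbar$, which is precisely what the statement must avoid). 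Consequently the cohomological equations you propose, for $\{\tilde\xi_2^2+x_2^2,\cdot\}$ in the $(x_2,\tilde\xi_2,h)$-grading, are taken with respect to the wrong oscillator: its quantization has level spacing $h=\hbar^{1/2}$, whereas the operator that must emerge is $\hbar\,\tfrac{\dd}{2}\mathcal{K}_{\hbar}$ with spacing $\hbar$. The paper performs this third normal form entirely in the $\hbar$-quantization in the variables $(x_2,\xi_2)$: Taylor expansion of $M^{[1],\sharp}_{\hbar}$ gives $\hbar b_0+\tfrac{\hbar}{2}\mathsf{Hess}_{(0,0)}\underline b+\hbar^{3/2}\nu^2(0,0)+\hbar^{3/2}(cx_2+d\xi_2)+\hbar\OO((\hbar^{1/2},z_2)^3)$; a rotation diagonalizes the Hessian (producing $\dd$), an $\hbar^{1/2}$-translation completes the square (producing $-\tfrac{\zeta}{2\dd}\hbar^2$ with $\zeta=\|\nabla\nu^2(0,0)\|^2$, rather than a ``subprincipal/averaging'' correction), and the Birkhoff algorithm is run in $\R\formel{x_2,\xi_2,\hbar^{\frac12}}$ relative to $|z_2|^2$ with a single conjugation $e^{i\hbar^{-1}\Op_{\hbar}^w(\tilde\tau)}$.

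A second, lesser point: you locate the main difficulty in bounding $\mathcal{G}_{\hbar}$ in the operator sense on spectral subspaces, invoking Agmon/microlocalization estimates $x_2,\xi_2\sim\hbar^\delta$ and the convergence of infinitely many elementary conjugations. None of this is needed for the theorem as stated: one only needs the symbolic statement $G_{\hbar}=\hbar\OO(|z_2|^\infty)$, and in the paper this follows from conjugating once by $e^{i\hbar^{-1}\Op_{\hbar}^w(\tilde\tau)}$ with a Borel-summed, compactly supported generator, applying the Egorov theorem with the extra parameter $\mu$ (Theorem~\ref{theo:egorov2}) uniformly, expanding the conjugated symbol in powers of $\mu$ in $S(1)$ and setting $\mu=\hbar^{1/2}$; the resulting $Q_{\hbar^{1/2}}$ (rotation, translation, exponential) is exactly unitary. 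The spectral comparison where the flatness of $G_\hbar$ is exploited, together with the counting and microlocalization arguments you describe, belongs to Corollary~\ref{normal-form3-d}, not to Theorem~\ref{normal-form3}.
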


\begin{corollary}\label{normal-form3-d}
If $\eps$ and the support of $k^\star$ are small enough, we have
\begin{enumerate}[(a)]
\item\label{normal-form3-d-a} For all $\eta\in\left(0,\frac{1}{2}\right)$, the spectra of $\mathcal{M}^{[1],\sharp}_{\hbar}$ and $\mathcal{F}_{\hbar}$ below $b_{0}\hbar+\OO(\hbar^{1+\eta})$ coincide modulo $\OO(\hbar^\infty)$.
\item\label{normal-form3-d-b} For all $c\in(0,3)$, the spectra of $\mathcal{L}_{\hbar,\bA}$ and $\mathcal{F}_{\hbar}$ below $b_{0}\hbar+c\sigma^{\frac{1}{2}}\hbar^{\frac{3}{2}}$ coincide modulo $\OO(\hbar^\infty)$.
\end{enumerate}
\end{corollary}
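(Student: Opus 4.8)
The plan is to derive both assertions from Theorem~\ref{normal-form3}: for~(\ref{normal-form3-d-a}) I would remove the remainder $\mathcal{G}_{\hbar}$ by a microlocalization argument, and for~(\ref{normal-form3-d-b}) I would concatenate the result with Corollary~\ref{normal-form2-d}\,(\ref{normal-form2-d-c}).

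Let us start with~(\ref{normal-form3-d-a}). By Theorem~\ref{normal-form3}, $Q_{\hbar^{1/2}}$ conjugates $\mathcal{M}^{[1],\sharp}_{\hbar}$ unitarily onto $\mathcal{F}_{\hbar}+\mathcal{G}_{\hbar}$, with $\mathcal{G}_{\hbar}=\Op_{\hbar}^w(G_{\hbar})$ and $G_{\hbar}=\hbar\,\OO(\abs{z_{2}}^\infty)$; it therefore has the same spectrum, and it suffices to compare $\mathcal{F}_{\hbar}$ and $\mathcal{F}_{\hbar}+\mathcal{G}_{\hbar}$ below $b_{0}\hbar+\OO(\hbar^{1+\eta})$. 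Fix $\eta\in(0,\tfrac12)$ and $\delta\in(0,\tfrac\eta2)$. Using that $\mathcal{K}_{\hbar}=\hbar^2D_{x}^2+x^2$ has Weyl symbol $x^2+\xi^2$ and that $k^\star(\hbar^{1/2},Z)=\OO((\h+\abs{Z})^{3/2})$, the full $\hbar^{1/2}$-dependent symbol of $\mathcal{F}_{\hbar}$ reads
\[
b_{0}\hbar+\sigma^{1/2}\hbar^{3/2}-\frac{\zeta}{2\dd}\hbar^2+\hbar\Bigl(\frac{\dd}{2}(x^2+\xi^2)+\OO\bigl((\h+x^2+\xi^2)^{3/2}\bigr)\Bigr),
\]
and that of $\mathcal{F}_{\hbar}+\mathcal{G}_{\hbar}$ differs from it by $\hbar\,\OO(\abs{z_{2}}^\infty)$. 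Hence on $\{\abs{z_{2}}\geq\hbar^{\delta}\}$ both symbols are bounded from below by $b_{0}\hbar+\tfrac{\dd}{4}\hbar^{1+2\delta}$, which exceeds $b_{0}\hbar+\OO(\hbar^{1+\eta})$ for $\hbar$ small since $2\delta<\eta$. A G\r{a}rding/elliptic argument together with the harmonic-oscillator coercivity already used in Sections~\ref{BNF1}--\ref{BNF3} then shows that any eigenfunction of $\mathcal{F}_{\hbar}$ or of $\mathcal{F}_{\hbar}+\mathcal{G}_{\hbar}$ associated with an eigenvalue $\leq b_{0}\hbar+\OO(\hbar^{1+\eta})$ is microlocalized in $\{\abs{z_{2}}\leq\hbar^{\delta}\}$ modulo $\OO(\hbar^\infty)$, uniformly in $\hbar$.

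The next step is that $\mathcal{G}_{\hbar}$ does not see this region. Since $G_{\hbar}=\hbar\,\OO(\abs{z_{2}}^\infty)$, multiplying it by a cutoff symbol supported in $\{\abs{z_{2}}\leq 2\hbar^{\delta}\}$ yields, by the symbolic calculus, an operator which is $\OO(\hbar^{1+N\delta})$ for every $N$, hence $\OO(\hbar^\infty)$. Therefore, after microlocalization, every eigenfunction of $\mathcal{F}_{\hbar}+\mathcal{G}_{\hbar}$ below the threshold is an $\OO(\hbar^\infty)$-quasimode of $\mathcal{F}_{\hbar}$ at the same energy, and symmetrically every eigenfunction of $\mathcal{F}_{\hbar}$ is an $\OO(\hbar^\infty)$-quasimode of $\mathcal{F}_{\hbar}+\mathcal{G}_{\hbar}$. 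Comparing the ranks of the spectral projections on the relevant subintervals in the standard way (as in \cite[Appendix B]{HelSj89} and as in the corollaries of the previous sections) gives~(\ref{normal-form3-d-a}).

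Finally, for~(\ref{normal-form3-d-b}), fix $c\in(0,3)$ and any $\eta\in(0,\tfrac12)$. Since $c\sigma^{1/2}\hbar^{3/2}=o(\hbar^{1+\eta})$, for $\hbar$ small the threshold $b_{0}\hbar+c\sigma^{1/2}\hbar^{3/2}$ lies below $b_{0}\hbar+\OO(\hbar^{1+\eta})$, so~(\ref{normal-form3-d-a}) already identifies the spectra of $\mathcal{M}^{[1],\sharp}_{\hbar}$ and $\mathcal{F}_{\hbar}$ below $b_{0}\hbar+c\sigma^{1/2}\hbar^{3/2}$ modulo $\OO(\hbar^\infty)$; concatenating this with Corollary~\ref{normal-form2-d}\,(\ref{normal-form2-d-c}), which does the same for $\mathcal{L}_{\hbar,\bA}$ and $\mathcal{M}^{[1],\sharp}_{\hbar}$, gives the claim. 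I expect the only genuinely delicate point to be the uniform microlocalization at scale $\hbar^{\delta}$ with $\OO(\hbar^\infty)$ precision in the appropriate symbol class; everything else is bookkeeping on top of the machinery already developed in the paper.
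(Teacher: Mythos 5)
Your proposal follows essentially the same route as the paper: microlocalize the low-lying eigenfunctions of both $\mathcal{M}^{[1],\sharp}_{\hbar}$ and $\mathcal{F}_{\hbar}$ at scale $\hbar^{\delta}$, $\delta<\eta/2$, around $(0,0)$ in $(x_{2},\xi_{2})$ (the paper's proposition, proved as in Proposition~\ref{microloc-gN}), use that $G_{\hbar}=\hbar\OO(|z_{2}|^\infty)$ is then $\OO(\hbar^\infty)$ on the relevant microsupport, transfer quasimodes, and deduce~(\ref{normal-form3-d-b}) from~(\ref{normal-form3-d-a}) together with Corollary~\ref{normal-form2-d}\,(\ref{normal-form2-d-c}) because $\hbar^{\frac32}\ll\hbar^{1+\eta}$ for $\eta<\frac12$. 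The only ingredient the paper makes explicit that you leave implicit in ``the standard way'' is the rough eigenvalue-counting bound ($\mathsf{N}(\mathcal{M}^{[1],\sharp}_{\hbar},\beta_{0}\hbar)=\OO(\hbar^{-2})$ and $\mathsf{N}(\mathcal{F}_{\hbar},b_{0}\hbar+C\hbar^{1+\eta})=\OO(\hbar^{-1+\eta})$) needed so that the $\OO(\hbar^\infty)$ errors survive the comparison of spectral subspaces; this is routine and consistent with your sketch.
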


\begin{rem}
Since the spectral analysis of $\mathcal{F}_{\hbar}$ is straightforward, Item \eqref{normal-form3-d-b} of Corollary \ref{normal-form3-d} implies Theorem \ref{t:main}.
\end{rem}
The next sections are devoted to the proofs of our main results.

\noindent \resizebox{\textwidth}{!}{\begin{picture}(0,0)%
\includegraphics{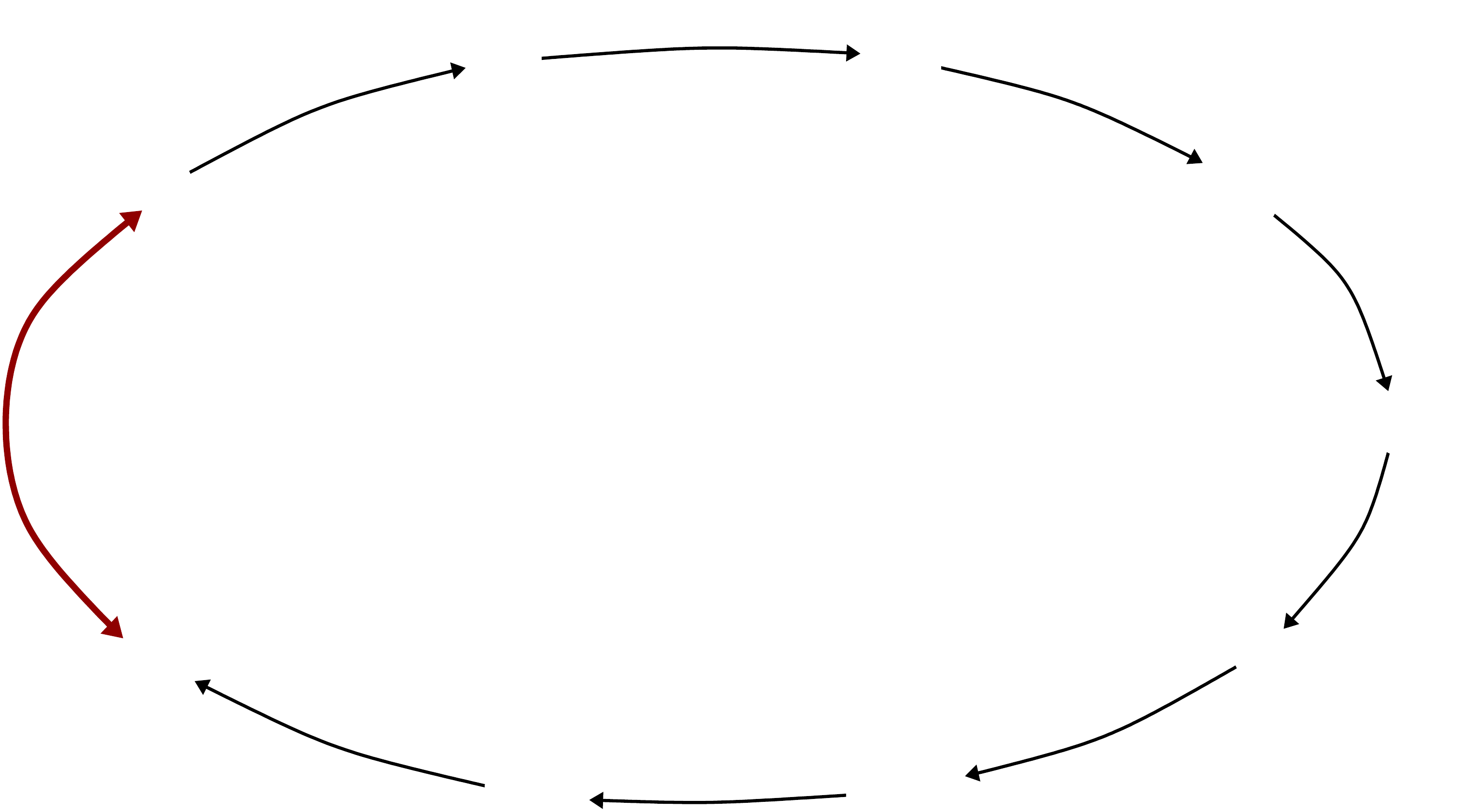}%
\end{picture}%
\setlength{\unitlength}{4144sp}%
\begingroup\makeatletter\ifx\SetFigFont\undefined%
\gdef\SetFigFont#1#2#3#4#5{%
  \reset@font\fontsize{#1}{#2pt}%
  \fontfamily{#3}\fontseries{#4}\fontshape{#5}%
  \selectfont}%
\fi\endgroup%
\begin{picture}(13996,7677)(95,-7822)
\put(451,-4291){\makebox(0,0)[lb]{\smash{{\SetFigFont{20}{24.0}{\rmdefault}{\mddefault}{\updefault}{\color[rgb]{0,0,0}Theorem \ref{t:main}}%
}}}}
\put(11611,-1996){\makebox(0,0)[lb]{\smash{{\SetFigFont{20}{24.0}{\rmdefault}{\mddefault}{\updefault}{\color[rgb]{0,0,0}$\underline{\mathcal{N}}_{\hbar}^{[1]}$}%
}}}}
\put(13141,-4201){\makebox(0,0)[lb]{\smash{{\SetFigFont{20}{24.0}{\rmdefault}{\mddefault}{\updefault}{\color[rgb]{0,0,0}$\underline{\mathfrak{N}}_{h}^{[1]}$}%
}}}}
\put(11926,-6406){\makebox(0,0)[lb]{\smash{{\SetFigFont{20}{24.0}{\rmdefault}{\mddefault}{\updefault}{\color[rgb]{0,0,0}$\mathfrak{M}_{ h}$}%
}}}}
\put(8506,-7666){\makebox(0,0)[lb]{\smash{{\SetFigFont{20}{24.0}{\rmdefault}{\mddefault}{\updefault}{\color[rgb]{0,0,0}$\mathfrak{M}^{[1]}_{h}$}%
}}}}
\put(4861,-7666){\makebox(0,0)[lb]{\smash{{\SetFigFont{20}{24.0}{\rmdefault}{\mddefault}{\updefault}{\color[rgb]{0,0,0}$\mathcal{M}^{[1]}_{\hbar}$}%
}}}}
\put(1531,-2041){\makebox(0,0)[lb]{\smash{{\SetFigFont{20}{24.0}{\rmdefault}{\mddefault}{\updefault}{\color[rgb]{0,0,0}$\mathcal{L}_{\hbar,\mathbf{A}}$}%
}}}}
\put(4636,-826){\makebox(0,0)[lb]{\smash{{\SetFigFont{20}{24.0}{\rmdefault}{\mddefault}{\updefault}{\color[rgb]{0,0,0}$\mathcal{N}_{\hbar}$}%
}}}}
\put(8461,-826){\makebox(0,0)[lb]{\smash{{\SetFigFont{20}{24.0}{\rmdefault}{\mddefault}{\updefault}{\color[rgb]{0,0,0}$\mathcal{N}_{\hbar}^{[1]}$}%
}}}}
\put(1441,-6496){\makebox(0,0)[lb]{\smash{{\SetFigFont{20}{24.0}{\rmdefault}{\mddefault}{\updefault}{\color[rgb]{0,0,0}$\mathcal{F}_{\hbar}$}%
}}}}
\put(13282,-1919){\rotatebox{300.0}{\makebox(0,0)[lb]{\smash{{\SetFigFont{17}{20.4}{\rmdefault}{\mddefault}{\updefault}{\color[rgb]{0,0,0}change of}%
}}}}}
\put(12972,-2098){\rotatebox{300.0}{\makebox(0,0)[lb]{\smash{{\SetFigFont{17}{20.4}{\rmdefault}{\mddefault}{\updefault}{\color[rgb]{0,0,0}semiclassical}%
}}}}}
\put(12661,-2276){\rotatebox{300.0}{\makebox(0,0)[lb]{\smash{{\SetFigFont{17}{20.4}{\rmdefault}{\mddefault}{\updefault}{\color[rgb]{0,0,0}parameter}%
}}}}}
\put(6324,-6797){\makebox(0,0)[lb]{\smash{{\SetFigFont{17}{20.4}{\rmdefault}{\mddefault}{\updefault}{\color[rgb]{0,0,0}change of}%
}}}}
\put(6324,-7155){\makebox(0,0)[lb]{\smash{{\SetFigFont{17}{20.4}{\rmdefault}{\mddefault}{\updefault}{\color[rgb]{0,0,0}semiclassical}%
}}}}
\put(6324,-7513){\makebox(0,0)[lb]{\smash{{\SetFigFont{17}{20.4}{\rmdefault}{\mddefault}{\updefault}{\color[rgb]{0,0,0}parameter}%
}}}}
\put(6031,-376){\makebox(0,0)[lb]{\smash{{\SetFigFont{17}{20.4}{\rmdefault}{\mddefault}{\updefault}{\color[rgb]{0,0,0}Corollary \ref{normal-form1-d}\ref{normal-form1-d-b}}%
}}}}
\put(2431,-1231){\rotatebox{20.0}{\makebox(0,0)[lb]{\smash{{\SetFigFont{17}{20.4}{\rmdefault}{\mddefault}{\updefault}{\color[rgb]{0,0,0}Theorem \ref{normal-form1}}%
}}}}}
\put(2656,-6811){\rotatebox{340.0}{\makebox(0,0)[lb]{\smash{{\SetFigFont{17}{20.4}{\rmdefault}{\mddefault}{\updefault}{\color[rgb]{0,0,0}Theorem \ref{normal-form3}}%
}}}}}
\put(12196,-5911){\rotatebox{60.0}{\makebox(0,0)[lb]{\smash{{\SetFigFont{17}{20.4}{\rmdefault}{\mddefault}{\updefault}{\color[rgb]{0,0,0}Theorem \ref{normal-form2}}%
}}}}}
\put(9631,-7171){\rotatebox{20.0}{\makebox(0,0)[lb]{\smash{{\SetFigFont{17}{20.4}{\rmdefault}{\mddefault}{\updefault}{\color[rgb]{0,0,0}Corollary \ref{normal-form2-d}\ref{normal-form2-d-b}}%
}}}}}
\put(9856,-781){\rotatebox{340.0}{\makebox(0,0)[lb]{\smash{{\SetFigFont{17}{20.4}{\rmdefault}{\mddefault}{\updefault}{\color[rgb]{0,0,0}Theorem \ref{pre-normal-form2}}%
}}}}}
\end{picture}%
}

\section{First Birkhoff normal form}\label{BNF1}
We assume that $\bB(0)\neq 0$ so that in some neighborhood $\Omega$ of $0$ the magnetic field does not vanish. Up to a
rotation in $\R^3$ (extended to a symplectic transformation in $\R^6$)
we may assume that $\bB(0)=\|\bB(0)\|\e_{3}$. In this neighborhood, we may defined the unit vector:
\begin{equation}\label{b}
\bb=\frac{\bB}{\|\bB\|}
\end{equation}
and find vectors $\bc$ and $\bd$ depending smoothly on $q$ such that $(\bb, \bc, \bd)$ is a direct orthonormal basis.

\subsection{Symplectic coordinates}

\subsubsection{Straightening the magnetic vector field}
We consider the form $\dx \alpha$ and we would like to find a diffeomorphism, in a neighborhood of $0$, $\chi$ such that $\chi(\hat q)=q$ and $\chi^*(\dx\alpha)=\dx \hat q_{1}\wedge \dx\hat q_{2}$. First, this is easy to find a local diffeomorphism $\varphi$ such that
$$\dr_{3}\varphi(\tilde q)=\bb(\varphi(\tilde q))$$
and $\varphi(\tilde q_{1},\tilde q_{2},0)=(\tilde q_{1},\tilde
q_{2},0)$. This is just the standard straigthening-out lemma for the
non-vanishing vector field $\bb$.

The vector $\e_3$ is in the kernel of $\varphi^*(\dx\alpha)$, which
implies that we have $\varphi^*(\dx\alpha) = f(\tilde{q})\dx \tilde{q}_1\wedge \dx
\tilde{q}_2$, for some smooth function $f$.

But since the form $\varphi^*(\dx\alpha)$ is closed, $f$ does not depend on $\tilde q_{3}$. This is then easy to find another diffeomorphism $\psi$, corresponding to the change of variables 
$$\hat q=\psi(\tilde q)=(\psi_{1}(\tilde q_{1},\tilde q_{2}),\psi_{2}(\tilde q_{1},\tilde q_{2}),\tilde q_{3})\, ,$$ 
such that
$$\psi^*(\varphi^*(\dx\alpha))=\dx\hat q_{1}\wedge \dx\hat q_{2}\,.$$
We let $\chi=\varphi\circ\psi$ and we notice that
\begin{equation}\label{chi*}
\chi^*(\dx \alpha)=\dx\hat q_{1}\wedge \dx\hat q_{2}\qquad\dr_{3}\chi(\hat q)=\bb(\chi(\hat q))\,,
\end{equation}
\begin{rem}\label{rem.Tchi}
It follows from \eqref{chi*} and \eqref{MB} that $\det T\chi=\|\bB\|^{-1}$.
\end{rem}
\subsubsection{Symplectic coordinates}
Let us consider the new parametrization of $\Sigma$ given by
\[
\begin{aligned}
  \iota: \hat{\Omega} & \longrightarrow \Sigma\\
\hat{q} & \mapsto (\chi(\hat{q}), A_1(\chi(\hat{q}))\,,
A_2(\chi(\hat{q})),A_3(\chi(\hat{q})) )\,,
\end{aligned}
\]
which gives a basis $(\f_1,\f_{2},\f_3)$ of $T\Sigma$~:
\[
\f_j= (T\chi(\e_j), T\bA\circ T\chi(\e_j)), \; j=1,2,3\,.
\]
Using \eqref{lemma1}, and the fact that $\f_3$ is in the
kernel of $\dx \alpha$, we find $\omega_0(\f_j,\f_3)=0$, $j=1,2\,$.
Finally, $\omega_0(f_1,f_2) = \dx \alpha(T\chi \e_1, T\chi \e_2) =
\chi^*(\dx \alpha)(\e_1,\e_2) = 1\,$.

The following vectors of $\R^3\times\R^3$ form a basis of the symplectic orthogonal of $T_{\iota(\hat q)}\Sigma$:
\begin{equation}\label{f45}
\f_{4}=\|\bB\|^{-1/2}(\bc, ({}^t T_{\chi(\hat q)}\bA)\bc),\quad\f_{5}=\|\bB\|^{-1/2}(\bd, ({}^t T_{\chi(\hat q)}\bA)\bd),
\end{equation}
so that
$$\omega_{0}(\f_{4},\f_{5})=-1.$$
We let $\f_{6}=(0,\bb)+\rho_{1} \f_{1}+\rho_{2} \f_{2}$ where $\rho_{1}$ and $\rho_{2}$ are determined so that $\omega_{0}(\f_{j},\f_6)=0$ for $j=1,2$.
We notice that $\omega_{0}(\f_{j},\f_{6})=0$ for $j=4,5$ and $\omega_{0}(\f_{3},\f_{6})=-1$.

\subsubsection{Diagonalizing the Hessian}
We recall that
$$H(q,p)=\|p-\bA(q)\|^2$$
so that, at a critical point $p=\bA(q)$, the Hessian is
$$T^2 H((U_{1},V_{1}),(U_{2},V_{2}))=2\langle V_{1}-T_{q}\bA(U_{1}),V_{2}-T_{q}\bA(U_{2})\rangle.$$
Let us notice that
$$T^2 H(\f_{4},\f_{5})=2\|\bB\|^{-1}\langle\bB\times \bc,\bB\times \bd\rangle=0,$$
$$T^2 H(\f_{4},\f_{6})=2\langle\bB\times \bc,\bb\rangle=0,$$
$$T^2 H(\f_{5},\f_{6})=2\langle\bB\times \bd,\bb\rangle=0.$$
The Hessian is diagonal in the basis $(\f_{4}, \f_{5}, \f_{6})$. Moreover we have
$$T^2 H(\f_{4},\f_{4})=\dx^2 H(\f_{5},\f_{5})=2\|\bB\|^{-1}\|\bB\times \bc\|^2=2\|\bB\|^{-1}\|\bB\times \bd\|^2=2\|\bB\|.$$
Finally we have:
$$T^2 H(\f_{6},\f_{6})=2.$$

Now we consider the local diffeomorphism:
$$(x,\xi)\mapsto \iota(x_{2},\xi_{2}, x_{3})+x_{1}\f_{4}(x_{2},\xi_{2}, x_{3})+\xi_{1}\f_{5}(x_{2},\xi_{2}, x_{3})+\xi_{3}\f_{6}(x_{2},\xi_{2}, x_{3}).$$
The Jacobian of this map is a symplectic matrix on $\Sigma$. We may apply the Moser-Weinstein argument (see \cite{weinstein-symplectic}) to make this map locally symplectic near $\Sigma$ modulo a change of variable which is tangent to the identity.

Near $\Sigma$, in these new coordinates, the Hamiltonian $H$ admits the expansion
\begin{equation}\label{hatH}
\hat H=H^0+\OO(|x_{1}|^3+|\xi_{1}|^3+|\xi_{3}|^3),
\end{equation}
where $\hat H$ denotes $H$ in the coordinates $(x_{1},x_{2}, x_{3},\xi_{1},\xi_{2},\xi_{3})$, 
and with
\begin{equation}\label{H0}
H^0=\xi_{3}^2+b(x_{2},\xi_{2},x_{3})(x_{1}^2+\xi_{1}^2), \quad b=\|\bB(x_{2},\xi_{2},x_{3})\|.
\end{equation}

\subsection{Semiclassical Birkhoff normal form}

\subsubsection{Birkhoff procedure in formal series}\label{subsec.Birkhoff}
Let us consider the space $\mathcal E$ of formal power series in
$(x_{1},\xi_{1},\xi_{3},\hbar)$ with coefficients smoothly
depending on $\tilde x=(x_{2}, \xi_{2}, x_{3})$: 
$$\mathcal
E=C^\infty_{x_{2}, \xi_{2},
  x_{3}}\formel{x_{1},\xi_{1},\xi_{3},\hbar}.$$ We endow $\mathcal E$
with the semiclassical Moyal product (with respect to all variables
$(x_{1},x_{2}, x_{3},\xi_{1},\xi_{2},\xi_{3})$) denoted by $\star$ and
the commutator of two series $\kappa_1$ and $\kappa_2$ is defined as
\[
[\kappa_1,\kappa_2]=\kappa_1\star\kappa_2-\kappa_2\star\kappa_1\,.
\]
The degree of
$x_{1}^{\alpha_1}\xi_{1}^{\alpha_2}\xi_{3}^\beta\hbar^\ell=z_{1}^{\alpha}\xi_{3}^\beta\hbar^\ell$
is $\alpha_1+\alpha_2+\beta+2\ell=|\alpha|+\beta+2\ell$. $\mathcal D_N$
denotes the space of monomials of degree $N$. $\mathcal O_N$ is the
space of formal series with valuation at least $N$. For any
$\tau,\gamma\in \mathcal E$, we denote
$\operatorname{ad}_\tau\gamma=[\tau,\gamma]$.

\begin{proposition}\label{formalBirkhoff}
Given $\gamma\in \mathcal O_3$, there exist formal power series
$\tau, \kappa\in \mathcal O_3$ such that
\[
e^{i\hbar^{-1}\operatorname{ad}_\tau}(H^0+\gamma)=H^0+\kappa\,,
\]
with $[\kappa, |z_1|^2]=0\,$.
\end{proposition}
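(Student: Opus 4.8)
\emph{Strategy and reduction to a homological equation.} The plan is the usual degree‑by‑degree normalization in the filtered Lie algebra $(\mathcal{E},\star)$, whose commutator satisfies $[\mathcal{O}_a,\mathcal{O}_b]\subset\mathcal{O}_{a+b-2}$. Write $\gamma=\sum_{N\geq 3}\gamma_N$ with $\gamma_N\in\mathcal{D}_N$, and build inductively on $N\geq 3$ homogeneous terms $\tau_N,\kappa_N\in\mathcal{D}_N$ with $[\kappa_N,|z_1|^2]=0$ such that
\[
e^{i\h^{-1}\operatorname{ad}_{\tau_N}}\cdots e^{i\h^{-1}\operatorname{ad}_{\tau_3}}(H^0+\gamma)=H^0+\kappa_3+\cdots+\kappa_N+\rho_{N+1},\qquad \rho_{N+1}\in\mathcal{O}_{N+1}.
\]
Since $\tau_N\in\mathcal{D}_N$ with $N\geq 3$, conjugating by $e^{i\h^{-1}\operatorname{ad}_{\tau_N}}$ leaves the terms of degree $\leq N-1$ unchanged and modifies the degree‑$N$ term only by adding $i\h^{-1}\operatorname{ad}_{\tau_N}H^0$ modulo $\mathcal{O}_{N+1}$. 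Thus, if $r_N\in\mathcal{D}_N$ denotes the degree‑$N$ part of $\rho_N$, the induction step is exactly the homological equation
\[
i\h^{-1}\operatorname{ad}_{\tau_N}H^0+r_N=\kappa_N,\qquad [\kappa_N,|z_1|^2]=0 .
\]

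\emph{Reduction of the homological operator.} The point that makes this solvable is that, on $\mathcal{D}_N$,
\[
i\h^{-1}\operatorname{ad}_{\tau_N}H^0\equiv -\,b\bigl(i\h^{-1}\operatorname{ad}_{|z_1|^2}\tau_N\bigr)\pmod{\mathcal{O}_{N+1}} .
\]
Indeed $H^0=\xi_3^2+b(x_2,\xi_2,x_3)|z_1|^2$. As $\xi_3^2$ is quadratic, $i\h^{-1}\operatorname{ad}_{\xi_3^2}$ is proportional to $\xi_3\partial_{x_3}$, which raises the degree by one (multiplication by $\xi_3$ adds one, while $\partial_{x_3}$ acts on coefficients only), so $\operatorname{ad}_{\xi_3^2}$ maps $\mathcal{D}_N$ into $\mathcal{D}_{N+1}$. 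Next, $i\h^{-1}[\tau_N,b|z_1|^2]=\{\tau_N,b|z_1|^2\}$ up to Moyal corrections which carry a factor $\h^{2k}$ ($k\geq1$) together with at least three derivatives of $b|z_1|^2$; since at most two such derivatives can act nontrivially on $|z_1|^2$, one checks these corrections lie in $\mathcal{O}_{N+1}$. Finally $\{\tau_N,b|z_1|^2\}=b\{\tau_N,|z_1|^2\}+|z_1|^2\{\tau_N,b\}$, and the last term lies in $\mathcal{O}_{N+1}$ (the prefactor $|z_1|^2$ raises the degree by two, while $\{\tau_N,b\}$ has degree at least $N-1$, as $b$ is independent of $(x_1,\xi_1,\xi_3,\h)$). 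This leaves $b\{\tau_N,|z_1|^2\}=-b\,(i\h^{-1}\operatorname{ad}_{|z_1|^2}\tau_N)$.

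\emph{Solving the homological equation.} Complexifying the first pair, $z_1=x_1+i\xi_1$, the operator $i\h^{-1}\operatorname{ad}_{|z_1|^2}=\{|z_1|^2,\cdot\}$ is diagonal on monomials: it multiplies $c(x_2,\xi_2,x_3)\,z_1^{p}\bar z_1^{q}\,\xi_3^{\beta}\h^{\ell}$ by a nonzero constant times $(p-q)$. Hence the restriction of $\operatorname{ad}_{|z_1|^2}$ to $\mathcal{D}_N$ splits $\mathcal{D}_N$ as $\ker\oplus\operatorname{Ran}$, the kernel being spanned by the terms with $p=q$, i.e. the polynomials in $|z_1|^2$ — which are exactly the elements of $\mathcal{D}_N$ commuting with $|z_1|^2$ — and $\operatorname{ad}_{|z_1|^2}$ being invertible on $\operatorname{Ran}$. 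Split $r_N=r_N^{\mathrm{res}}+r_N^{\mathrm{nr}}$ accordingly and set $\kappa_N:=r_N^{\mathrm{res}}$. Because $b>0$ (our first non‑degeneracy assumption) and multiplication by $b^{-1}$ preserves the $(p,q)$ bidegree, $b^{-1}r_N^{\mathrm{nr}}$ still lies in $\operatorname{Ran}$, so $\tau_N:=\bigl(i\h^{-1}\operatorname{ad}_{|z_1|^2}\bigr)^{-1}\bigl(b^{-1}r_N^{\mathrm{nr}}\bigr)\in\mathcal{D}_N$ solves the homological equation modulo $\mathcal{O}_{N+1}$, closing the induction.

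\emph{Conclusion and main obstacle.} The series $\kappa:=\sum_{N\geq 3}\kappa_N\in\mathcal{O}_3$ commutes with $|z_1|^2$. Applying Baker–Campbell–Hausdorff to the composition of two adjoint actions and using that $[\mathcal{O}_3,\mathcal{O}_N]\subset\mathcal{O}_{N+1}$ (pro‑nilpotency), one builds a Cauchy sequence $\tau^{(N)}$ with $e^{i\h^{-1}\operatorname{ad}_{\tau^{(N)}}}=e^{i\h^{-1}\operatorname{ad}_{\tau_N}}\cdots e^{i\h^{-1}\operatorname{ad}_{\tau_3}}$ and $\tau^{(N)}\equiv\tau^{(N-1)}\pmod{\mathcal{O}_N}$; its limit $\tau\in\mathcal{O}_3$ satisfies $e^{i\h^{-1}\operatorname{ad}_\tau}(H^0+\gamma)=H^0+\kappa$. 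The main obstacle is precisely the reduction of the second paragraph: one must verify that the $\xi_3^2$ part of $H^0$ and the $(x_2,\xi_2,x_3)$‑derivatives of $b$ contribute only in higher degree, so that the homological equation is governed by the explicitly invertible operator $\operatorname{ad}_{|z_1|^2}$ (up to harmless multiplication by the nonvanishing function $b$) rather than by the full, non‑diagonalizable $\operatorname{ad}_{H^0}$.
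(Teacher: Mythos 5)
Your proof is correct and follows essentially the same route as the paper: a degree-by-degree induction in the filtered algebra, with the same key lemma that, modulo higher degree, $i\hbar^{-1}\operatorname{ad}_{\tau'}H^0$ reduces to $b\, i\hbar^{-1}\operatorname{ad}_{\tau'}|z_1|^2$ (the $\xi_3^2$ part and the $(x_2,\xi_2,x_3)$-derivatives of $b$ only contribute in higher degree), so the homological equation is solved via the invertibility of $\operatorname{ad}_{|z_1|^2}$ on the non-resonant part together with $b\neq 0$. The only cosmetic differences are that the paper updates a single generator $\tau_N\mapsto\tau_N+\tau'$ rather than composing exponentials and invoking Baker--Campbell--Hausdorff, and that it leaves the $z_1^p\bar z_1^q$ diagonalization implicit, which you spell out.
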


\begin{proof}
Let $N\geq 1$. Assume that we have, for $\tau_N\in \mathcal O_3$,
\[
e^{i\hbar^{-1}\operatorname{ad}_{\tau_N}}(H^0+\gamma)=H^0+K_3+\cdots
+K_{N+1}+R_{N+2}+\mathcal O_{N+3}\,,
\]
with $K_i\in \mathcal D_i$, $[K_i, |z_1|^2]=0$ and $R_{N+2}\in
\mathcal D_{N+2}\,$.

Let $\tau^\prime\in \mathcal D_{N+2}$. Then we have
\[
e^{i\hbar^{-1}\operatorname{ad}_{\tau_N+\tau^\prime}}(H^0+\gamma)=H^0+K_3+\cdots
+K_{N+1}+K_{N+2}+\mathcal O_{N+3},
\]
with $K_{N+2}\in \mathcal D_{N+2}$ such that
\[
K_{N+2}=R_{N+2}+i\hbar^{-1}\operatorname{ad}_{\tau^\prime}H^0+\mathcal
O_{N+3}.
\]

\begin{lemma}
For $\tau^\prime \in \mathcal D_{N+2}$, we have
\[
i\hbar^{-1}\operatorname{ad}_{\tau^\prime}H^0=i\hbar^{-1}b\operatorname{ad}_{\tau^\prime}|z_1|^2+\mathcal
O_{N+3}.
\]
\end{lemma}

To prove this lemma, we observe that
\[
i\hbar^{-1}\operatorname{ad}_{\tau^\prime}H^0
=i\hbar^{-1}\operatorname{ad}_{\tau^\prime}\xi_{3}^2
+i\hbar^{-1}\operatorname{ad}_{\tau^\prime}(b(\tilde x)|z_1|^2).
\]
Let us write
\[
\tau^\prime = \sum_{|\alpha|+\beta+2l=N+2} a_{\alpha,\beta,l}(\tilde
x)z_{1}^{\alpha} \xi_{3}^\beta\hbar^l.
\]
Then, for the first term, we have
\[
\begin{aligned}
i\hbar^{-1}\operatorname{ad}_{\tau^\prime}\xi_{3}^2=&
\{\tau^\prime,\xi_{3}^2\}\\ =&
-2\xi_3\frac{\partial \tau^\prime}{\partial x_3}\\
=& -2 \sum_{|\alpha|+\beta+2\ell=N+2} \frac{\partial
a_{\alpha,\beta,\ell}}{\partial x_3}(\tilde x) z_{1}^{\alpha}
\xi_{3}^{\beta+1}\hbar^\ell\in \mathcal O_{N+3}\,.
\end{aligned}
\]
We also have
\[
\begin{aligned}
i\hbar^{-1}(\operatorname{ad}_{\tau^\prime}b(\tilde x))=&
\{\tau^\prime, b\}+\hbar^2\OO_{N}\\
=& \frac{\partial \tau^\prime}{\partial \xi_3}\frac{\partial
b}{\partial x_3}+\frac{\partial \tau^\prime}{\partial
\xi_2}\frac{\partial b}{\partial x_2}-\frac{\partial
\tau^\prime}{\partial x_2}\frac{\partial
b}{\partial \xi_2}+\OO_{N+1}\\
=& \sum_{|\alpha|+\beta+2\ell=N+2} \beta a(\tilde x)\frac{\partial b}{\partial x_3}
z_{1}^{\alpha}|z_1|^2 \xi_{3}^{\beta-1}\hbar^\ell+\mathcal O_{N+1} \in
\mathcal O_{N+1}\,.
\end{aligned}
\]
Therefore, for the second term, we get
\[
\begin{aligned}
i\hbar^{-1}\operatorname{ad}_{\tau^\prime}(b(\tilde x)|z_1|^2)=&
i\hbar^{-1}(\operatorname{ad}_{\tau^\prime}b(\tilde
x))|z_1|^2+i\hbar^{-1}b(\tilde
x)\operatorname{ad}_{\tau^\prime}|z_1|^2\\
=& i\hbar^{-1}b(\tilde
x)\operatorname{ad}_{\tau^\prime}|z_1|^2+\mathcal O_{N+3}\,,
\end{aligned}
\]
that completes the proof of the lemma.

By the lemma, we obtain that
\[
K_{N+2}=R_{N+2}+b\operatorname{ad}_{\tau^\prime}|z_1|^2,
\]
that we rewrite as
\[
R_{N+2}=K_{N+2}+i\hbar^{-1}b\operatorname{ad}_{|z_1|^2}\tau^\prime
=K_{N+2}+b\{|z_1|^2,\tau^\prime\}.
\]
Since $b(\tilde x)\neq 0$, we deduce the existence of $\tau^\prime$
and $K_{N+2}$ such that $K_{N+2}$ commutes with $|z_{1}|^2$.
\end{proof}
\subsubsection{Quantizing the formal procedure}
Let us now prove Theorem \ref{normal-form1}. Using \eqref{hatH} and
applying the Egorov theorem (see \cite{Ro87, Z13} or
Theorem~\ref{theo:egorov2}), we can find a unitary Fourier Integral
Operator $U_\hbar$, and such that
\[
U^*_\hbar \mathcal{L}_{\hbar,\bA} U_\hbar= C_0\hbar +
\Op_{\hbar}^w (H^0)+\Op^w_\hbar(r_\hbar),
\]
where the Taylor series (with respect to $x_{1}$, $\xi_{1}$, $\xi_3$,
$\hbar$) of $r_{\hbar}$ satisfies $r_\hbar^T=\gamma \in \mathcal O_3$
and $C_0$ is the value at the origin of the sub-principal symbol of
$U^*_\hbar \mathcal{L}_{\hbar,\bA} U_\hbar$. One can choose $U_\hbar$
such that the subprincipal symbol is preserved by
conjugation\footnote{This is sometimes called the \emph{Improved
    Egorov Theorem}. It was first discovered by Weinstein
  in~\cite{weinstein-maslov}, in the homogeneous setting. For the
  semiclassical case, see for instance \cite[Appendix~A]{HelSj89}.},
which implies $C_0=0$. Applying Proposition~\ref{formalBirkhoff}, we
obtain $\tau$ and $\kappa$ in $\mathcal O_3$ such that
\[
e^{i\hbar^{-1}\operatorname{ad}_\tau}(H^0+\gamma)=H^0+\kappa,
\]
with $[\kappa, |z_1|^2]=0$.

We can introduce a smooth symbol $a_{\h}$ with compact support such
that we have $a_{\h}^T=\tau$ in a neighborhood of the origin. By
Proposition~\ref{formalBirkhoff} and Theorem~\ref{theo:egorov4}, we
obtain that the operator 
\[
e^{i\hbar^{-1}\Op^w_\hbar(a_{\h})} (\Op_{\hbar}^w
(H^0)+\Op^w_\hbar(r_\hbar))e^{-i\hbar^{-1}\Op^w_\hbar(a_{\h})}
\] 
is a pseudodifferential operator such that the formal Taylor series of
its symbol is $H^0+\kappa$. In this application of
Theorem~\ref{theo:egorov4}, we have used the filtration $\OO_{j}$
defined in Section~\ref{subsec.Birkhoff}.  Since $\kappa$ commutes
with $|z_1|^2$, we can write it as a formal series in $|z_1|^2$:
\[
\kappa=\sum_{k\geq 3}\sum_{2\ell+2m+\beta=k}\hbar^\ell
c_{\ell,m}(x_2,\xi_2,x_3)|z_1|^{2m}\xi_3^\beta.
\]
This formal series can be reordered by using monomials
$(|z_1|^{2})^{\star m}$:
\[
\kappa=\sum_{k\geq 3}\sum_{2\ell+2m+\beta=k}\hbar^\ell
c^\star_{\ell,m}(x_2,\xi_2,x_3)(|z_1|^{2})^{\star m}\xi_3^\beta.
\]
Thanks to the Borel lemma, we may find a smooth function, with a compact support as small as we want with respect to $\hbar$, $I$ and $\xi_{3}$, $f^\star(\hbar,I, x_{2}, \xi_{2}, x_{3}, \xi_{3})$ such that its Taylor series with respect to $\hbar, I, \xi_{3}$ is
\[
\sum_{k\geq 3}\sum_{2\ell+2m+\beta=k}\hbar^\ell
c^\star_{\ell,m}(x_2,\xi_2,x_3) I^{m}\xi_3^\beta.
\]
This achieves the proof of Theorem \ref{normal-form1}.
\subsection{Spectral reduction to the first normal form}
This section is devoted to the proof of Corollary \ref{normal-form1-d}.

\subsubsection{Numbers of eigenvalues}\label{subsec.numbers}
\begin{lemma}\label{lem.spess}
Under Assumption \ref{hyp3-4}, there exists $h_{0}>0$ and $\eps_{0}>0$ such that for all $\hbar\in(0,h_{0})$,
$$\inf\mathfrak{s}_{\mathsf{ess}}(\mathcal{N}_{\hbar}^\sharp)\geq (\beta_{0}+\eps_{0})\hbar.$$
\end{lemma}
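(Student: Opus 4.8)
The plan is to show that away from $D(0,\eps)$ the operator $\mathcal{N}^\sharp_{\hbar}$ is bounded below by essentially $\tilde\beta_0\hbar$, which is strictly above $\beta_0\hbar$ because $\tilde\beta_0>\beta_0$. First I would split off the remainder: write $\mathcal{N}^\sharp_{\hbar}=\mathcal{A}_{\hbar}+\mathcal{K}_{\hbar}$ with $\mathcal{A}_{\hbar}=\hbar^2D_{x_3}^2+\mathcal{I}_{\hbar}\,\Op_\hbar^w(\underline b)$ and $\mathcal{K}_{\hbar}=\Op_\hbar^w(f^{\star,\sharp})$, and observe that $\mathcal{K}_{\hbar}$ is compact on $L^2(\R^3)$. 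Indeed $f^{\star,\sharp}=\chi f^{\star}$ has compact support in $(x_2,\xi_2,x_3)$ (by the cutoff $\chi$) and in $\xi_3$, and, being compactly supported in the variable $Z$, its range in the $x_1$-variable lies in a finite-dimensional spectral subspace of $\mathcal{I}_{\hbar}$, since the eigenvalues of $\mathcal{I}_{\hbar}$ are the $(2k-1)\hbar$. Expanding over the finitely many relevant eigenprojectors $\Pi_k$ of $\mathcal{I}_{\hbar}$ displays $\mathcal{K}_{\hbar}$ as a finite sum of terms $\Pi_k\otimes\Op_\hbar^w\big(\chi\,f^{\star}(\hbar,(2k-1)\hbar,\cdot)\big)$, each a rank-one projector tensored with the Weyl quantization of a genuinely compactly supported symbol on $T^*\R^2_{x_2,x_3}$, hence compact. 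By stability of the essential spectrum under compact perturbations, $\mathfrak{s}_{\mathsf{ess}}(\mathcal{N}^\sharp_{\hbar})=\mathfrak{s}_{\mathsf{ess}}(\mathcal{A}_{\hbar})$.

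Next I would reduce to the lowest Landau band. Since $\Op_\hbar^w(\underline b)$ and $\hbar^2D_{x_3}^2$ act only on $(x_2,x_3)$ and commute with $\mathcal{I}_{\hbar}$, diagonalizing $\mathcal{I}_{\hbar}=\sum_{k\geq 1}(2k-1)\hbar\,\Pi_k$ in $x_1$ identifies $\mathcal{A}_{\hbar}$ with $\bigoplus_{k\geq 1}\mathcal{A}^{(k)}_{\hbar}$ on $\bigoplus_{k\geq 1}L^2(\R^2_{x_2,x_3})$, where $\mathcal{A}^{(k)}_{\hbar}=\hbar^2D_{x_3}^2+(2k-1)\hbar\,\Op_\hbar^w(\underline b)$. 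Since $\underline b\geq b_0>0$, Gårding's inequality gives $\Op_\hbar^w(\underline b)\geq b_0-C\hbar\geq0$ for $\hbar$ small, so $\mathcal{A}^{(k)}_{\hbar}\geq\mathcal{A}^{(1)}_{\hbar}$ and $\inf\mathfrak{s}(\mathcal{A}^{(k)}_{\hbar})\to+\infty$; hence below any fixed energy only finitely many bands contribute, the low-lying eigenvalues of $\mathcal{A}^{(1)}_{\hbar}$ (which may sit near $b_0\hbar$) stay isolated of finite multiplicity in $\mathcal{A}_{\hbar}$, and $\inf\mathfrak{s}_{\mathsf{ess}}(\mathcal{A}_{\hbar})=\inf\mathfrak{s}_{\mathsf{ess}}(\mathcal{A}^{(1)}_{\hbar})$.

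It remains to bound $\inf\mathfrak{s}_{\mathsf{ess}}(\mathcal{A}^{(1)}_{\hbar})$ from below, with $\mathcal{A}^{(1)}_{\hbar}=\hbar^2D_{x_3}^2+\hbar\,\Op_\hbar^w(\underline b)$ on $L^2(\R^2_{x_2,x_3})$. Let $(u_n)$ be a normalized singular sequence at $\lambda\in\mathfrak{s}_{\mathsf{ess}}(\mathcal{A}^{(1)}_{\hbar})$ (Weyl's criterion, in the spirit of Persson's theorem \cite{Persson60}). By construction of $\underline b$, the set $\{\underline b\leq\tilde\beta_0\}$ is a bounded subset of $(x_2,\xi_2,x_3)$-space; choose $\chi=\chi(x_2,\xi_2,x_3)\in\mathcal{C}_0^\infty$ equal to $1$ on a bounded enlargement of it (note that $\chi$ does not depend on $\xi_3$). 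The key remark is that $\Op_\hbar^w(\chi)\,(\mathcal{A}^{(1)}_{\hbar}+1)^{-1}$ is compact: compact support in $(x_2,x_3,\xi_2)$ is supplied by $\chi$, and the required decay in $\xi_3$ by the term $\hbar^2D_{x_3}^2$, which, although it does not confine $x_3$, is elliptic in $\xi_3$. Applying this to $(\mathcal{A}^{(1)}_{\hbar}+1)u_n=(\lambda+1)u_n+o(1)$ gives $\Op_\hbar^w(\chi)\,u_n\to0$, so $u_n$ is microlocalized, up to $o(1)$, in $\{\underline b>\tilde\beta_0\}$; a microlocal Gårding inequality then yields $\hbar\langle\Op_\hbar^w(\underline b)\,u_n,u_n\rangle\geq\hbar(\tilde\beta_0-C\hbar)+o(1)$, whence $\lambda=\langle\mathcal{A}^{(1)}_{\hbar}u_n,u_n\rangle+o(1)\geq\hbar(\tilde\beta_0-C\hbar)$. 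Therefore $\inf\mathfrak{s}_{\mathsf{ess}}(\mathcal{N}^\sharp_{\hbar})\geq(\tilde\beta_0-C\hbar)\hbar$, and it suffices to fix $\eps_0\in(0,\tilde\beta_0-\beta_0)$ and $h_0>0$ with $\tilde\beta_0-Ch_0\geq\beta_0+\eps_0$.

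The delicate point is this last step. Because $\underline b$ does not depend on $\xi_3$ — the symplectic form degenerates along the magnetic field lines — the localization $\chi$ cannot be taken compactly supported in $\xi_3$, and one must use $\hbar^2D_{x_3}^2$ to recover compactness in that direction; this is exactly why the hypothesis $\{b\leq\tilde\beta_0\}\subset D(0,\eps)$ from Assumption~\ref{hyp3-4} is needed, as it forces a singular sequence to escape into $\{\underline b>\tilde\beta_0\}$ rather than to concentrate near the interior minimum value $b_0<\beta_0$. A minor additional point is carrying out Step 1 for the operator-valued symbol $f^{\star,\sharp}$, where the compact support of $f^\star$ in $Z$ is what yields finite rank in the $x_1$-variable.
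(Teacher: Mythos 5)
Your proof is correct, but it takes a genuinely different (and longer) route than the paper's. The paper neither removes $\Op_{\hbar}^w(f^{\star,\sharp})$ as a compact perturbation nor decomposes into Landau bands: it chooses a compactly supported function $\chi$ and $\eps_{0}>0$ with $\xi_{3}^2+\underline{b}+\chi\geq\beta_{0}+2\eps_{0}$ (possible precisely because \eqref{hyp4} persists for $\underline{b}$), observes that adding $(1-\eta)\hbar\Op_{\hbar}^w(\chi)$ is a relatively compact perturbation and hence does not move the essential spectrum, and then bounds the \emph{whole} spectrum of the perturbed operator from below: the remainder is absorbed via the Calderon--Vaillancourt theorem (using the small support of $f^{\star}$) to get $\mathcal{N}^\sharp_{\hbar}\geq(1-\eta)\Op_{\hbar}^w\left(\xi_{3}^2+|z_{1}|^2 b\right)$, after which $\mathcal{I}_{\hbar}\geq\hbar$ and the G\aa rding inequality give the lower bound $\hbar(1-\eta)(\beta_{0}+2\eps_{0}-C\hbar)$. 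This \enquote{fill the well with a compactly supported potential} trick replaces your entire third step (Weyl singular sequences, compactness of $\Op_{\hbar}^w(\chi)(\mathcal{A}^{(1)}_{\hbar}+1)^{-1}$, microlocal G\aa rding) by a single global estimate and makes the band decomposition unnecessary. What your route buys is a sharper structural statement --- the essential spectrum of $\mathcal{N}^\sharp_{\hbar}$ is identified with that of the band operators $\hbar^2D_{x_{3}}^2+(2k-1)\hbar\Op_{\hbar}^w(\underline{b})$ --- at the cost of several fixed-$\hbar$ verifications that should be written out: that $\Op_{\hbar}^w(f^{\star,\sharp})$ is a finite sum (only finitely many Landau levels meet the $Z$-support at fixed $\hbar$) of rank-one projectors tensored with quantizations of compactly supported symbols, hence compact; that the symbol of $\Op_{\hbar}^w(\chi)(\hbar^2D_{x_{3}}^2+1)^{-1/2}$ tends to zero at infinity in phase space, which together with the boundedness of $(\hbar^2D_{x_{3}}^2+1)^{1/2}(\mathcal{A}^{(1)}_{\hbar}+1)^{-1}$ justifies your compactness claim; and the cross terms in $\langle\Op_{\hbar}^w(\underline{b})u_{n},u_{n}\rangle$ when $\underline{b}$ is unbounded, which are handled by composing $\underline{b}$ with a slightly larger cutoff before passing to the limit. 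Finally, only the inequality $\inf\mathfrak{s}_{\ess}(\mathcal{A}_{\hbar})\geq\inf\mathfrak{s}_{\ess}(\mathcal{A}^{(1)}_{\hbar})$ is needed, and it follows from the monotonicity $\mathcal{A}^{(k)}_{\hbar}\geq\mathcal{A}^{(1)}_{\hbar}$ you already noted.
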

\begin{proof}
By using the assumption we may consider a smooth function $\chi$ with compact support and $\eps_{0}>0$ such that 
$$\xi_{3}^2+b(x_{2}, \xi_{2},x_{3})+\chi(x_{2}, x_{3}, \xi_{2}, \xi_{3})\geq \beta_{0}+2\eps_{0}.$$
Then, given $\eta\in(0,1)$ and estimating the second term in \eqref{Nheta} by using that the support of $f^\star$ is chosen small enough and the semiclassical Calderon-Vaillancourt theorem, we notice that, for $\hbar$ small enough,
\begin{equation}\label{lb-Nh}
\mathcal{N}_{\hbar}^\sharp\geq (1-\eta)\Op_{\hbar}^w\left(\xi_{3}^2+|z_{1}|^2b(x_{2}, \xi_{2},x_{3})\right).
\end{equation}
Since the essential spectrum is invariant by (relatively) compact perturbations, we have
$$\mathfrak{s}_{\mathsf{ess}}\left(\mathcal{N}_{\hbar}^\sharp+ (1-\eta)\hbar\Op_{\hbar}^w\chi(x_{2}, x_{3}, \xi_{2}, \xi_{3})\right)=\mathfrak{s}_{\mathsf{ess}}\left(\mathcal{N}_{\hbar}^\sharp\right).$$
Hence 
$$\inf\mathfrak{s}_{\mathsf{ess}}\left(\mathcal{N}_{\hbar}^\sharp\right)\geq \inf \mathfrak{s}\left(\mathcal{N}_{\hbar}^\sharp+ (1-\eta)\hbar\Op_{\hbar}^w\chi(x_{2}, x_{3}, \xi_{2}, \xi_{3})\right).$$
In order to bound the r.h.s. from below, we write
\begin{align*}
&\mathcal{N}_{\hbar}^\sharp+ (1-\eta)\hbar\Op_{\hbar}^w\chi(x_{2}, x_{3}, \xi_{2}, \xi_{3})\\
&\geq (1-\eta)\Op_{\hbar}^w\left(\xi_{3}^2+|z_{1}|^2b(x_{2}, \xi_{2},x_{3})\right)+ (1-\eta)\hbar\Op_{\hbar}^w\chi(x_{2}, x_{3}, \xi_{2}, \xi_{3})\\
&\geq  \hbar(1-\eta)\Op_{\hbar}^w\left(\xi_{3}^2+b(x_{2}, \xi_{2},x_{3})+\chi(x_{2}, x_{3}, \xi_{2}, \xi_{3})\right)\\
&\geq  \hbar(1-\eta)(\beta_{0}+2\eps_{0}-C\hbar),
\end{align*}
where we have used the semiclassical G\aa rding inequality. Taking $\eta$ and then $\hbar$ small enough, this concludes the proof.
\end{proof}
By using the Hilbertian decomposition given by the Hermite functions $(e_{k,\hbar})_{k\geq 1}$ associated with $\mathcal{I}_{\hbar}$, we notice that
$$\mathcal{N}_{\hbar}^\sharp=\bigoplus_{k\geq 1} \mathcal{N}_{\hbar}^{[k],\sharp},$$
where
\begin{equation}\label{Nhk}
\mathcal{N}_{\hbar}^{[k],\sharp}=\hbar^2D_{x_{3}}^2+(2k-1)\hbar\Op_{\hbar}^w b+\Op_{\hbar}^w f^{\star,\sharp}(\hbar,(2k-1)\hbar, x_{2}, \xi_{2}, x_{3}, \xi_{3}),
\end{equation}
acting on $L^2(\R^2)$.
\begin{lemma}
For all $\eta\in(0,1)$, there exist $C>0$ and $h_{0}>0$ such that for all $k\geq 1$ and $\hbar\in(0,h_{0})$, we have $\mathfrak{s}_{1}(\mathcal{N}_{\hbar}^{[k],\sharp})\geq (1-2\eta)b_{0}(2k-1)\hbar$.
\end{lemma}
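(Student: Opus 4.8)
The plan is to obtain the bound block by block, as an immediate consequence of the lower bound \eqref{lb-Nh} combined with the G\aa rding inequality. First I would recall that, once the support of $f^\star$ is chosen small enough, the estimate \eqref{lb-Nh} holds: for any fixed $\eta\in(0,1)$ and all $\hbar$ small enough,
\[
\mathcal{N}_{\hbar}^\sharp\geq (1-\eta)\,\Op_{\hbar}^w\left(\xi_{3}^2+|z_{1}|^2\,b(x_{2},\xi_{2},x_{3})\right),
\]
where $b$ stands for the (bounded) extension of $\|\bB\|$ used in \eqref{Nheta}, so that $b\geq b_{0}$. Both sides commute with the orthogonal projections $\Pi_k$ onto the Hermite eigenspaces of $\mathcal{I}_{\hbar}=\hbar^2 D_{x_1}^2+x_1^2$: for the left-hand side this is exactly the decomposition $\mathcal{N}_{\hbar}^\sharp=\bigoplus_{k\geq1}\mathcal{N}_{\hbar}^{[k],\sharp}$ recalled above, and for the right-hand side it follows from the factorization $\Op_{\hbar}^w(|z_{1}|^2 b)=\mathcal{I}_{\hbar}\,\Op_{\hbar}^w b$ (the variables $(x_1,\xi_1)$ and $(x_2,\xi_2,x_3)$ being separated, the Weyl quantization is a tensor product and $\Op_{\hbar}^w(|z_1|^2)=\mathcal{I}_{\hbar}$). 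Restricting both operators to the range of $\Pi_k$, on which $\mathcal{I}_{\hbar}$ acts as the scalar $(2k-1)\hbar$, preserves the operator inequality and yields
\[
\mathcal{N}_{\hbar}^{[k],\sharp}\geq (1-\eta)\left(\hbar^2 D_{x_3}^2+(2k-1)\hbar\,\Op_{\hbar}^w b\right)\geq (1-\eta)(2k-1)\hbar\,\Op_{\hbar}^w b,
\]
the last step merely discarding the non-negative term $\hbar^2 D_{x_3}^2$.

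It then remains to bound $\Op_{\hbar}^w b$ from below. Since $b-b_{0}\geq 0$ is a bounded smooth symbol lying in a fixed class that does not depend on $k$, the semiclassical G\aa rding inequality provides a constant $C>0$, independent of $k$ and of $\hbar$, such that $\Op_{\hbar}^w b\geq b_{0}-C\hbar$. Therefore
\[
\mathcal{N}_{\hbar}^{[k],\sharp}\geq (1-\eta)(2k-1)\hbar\,(b_{0}-C\hbar)=(1-\eta)\Bigl(1-\tfrac{C\hbar}{b_{0}}\Bigr)\,b_{0}\,(2k-1)\hbar ,
\]
and for $\hbar$ small enough — depending only on $\eta$, $b_{0}$, $C$ — the prefactor exceeds $1-2\eta$. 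As the right-hand side is a multiple of the identity, this gives $\mathfrak{s}_{1}(\mathcal{N}_{\hbar}^{[k],\sharp})\geq (1-2\eta)\,b_{0}\,(2k-1)\hbar$ for every $k\geq 1$, which is the claim.

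The only point that requires attention — and is in fact what makes the statement non-trivial — is that all constants be uniform in $k$; this holds because, after pulling the scalar $(2k-1)\hbar$ out of the $|z_1|^2$-term, the residual symbol $b$, hence the G\aa rding constant $C$ and the threshold on $\hbar$, no longer depend on $k$. (Note also that, $f^\star$ being compactly supported in its second argument, only finitely many of the blocks $\mathcal{N}_{\hbar}^{[k],\sharp}$ differ from $\hbar^2 D_{x_3}^2+(2k-1)\hbar\,\Op_{\hbar}^w b$ for a given $\hbar$; but the argument above does not use this.) There is essentially no hard obstacle here: everything reduces to re-reading \eqref{lb-Nh} on each Hermite block.
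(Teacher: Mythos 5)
Your proof is correct and takes essentially the same route as the paper: the paper applies \eqref{lb-Nh} to tensor products $\psi=e_{k,\hbar}(x_{1})\varphi(x_{2},x_{3})$, which is exactly your restriction to the Hermite blocks, and then concludes with the G\aa rding inequality $\Op_{\hbar}^w b\geq b_{0}-C\hbar$ and the min-max principle. Your extra remarks on the tensor factorization and the uniformity of the constants in $k$ only make explicit what the paper leaves implicit.
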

\begin{proof}
Applying \eqref{lb-Nh}
to $\psi(x_{1}, x_{2}, x_{3})=\varphi(x_{2}, x_{3}) e_{k,\hbar}(x_{1})$, we infer that
$$\langle\mathcal{N}^{[k],\sharp}_{\hbar}\varphi,\varphi\rangle\geq (2k-1)\hbar (1-\eta) \langle\Op_{\hbar}^w (b) \varphi,\varphi\rangle.$$
With the G\aa rding inequality, we get
$$\langle\Op_{\hbar}^w (b) \varphi,\varphi\rangle\geq (b_{0}-C\hbar)\|\varphi\|^2\,,$$
and the conclusion follows by the min-max principle.
\end{proof}
We immediately deduce the following proposition.
\begin{proposition}\label{finite-k}
We have the following descriptions of the low lying spectrum of $\mathcal{N}_{\hbar}^\sharp$.
\begin{enumerate}[(a)]
\item There exist $\hbar_{0}>0$ and $K\in\N$ such that, for $\hbar\in(0,\hbar_{0})$, the spectrum of $\mathcal{N}_{\hbar}^\sharp$ lying below $\beta_{0}\hbar$ is contained in the union $\bigcup_{k=1}^K\spe\left(\mathcal{N}_{\hbar}^{[k],\sharp}\right)$.
\item If $c\in(0,\min(3b_{0},\beta_{0}))$, then there exists $\hbar_{0}>0$ such that for all $\hbar\in(0,\hbar_{0})$ the eigenvalues of $\mathcal{N}_{\hbar}^\sharp$ lying below $c\hbar$ coincide with the eigenvalues of $\mathcal{N}_{\hbar}^{[1],\sharp}$ below $c\hbar$.
\end{enumerate}
\end{proposition}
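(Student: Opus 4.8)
The proposition is a soft consequence of the orthogonal decomposition $\mathcal{N}_{\hbar}^\sharp=\bigoplus_{k\geq 1}\mathcal{N}_{\hbar}^{[k],\sharp}$ together with the two lemmas just proved (the lower bound $\mathfrak{s}_{1}(\mathcal{N}_{\hbar}^{[k],\sharp})\geq(1-2\eta)b_{0}(2k-1)\hbar$ and Lemma~\ref{lem.spess} on the essential spectrum). I will use the standard facts about orthogonal direct sums of self-adjoint operators: $\spe\bigl(\bigoplus_k A_k\bigr)=\overline{\bigcup_k\spe(A_k)}$, each $\mathfrak{s}_{\mathsf{ess}}(A_k)\subset\mathfrak{s}_{\mathsf{ess}}\bigl(\bigoplus_k A_k\bigr)$, and $\lambda$ is an eigenvalue of $\bigoplus_k A_k$, with a given multiplicity, if and only if it is an eigenvalue of some $A_k$, the eigenspaces adding up. The only care needed is the order of quantifiers: one fixes $\eta$ (hence a cut-off index and an $\hbar_0$ depending on $\eta$) \emph{before} letting $\hbar\to0$.

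\textbf{Item (a).} First fix $\eta\in(0,1)$ so small that $(1-2\eta)b_0>0$, and choose $K\in\N$ with $(1-2\eta)b_0(2K+1)>\beta_0$; this is possible since $\beta_0$ is a fixed finite constant (and $\beta_0>b_0>0$). The preceding lemma then provides $\hbar_0>0$ such that, for $\hbar\in(0,\hbar_0)$ and every $k\geq K+1$,
\[
\inf\spe\bigl(\mathcal{N}_{\hbar}^{[k],\sharp}\bigr)=\mathfrak{s}_1\bigl(\mathcal{N}_{\hbar}^{[k],\sharp}\bigr)\geq(1-2\eta)b_0(2k-1)\hbar\geq(1-2\eta)b_0(2K+1)\hbar>\beta_0\hbar .
\]
Hence $\bigcup_{k\geq K+1}\spe(\mathcal{N}_{\hbar}^{[k],\sharp})$ is contained in a half-line bounded below by a value strictly larger than $\beta_0\hbar$. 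Since $\bigcup_{k=1}^{K}\spe(\mathcal{N}_{\hbar}^{[k],\sharp})$ is a finite union of closed sets, hence closed, the closure operation in $\spe(\mathcal{N}_{\hbar}^\sharp)=\overline{\bigcup_{k\geq1}\spe(\mathcal{N}_{\hbar}^{[k],\sharp})}$ adds nothing below $\beta_0\hbar$, and the part of $\spe(\mathcal{N}_{\hbar}^\sharp)$ lying below $\beta_0\hbar$ is contained in $\bigcup_{k=1}^{K}\spe(\mathcal{N}_{\hbar}^{[k],\sharp})$.

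\textbf{Item (b).} Given $c\in(0,\min(3b_0,\beta_0))$, pick $\eta$ so small that $3(1-2\eta)b_0>c$. By the same lemma, for $\hbar$ small enough and every $k\geq2$,
\[
\inf\spe\bigl(\mathcal{N}_{\hbar}^{[k],\sharp}\bigr)\geq(1-2\eta)b_0(2k-1)\hbar\geq3(1-2\eta)b_0\hbar>c\hbar ,
\]
so none of the summands with $k\geq2$ contributes any spectrum below $c\hbar$; only $k=1$ does. Therefore the spectrum of $\mathcal{N}_{\hbar}^\sharp$ below $c\hbar$ coincides with that of $\mathcal{N}_{\hbar}^{[1],\sharp}$ below $c\hbar$. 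Finally, since $c<\beta_0$, Lemma~\ref{lem.spess} gives $\inf\mathfrak{s}_{\mathsf{ess}}(\mathcal{N}_{\hbar}^\sharp)\geq(\beta_0+\eps_0)\hbar>c\hbar$ for $\hbar$ small; combined with $\mathfrak{s}_{\mathsf{ess}}(\mathcal{N}_{\hbar}^{[1],\sharp})\subset\mathfrak{s}_{\mathsf{ess}}(\mathcal{N}_{\hbar}^\sharp)$, this shows that the spectrum of each of $\mathcal{N}_{\hbar}^\sharp$ and $\mathcal{N}_{\hbar}^{[1],\sharp}$ below $c\hbar$ is purely discrete. Hence ``spectrum below $c\hbar$'' and ``eigenvalues below $c\hbar$'' agree for both operators, and the claimed coincidence of eigenvalues follows.

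\textbf{Main difficulty.} There is no genuine obstacle: once the direct-sum structure and the two lemmas are in hand, the statement is pure bookkeeping. The only subtle point is to make sure that the auxiliary small parameter $\eta$ and the truncation index $K$ are chosen first, so that the ensuing $\hbar_0$ is legitimate, and to invoke the inclusion $\mathfrak{s}_{\mathsf{ess}}(\mathcal{N}_{\hbar}^{[1],\sharp})\subset\mathfrak{s}_{\mathsf{ess}}(\mathcal{N}_{\hbar}^\sharp)$ so that Lemma~\ref{lem.spess} transfers to the reduced operator in item~(b).
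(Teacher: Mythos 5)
Your proof is correct and follows exactly the route the paper intends: the paper deduces Proposition~\ref{finite-k} \enquote{immediately} from the Hilbertian decomposition $\mathcal{N}_{\hbar}^\sharp=\bigoplus_{k\geq 1}\mathcal{N}_{\hbar}^{[k],\sharp}$, the uniform lower bound $\mathfrak{s}_{1}(\mathcal{N}_{\hbar}^{[k],\sharp})\geq(1-2\eta)b_{0}(2k-1)\hbar$, and Lemma~\ref{lem.spess}, which are precisely the ingredients you combine. Your version merely makes explicit the bookkeeping (choice of $\eta$ and $K$ before $\hbar_0$, closedness of the finite union, discreteness below $c\hbar$) that the paper leaves to the reader.
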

We deduce the following proposition.
\begin{corollary}\label{numbers-L-N}
Under Assumption \ref{hyp3}, we have
$$\mathsf{N}\left(\mathcal{L}_{\hbar,\bA},\beta_{0}\hbar\right)=\OO(\hbar^{-3/2}),\quad \mathsf{N}\left(\mathcal{N}_{\hbar}^\sharp,\beta_{0}\hbar\right)=\OO(\hbar^{-2}).$$
\end{corollary}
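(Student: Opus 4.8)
\emph{Sketch of the proof.} The plan is to bound each of the two counting functions by a genuinely semiclassical argument that confines the relevant part of phase space.

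For $\mathcal{L}_{\hbar,\bA}$ I would use the magnetic confinement estimate \eqref{lbB}: for a fixed small $\epsilon\in(0,1)$,
\[
\mathcal{L}_{\hbar,\bA}\geq \epsilon\,(-i\hbar\nabla-\bA)^2+(1-\epsilon)(1-C_0\hbar^{\frac14})\,\hbar\,b(q),
\]
so that, by the min-max principle, $\mathsf{N}(\mathcal{L}_{\hbar,\bA},\beta_0\hbar)\leq\mathsf{N}\bigl((-i\hbar\nabla-\bA)^2-V_\hbar,0\bigr)$, where $V_\hbar(q)=\epsilon^{-1}\bigl(\beta_0\hbar-(1-\epsilon)(1-C_0\hbar^{\frac14})\hbar\,b(q)\bigr)$. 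If $\epsilon$ is chosen small enough that $\beta_0\bigl((1-\epsilon)(1-C_0\hbar^{\frac14})\bigr)^{-1}<\tilde\beta_0$ for small $\hbar$, then \eqref{hyp4} forces $(V_\hbar)_+$ to be supported in $D(q_0,\eps)$, where moreover $(V_\hbar)_+\leq C\hbar$. Invoking the Cwikel--Lieb--Rozenblum bound together with the diamagnetic inequality (or, alternatively, a Dirichlet--Neumann bracketing on $D(q_0,\eps)$) then gives
\[
\mathsf{N}\bigl((-i\hbar\nabla-\bA)^2-V_\hbar,0\bigr)\leq C\hbar^{-3}\!\int_{\R^3}(V_\hbar)_+^{3/2}\,\dx q\leq C'\,\hbar^{-3}\cdot\hbar^{3/2}\,\abs{D(q_0,\eps)}=\OO(\hbar^{-3/2}),
\]
whence $\mathsf{N}(\mathcal{L}_{\hbar,\bA},\beta_0\hbar)=\OO(\hbar^{-3/2})$.

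For $\mathcal{N}_\hbar^\sharp$ I would start from the orthogonal fibration $\mathcal{N}_\hbar^\sharp=\bigoplus_{k\geq1}\mathcal{N}_\hbar^{[k],\sharp}$ of Section~\ref{subsec.numbers} together with Proposition~\ref{finite-k}(a): only the fibers $k\leq K$, with $K$ fixed, contribute below $\beta_0\hbar$, so $\mathsf{N}(\mathcal{N}_\hbar^\sharp,\beta_0\hbar)=\sum_{k=1}^{K}\mathsf{N}(\mathcal{N}_\hbar^{[k],\sharp},\beta_0\hbar)$. Restricting the lower bound \eqref{lb-Nh} to the $k$-th Hermite fiber yields $\mathcal{N}_\hbar^{[k],\sharp}\geq(1-\eta)\bigl(\hbar^2D_{x_3}^2+(2k-1)\hbar\,\Op_\hbar^w\underline b\bigr)$, a two-dimensional $\hbar$-pseudo-differential operator; if $\eta$ is so small that $\beta_0\bigl((1-\eta)(2k-1)\bigr)^{-1}<\tilde\beta_0$, then its symbol $\xi_3^2+(2k-1)\hbar\,\underline b(x_2,\xi_2,x_3)$ exceeds $\beta_0\hbar$ outside the bounded set $\{|\xi_3|\leq1\}\times D(q_0,\eps)$ as soon as $\hbar$ is small. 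A standard Weyl-type upper bound for the number of its eigenvalues below $\beta_0\hbar$ then gives $\mathsf{N}(\mathcal{N}_\hbar^{[k],\sharp},\beta_0\hbar)\leq C\hbar^{-2}$ (in fact $\OO(\hbar^{-3/2})$, since the admissible values of $\xi_3$ fill an interval of length $\OO(\hbar^{1/2})$), and summing over the finitely many $k\leq K$ gives $\mathsf{N}(\mathcal{N}_\hbar^\sharp,\beta_0\hbar)=\OO(\hbar^{-2})$.

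The main obstacle is the sharp exponent $-3/2$ for $\mathcal{L}_{\hbar,\bA}$: one needs a counting estimate at the low energy scale $E\sim\hbar$ with the correct volume factor $\hbar^{3/2}$, and the cleanest route is the reduction above to an effective potential that is \emph{compactly supported} — something made possible precisely by the a priori confinement estimate \eqref{lbB} and by Assumption~\ref{hyp3-4}. For $\mathcal{N}_\hbar^\sharp$ the delicate point is instead the partial ellipticity: the naive lower-bound symbol $\xi_3^2+\abs{z_1}^2\underline b$ fails to be elliptic in the variables $(x_2,\xi_2,x_3)$ — the quantum counterpart of the degeneracy of $\omega_0$ along $\Sigma$ — so one has to descend to the Hermite fibers first, turning $\abs{z_1}^2$ into the positive quantity $(2k-1)\hbar$, before the confining property of $\underline b$ can be used.
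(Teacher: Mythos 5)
Your first estimate is correct and is essentially the paper's own (very terse) argument spelled out: the confinement bound \eqref{lbB} turns the counting problem below $\beta_{0}\hbar$ into counting negative eigenvalues of a magnetic Schr\"odinger operator with an effective potential of depth $\OO(\hbar)$ which, by \eqref{hyp4}, is supported in the fixed ball $D(q_{0},\eps)$; the Cwikel--Lieb--Rozenblum bound combined with the diamagnetic inequality then yields $C\hbar^{-3}\cdot\hbar^{3/2}=\OO(\hbar^{-3/2})$, which is exactly what the paper means by ``Lieb--Thirring inequalities and the diamagnetic inequality''.

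For $\mathcal{N}^{\sharp}_{\hbar}$ your skeleton (Hermite fibration, Proposition~\ref{finite-k}, the fiberwise lower bound coming from \eqref{lb-Nh}, min--max) is again the paper's, but the decisive step is asserted rather than proved: there is no ``standard Weyl-type upper bound'' for the number of eigenvalues of $\hbar^{2}D_{x_{3}}^{2}+(2k-1)\hbar\,\Op_{\hbar}^{w}\underline{b}$ \emph{below the $\hbar$-dependent threshold} $\beta_{0}\hbar$. The usual Weyl upper bounds are at a fixed energy, and here they are useless: since $\underline{b}$ is only bounded below (by $\tilde\beta_{0}$ outside $D(0,\eps)$) and enters multiplied by $\hbar$, the essential spectrum of the fiber operators also starts at height $\OO(\hbar)$ (compare Lemma~\ref{lem.spess}), so at any fixed energy the counting function is infinite for small $\hbar$; the fact that the symbol exceeds $\beta_{0}\hbar$ outside a fixed bounded set does not by itself produce the bound you want. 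This is precisely the point the paper handles with its deliberately ``very rough'' inequality: since $0<\hbar\leq 1$ one has $\hbar^{2}D_{x_{3}}^{2}=\Op_{\hbar}^{w}(\xi_{3}^{2})\geq \hbar\,\Op_{\hbar}^{w}(\xi_{3}^{2})$, hence $\mathcal{N}^{[k],\sharp}_{\hbar}\geq (1-\eta)\hbar\,\Op_{\hbar}^{w}\bigl(\xi_{3}^{2}+\underline{b}(x_{2},\xi_{2},x_{3})\bigr)$, and by min--max
\[
\mathsf{N}\bigl(\mathcal{N}^{[k],\sharp}_{\hbar},\beta_{0}\hbar\bigr)\leq \mathsf{N}\bigl(\Op_{\hbar}^{w}(\xi_{3}^{2}+\underline{b}),(1-\eta)^{-1}\beta_{0}\bigr),
\]
which is now a \emph{fixed}-energy count for a symbol with compact sublevel sets (thanks to \eqref{hyp4}), where the standard Weyl asymptotics gives $\OO(\hbar^{-2})$; summing over the $\OO(1)$ fibers concludes. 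With this (or an equivalent rescaling) inserted, your proof is complete. Note also that your parenthetical claim of $\OO(\hbar^{-3/2})$ per fiber is only a phase-space-volume heuristic at a shrinking threshold; justifying it would require the second semiclassical rescaling $h=\hbar^{1/2}$ of Section~\ref{BNF2}, and it is not needed for the stated $\OO(\hbar^{-2})$.
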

\begin{proof}
To get the first estimate, we use the Lieb-Thirring inequalities (which provide an upper bound of the number of eigenvalues in dimension three) and the diamagnetic inequality (see \cite{RVN13} and Proposition \ref{lbB}). To get the second estimate, we use the first point in Proposition \ref{finite-k}.
Moreover, given $\eta\in(0,1)$, by using $\hbar\in(0,1)$ we infer
$$\langle\mathcal{N}_{\hbar}^{[k], \sharp}\psi,\psi\rangle\geq (1-\eta)\hbar \langle\Op^w_{\hbar} \left(\xi_{3}^2+b(x_{2},\xi_{2}, x_{3})\right)\psi,\psi\rangle.$$
Note that the last inequality is very rough. By the min-max principle, we deduce that 
$$\mathsf{N}\left(\mathcal{N}_{\hbar}^{[k],\sharp},\beta_0 \hbar\right)\leq  \mathsf{N}\left(\Op^w_{\hbar} \left(\xi_{3}^2+b(x_{2},\xi_{2}, x_{3})\right),(1-\eta)^{-1}\beta_0 \right).$$
Then, we conclude by using the Weyl asymptotics and our confinement assumption:
$$\mathsf{N}\left(\Op^w_{\hbar} \left(\xi_{3}^2+b(x_{2},\xi_{2}, x_{3})\right),(1-\eta)^{-1}\beta_0 \right)= \OO(\hbar^{-2}).$$
\end{proof}
Since $\mathcal{N}_{\hbar}^\sharp$ commutes with $\mathcal{I}_{\hbar}$, we also deduce the following corollary.
\begin{corollary}\label{tensorx1}
For any eigenvalue $\lambda$ of $\mathcal{N}_{\hbar}^\sharp$ such that $\lambda\leq \beta_{0}\hbar$ we may consider an orthonormal eigenbasis of the space $\ker\left(\mathcal{N}^\sharp_{\hbar}-\lambda\right)$ formed with functions in the form $e_{k,\hbar}(x_{1})\varphi_{\hbar}(x_{2},x_{3})$ with $k\in\{1,\ldots K\}$. Moreover we have $\mathds{1}_{(-\infty,\beta_{0}\hbar)}(\mathcal{N}_{\hbar}^\sharp)=\OO(\hbar^{-2})$ and each eigenfunction associated with $\lambda\leq \beta_{0}\hbar$ is a linear combination of  at most $\OO(\hbar^{-2})$ such tensor products.
\end{corollary}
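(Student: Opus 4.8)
The plan is to combine three facts already in hand: that $\mathcal{N}_{\hbar}^\sharp$ commutes with the harmonic oscillator $\mathcal{I}_{\hbar}=\hbar^2D_{x_{1}}^2+x_{1}^2$, that only finitely many Hermite modes matter below $\beta_{0}\hbar$ (Proposition~\ref{finite-k}), and the counting bound of Corollary~\ref{numbers-L-N}. No further input is needed.

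First I would recall that $\mathcal{I}_{\hbar}$ has simple eigenvalues $(2k-1)\hbar$, $k\geq 1$, with normalized eigenfunctions $e_{k,\hbar}(x_{1})$, giving the orthogonal decomposition $L^{2}(\R^{3})=\bigoplus_{k\geq 1}\mathbb{C}\,e_{k,\hbar}(x_{1})\otimes L^{2}(\R^{2})$ in the variables $(x_{2},x_{3})$. Because $\mathcal{N}_{\hbar}^\sharp$ commutes with $\mathcal{I}_{\hbar}$, this decomposition reduces $\mathcal{N}_{\hbar}^\sharp$, and $\mathcal{N}_{\hbar}^\sharp=\bigoplus_{k\geq 1}\mathcal{N}_{\hbar}^{[k],\sharp}$ with summands as in~\eqref{Nhk}. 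Consequently, for every $\lambda$,
\[
\ker\left(\mathcal{N}_{\hbar}^\sharp-\lambda\right)=\bigoplus_{k\geq 1}e_{k,\hbar}(x_{1})\otimes\ker\left(\mathcal{N}_{\hbar}^{[k],\sharp}-\lambda\right).
\]
By the first item of Proposition~\ref{finite-k} (and the lower bound $\mathfrak{s}_{1}(\mathcal{N}_{\hbar}^{[k],\sharp})\geq (1-2\eta)b_{0}(2k-1)\hbar$ on which it rests), for $\hbar$ small there is a fixed $K\in\N$ such that $\ker(\mathcal{N}_{\hbar}^{[k],\sharp}-\lambda)=\{0\}$ for all $k>K$ whenever $\lambda\leq\beta_{0}\hbar$. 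Hence the sum above is finite and runs over $k\in\{1,\dots,K\}$; choosing an orthonormal basis of each nonzero summand yields the desired orthonormal eigenbasis of $\ker(\mathcal{N}_{\hbar}^\sharp-\lambda)$, made of tensor products $e_{k,\hbar}(x_{1})\varphi_{\hbar}(x_{2},x_{3})$ with $k\in\{1,\dots,K\}$.

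For the remaining assertions I would note that $\mathds{1}_{(-\infty,\beta_{0}\hbar)}(\mathcal{N}_{\hbar}^\sharp)$ is the orthogonal projection onto the finite-dimensional space spanned by $\bigcup_{\lambda\leq\beta_{0}\hbar}\ker(\mathcal{N}_{\hbar}^\sharp-\lambda)$, whose dimension is $\mathsf{N}(\mathcal{N}_{\hbar}^\sharp,\beta_{0}\hbar)=\OO(\hbar^{-2})$ by Corollary~\ref{numbers-L-N}. Taking the union over those finitely many eigenvalues of the orthonormal eigenbases constructed above gives an orthonormal basis of $\operatorname{ran}\mathds{1}_{(-\infty,\beta_{0}\hbar)}(\mathcal{N}_{\hbar}^\sharp)$ consisting of $\OO(\hbar^{-2})$ tensor products of the stated form; any eigenfunction associated with $\lambda\leq\beta_{0}\hbar$ lies in this range, hence is a linear combination of at most $\OO(\hbar^{-2})$ such tensor products.

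I do not anticipate any real obstacle: this is bookkeeping on top of Proposition~\ref{finite-k} and Corollary~\ref{numbers-L-N}. The single point worth a line of justification is that the spectral subspaces of $\mathcal{N}_{\hbar}^\sharp$ genuinely respect the Hermite decomposition in $x_{1}$; this follows at once from the commutation $[\mathcal{N}_{\hbar}^\sharp,\mathcal{I}_{\hbar}]=0$ together with the simplicity of the spectrum of $\mathcal{I}_{\hbar}$, which makes those subspaces invariant under each partial projection $\psi\mapsto\langle\psi(\cdot,x_{2},x_{3}),e_{k,\hbar}\rangle_{L^{2}(\R_{x_{1}})}\,e_{k,\hbar}$.
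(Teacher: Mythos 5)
Your argument is correct and is essentially the paper's own deduction: the corollary is obtained there precisely from the commutation $[\mathcal{N}_{\hbar}^\sharp,\mathcal{I}_{\hbar}]=0$, the Hilbertian decomposition $\mathcal{N}_{\hbar}^\sharp=\bigoplus_{k\geq1}\mathcal{N}_{\hbar}^{[k],\sharp}$ with Proposition~\ref{finite-k} restricting to $k\in\{1,\dots,K\}$ below $\beta_{0}\hbar$, and the counting bound $\mathsf{N}\left(\mathcal{N}_{\hbar}^\sharp,\beta_{0}\hbar\right)=\OO(\hbar^{-2})$ of Corollary~\ref{numbers-L-N}. Your explicit remark on the reduction of the spectral subspaces by the Hermite decomposition is exactly the bookkeeping the paper leaves implicit.
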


\subsubsection{Microlocalization estimates}
The following proposition follows from the same lines as in dimension two (see \cite[Theorem 2.1]{HelMo96}).
\begin{proposition}
Under Assumptions \ref{hyp1-2} and \ref{hyp3-4}, for any $\epsilon >0$, there exist $C(\epsilon) >0$ and $h_{0}(\epsilon) >0$ such that for any eigenpair $(\lambda,\psi)$ of $\mathcal{L}_{\hbar,\bA}$ with $\lambda\leq \beta_0\, \hbar$ we have for $\hbar\in(0,h_{0}(\epsilon))$:
$$\int_{\R^3} e^{ 2 (1-\epsilon) \phi(q)/\h^{\frac 12}}|\psi|^2\, \dx q  \leq C(\epsilon) \exp( \epsilon \h^{-\frac 12}) \|\psi\|^2,$$ 
$$\mathcal{Q}_{\hbar,\bA}(e^{(1-\epsilon) \phi(q)/\h^{\frac 12}}\psi)\leq C(\epsilon) \exp( \epsilon \h^{-\frac 12}) \|\psi\|^2,$$
where $\phi$ is the distance to the bounded set $\{ \|\bB(q)\| \leq \beta_0\} $ for the Agmon metric
 $(\|(\bB(q)\| -\beta_0)_+ g$, with $g$ the standard metric.
\end{proposition}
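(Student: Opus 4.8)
The plan is to prove the two stated estimates (an exponential weighted $L^2$-bound and an exponential weighted energy bound) by the standard \emph{Agmon identity} technique, transplanted to the semiclassical parameter $\h^{1/2}$ dictated by the scaling of the problem. Let $(\lambda,\psi)$ be an eigenpair with $\lambda\le\beta_0\h$, write $\Phi:=(1-\epsilon)\phi/\h^{1/2}$, and consider the weighted function $\psi_\Phi:=e^{\Phi}\psi$. The core algebraic step is the identity
\[
\mathcal{Q}_{\h,\bA}(e^\Phi\psi)-\h^2\int_{\R^3}|\nabla\Phi|^2|e^\Phi\psi|^2\,\dx q
=\operatorname{Re}\langle e^{2\Phi}\mathcal{L}_{\h,\bA}\psi,\psi\rangle
=\lambda\int_{\R^3}e^{2\Phi}|\psi|^2\,\dx q,
\]
valid for any bounded Lipschitz weight $\Phi$ (one first proves it for a truncated weight $\Phi_R$ that is constant for $|q|\ge R$ and then lets $R\to\infty$, using that $\psi$ decays — which is legitimate since $\lambda$ is below the bottom $\h b_1(1-C_0\h^{1/4})$ of the essential spectrum by \eqref{spessentiel}, and $\beta_0<b_1$). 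This reduces everything to controlling $\lambda\int e^{2\Phi}|\psi|^2$ minus the error term $\h^2\int|\nabla\Phi|^2 e^{2\Phi}|\psi|^2$.

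Next I would insert the Helffer--Mohamed lower bound \eqref{lbB} applied to the weighted function $e^\Phi\psi$ (which is compactly supported after truncation and can be extended to a general admissible test function by density): this gives
\[
\mathcal{Q}_{\h,\bA}(e^\Phi\psi)\ge \h(1-C_0\h^{1/4})\int_{\R^3}b(q)\,e^{2\Phi}|\psi|^2\,\dx q.
\]
Combining with the Agmon identity yields
\[
\int_{\R^3}\Big(\h(1-C_0\h^{1/4})\,b(q)-\lambda-\h^2|\nabla\Phi|^2\Big)e^{2\Phi}|\psi|^2\,\dx q\le 0.
\]
Now $\h^2|\nabla\Phi|^2=(1-\epsilon)^2\h|\nabla\phi|^2$, and by the very definition of $\phi$ as the Agmon distance for the degenerate metric $(b(q)-\beta_0)_+\,g$ one has $|\nabla\phi|^2\le (b(q)-\beta_0)_+$ almost everywhere (eikonal inequality), hence $\h b(q)-\h^2|\nabla\Phi|^2\ge \h b(q)-(1-\epsilon)^2\h(b(q)-\beta_0)_+$. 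A short case analysis (on $\{b\le\beta_0\}$ the weight term vanishes; on $\{b>\beta_0\}$ one gets $\ge \h(\beta_0+(2\epsilon-\epsilon^2)(b-\beta_0))$) shows that, away from a fixed neighbourhood $\mathcal{U}$ of $\{b\le\beta_0\}$, the coefficient $\h(1-C_0\h^{1/4})b(q)-\lambda-\h^2|\nabla\Phi|^2$ is bounded below by $c_\epsilon\h$ for some $c_\epsilon>0$ once $\h$ is small (here one uses $\lambda\le\beta_0\h<\tilde\beta_0\h$ and \eqref{hyp4}). Splitting the integral into the part over $\mathcal{U}$ and the part over its complement, moving the $\mathcal{U}$-part to the right-hand side, and bounding $e^{2\Phi}\le e^{2(1-\epsilon)\max_{\mathcal U}\phi/\h^{1/2}}\le C e^{\epsilon\h^{-1/2}}$ on $\mathcal{U}$ (shrinking $\mathcal U$ so that $\max_{\mathcal U}\phi$ is small compared with $\epsilon$) produces
\[
c_\epsilon\h\int_{\R^3\setminus\mathcal U}e^{2\Phi}|\psi|^2\,\dx q\le C(\epsilon)e^{\epsilon\h^{-1/2}}\|\psi\|^2,
\]
and adding back the trivial bound $\int_{\mathcal U}e^{2\Phi}|\psi|^2\le Ce^{\epsilon\h^{-1/2}}\|\psi\|^2$ gives the first claimed inequality. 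Plugging this back into the Agmon identity together with $\mathcal{Q}_{\h,\bA}(e^\Phi\psi)=\lambda\int e^{2\Phi}|\psi|^2+\h^2\int|\nabla\Phi|^2e^{2\Phi}|\psi|^2\le C\h\int e^{2\Phi}|\psi|^2$ yields the second (energy) inequality.

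The main obstacle is not the bookkeeping but the \emph{non-vanishing of the magnetic field at infinity}: unlike in the Schrödinger case, the a priori confinement needed to justify the truncation-and-limit argument and the lower bound on the quadratic form does not come from a potential well but precisely from \eqref{lbB}, whose proof in \cite{HelMo96} already costs the loss $C_0\h^{1/4}$. One must therefore be careful that this $\h^{1/4}$-loss is harmless at the scale $\h^{1/2}$ of the exponential weight — it is, since $\h\cdot\h^{1/4}=o(\h)$ and the gain $c_\epsilon\h$ survives for $\h$ small — and that the eikonal inequality for the \emph{degenerate} Agmon metric $(b-\beta_0)_+g$ is used in the correct (one-sided) direction; these are exactly the points where the two-dimensional argument of \cite[Theorem 2.1]{HelMo96} goes through verbatim once \eqref{lbB} replaces the elementary estimate $\h\int|B||u|^2\le\mathcal Q$.
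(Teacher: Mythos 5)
Your argument is correct and follows exactly the route the paper intends: the paper gives no details and simply invokes the two-dimensional Agmon-type argument of Helffer--Mohamed (\cite[Theorem 2.1]{HelMo96}), i.e.\ the weighted identity, the lower bound \eqref{lbB} applied to the weighted eigenfunction, the eikonal inequality $|\nabla\phi|^2\le (b-\beta_0)_+$, and the splitting into a small neighbourhood of $\{b\le\beta_0\}$ and its complement, which is precisely what you reconstruct. One small repair: since $b$ need not be bounded, your last step $\h^2\int|\nabla\Phi|^2e^{2\Phi}|\psi|^2\,\dx q\le C\h\int e^{2\Phi}|\psi|^2\,\dx q$ is not justified as written; instead observe that the inequality you already derived controls $\h\int (b-\beta_0)_+e^{2\Phi}|\psi|^2\,\dx q\le C(\epsilon)e^{\epsilon\h^{-1/2}}\|\psi\|^2$ as well (away from the well the coefficient is bounded below by $c_\epsilon\h\bigl(1+(b-\beta_0)\bigr)$ for $\h$ small), and this yields the energy estimate.
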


\begin{proposition}
Under Assumptions \ref{hyp1-2} and \ref{hyp3-4}, we consider $0<b_{0}<\beta_0 < b_1$ and there exist $C>0$ and $\h_{0}>0$ such that for any eigenpair $(\lambda,\psi)$ of $\mathcal{L}_{\hbar,\bA}$ with $\lambda\leq \beta_{0}\h$ we have for $\h\in(0,\h_{0})$ and $\delta \in (0,\frac 12)$:
$$\psi=\chi_{0}\left(\hbar^{-2\delta}\mathcal{L}_{\hbar,\bA}\right)\chi_{1}(q)\psi+\OO(\hbar^{\infty})\|\psi\|,$$
where $\chi_{0}$ is a cutoff function compactly supported in the ball of center $0$ and radius $1$ and where $\chi_{1}$ is  a compactly supported smooth cutoff function being $1$ in an open  neighborhood of $\{ \|\bB(q)\| \leq \beta_0\} $.
\end{proposition}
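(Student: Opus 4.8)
The plan is to prove separately a spatial localization near the magnetic well and an energy localization near $\Sigma$, and then to glue the two; the only serious analytic ingredient, the Agmon-type decay of the low-lying eigenfunctions, is already available from the previous proposition.

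\emph{Spatial localization.} Fix a small $\phi_0>0$. Since $b_1>\beta_0$ (Assumption~\ref{hyp3-4}), outside a large Euclidean ball the Agmon metric $(\|\bB\|-\beta_0)_+\,g$ dominates a fixed positive multiple of $g$, so that $\phi(q)\to+\infty$ as $|q|\to+\infty$ and the sublevel set $\{\phi\leq\phi_0\}$ is compact. I would then take the cut-off $\chi_1$ equal to $1$ on the open neighborhood $\{\phi<\phi_0/2\}$ of $\{\|\bB\|\leq\beta_0\}$ and supported in $\{\phi\leq\phi_0\}$. On $\mathrm{supp}(1-\chi_1)$ one has $\phi\geq\phi_0/2$, hence $e^{2(1-\epsilon)\phi(q)/\hbar^{1/2}}\geq e^{(1-\epsilon)\phi_0\hbar^{-1/2}}$ there, and the first weighted estimate of the previous proposition gives
\[
\|(1-\chi_1)\psi\|^2\leq C(\epsilon)\,e^{(\epsilon-(1-\epsilon)\phi_0)\hbar^{-1/2}}\,\|\psi\|^2 .
\]
Choosing $\epsilon<\phi_0/(1+\phi_0)$ makes the exponent negative, so $(1-\chi_1)\psi=\OO(\hbar^\infty)\|\psi\|$.

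\emph{Energy localization and conclusion.} Since $\mathcal{L}_{\hbar,\bA}\psi=\lambda\psi$ with $0\leq\lambda\leq\beta_0\hbar$ and $\delta\in(0,\tfrac12)$, we have $\hbar^{-2\delta}\lambda\leq\beta_0\hbar^{1-2\delta}\to 0$; as $\chi_0\equiv1$ near $0$, this gives $\chi_0(\hbar^{-2\delta}\lambda)=1$ for $\hbar$ small enough, i.e.\ $\chi_0(\hbar^{-2\delta}\mathcal{L}_{\hbar,\bA})\psi=\psi$ by the spectral theorem. Writing $\psi=\chi_0(\hbar^{-2\delta}\mathcal{L}_{\hbar,\bA})\chi_1(q)\psi+\chi_0(\hbar^{-2\delta}\mathcal{L}_{\hbar,\bA})(1-\chi_1(q))\psi$ and using that $\chi_0(\hbar^{-2\delta}\mathcal{L}_{\hbar,\bA})$ is bounded by $\|\chi_0\|_{\infty}$ (self-adjoint functional calculus), the second term is $\OO(\hbar^\infty)\|\psi\|$ by the spatial bound, which is the claim.

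The genuinely hard ingredient is the Agmon estimate itself, which is exactly the content of the preceding proposition (obtained along the lines of~\cite{HelMo96}). Granting it, the present statement only uses the two elementary reductions above; the one point requiring a little attention is the compactness of $\{\phi\leq\phi_0\}$, which is precisely where Assumption~\ref{hyp3-4} (namely $b_1>\beta_0$, so $\|\bB\|$ stays away from $\beta_0$ at infinity) enters.
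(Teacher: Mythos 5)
Your proof is correct, and since the paper states this proposition without giving any proof (it is left as a standard consequence of the preceding Agmon estimate), your argument — exponential smallness of $(1-\chi_1)\psi$ obtained by restricting the weighted estimate to $\operatorname{supp}(1-\chi_1)$ where $\phi$ is bounded below by a positive constant (using $b_1>\beta_0$ so that the sublevel sets of $\phi$ are compact), followed by the spectral-theorem identity $\chi_0(\hbar^{-2\delta}\mathcal{L}_{\hbar,\bA})\psi=\psi$ because $0\leq\hbar^{-2\delta}\lambda\leq\beta_0\hbar^{1-2\delta}\to0$ and $\chi_0\equiv1$ near $0$, plus boundedness of $\chi_0(\hbar^{-2\delta}\mathcal{L}_{\hbar,\bA})$ — is precisely the argument the authors leave implicit. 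The only interpretive point, which you correctly make explicit, is that the cutoff $\chi_0$ must equal $1$ in a neighborhood of the origin (otherwise the statement would be false), and your adapted choice of $\chi_1$ is harmless since the same bound holds for any admissible $\chi_1$, as $\inf_{\operatorname{supp}(1-\chi_1)}\phi>0$.
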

Let us now investigate the microlocalization of the eigenfunctions of $\mathcal{N}_{\hbar}^\sharp$.
\begin{proposition}\label{space-loc-N}
Let $\chi$ be a smooth cutoff function being $0$ on $\{b\leq\beta_{0}\}$ and $1$ on the set $\{b\geq\beta_{0}+\eps\}$. If $\lambda$ is an eigenvalue of $\mathcal{N}_{\hbar}^\sharp$ such that $\lambda\leq \beta_{0}\hbar$ and if $\psi$ is an associated eigenfunction, then we have
$$\Op_{\hbar}^w\left(\chi(x_{2},\xi_{2},x_{3})\right)\psi=\OO(\hbar^{\infty})\|\psi\|.$$
\end{proposition}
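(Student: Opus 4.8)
The plan is to prove the statement by a standard semiclassical Agmon-type (exponential decay) argument in the phase-space variable, adapted to the operator $\mathcal{N}_{\hbar}^{\sharp}$ rather than to $\mathcal{L}_{\hbar,\bA}$ itself. The key point is that the principal symbol of $\mathcal{N}_{\hbar}^{\sharp}$ on the reduced phase space $\R^4_{x_2,\xi_2,x_3,\xi_3}$ is (after dividing by $\hbar$) close to $\xi_3^2+\mathcal{I}_\hbar\,\underline{b}(x_2,\xi_2,x_3)/\hbar$, and on the Hermite block of index $k$ it is close to $\xi_3^2+(2k-1)\underline{b}(x_2,\xi_2,x_3)$; since $\lambda\le\beta_0\hbar<(\beta_0+\eps)\hbar$ while $\underline b$ exceeds $\beta_0+\eps$ on the support of $\chi$, the operator is microlocally elliptic there, with a gap of order $\hbar$. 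The factor $\hbar$ (rather than $1$) is exactly the semiclassical ellipticity gap that one exploits in an $IMS$-localization / Agmon estimate, and it is the reason a decay of the form $\exp(-c/\hbar^{1/2})$, not $\exp(-c/\hbar)$, appears; but for the present statement we only need $\OO(\hbar^\infty)$ concentration away from $\{b\ge\beta_0+\eps\}$, so a soft version suffices.

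Concretely, I would first reduce to a fixed Hermite block: by Proposition~\ref{finite-k}, the part of the spectrum of $\mathcal{N}_{\hbar}^{\sharp}$ below $\beta_0\hbar$ is carried by finitely many $\mathcal{N}_{\hbar}^{[k],\sharp}$ with $k\le K$, and (Corollary~\ref{tensorx1}) the corresponding eigenfunctions are finite linear combinations of tensor products $e_{k,\hbar}(x_1)\varphi_\hbar(x_2,x_3)$; it therefore suffices to prove the statement for each $\mathcal{N}_{\hbar}^{[k],\sharp}$ acting on $L^2(\R^2_{x_2,x_3})$, whose symbol is $\xi_3^2+(2k-1)\hbar\,\underline b+\Op_\hbar^w f^{\star,\sharp}(\hbar,(2k-1)\hbar,\cdots)$. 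On the support of $\chi$ we have $\underline b\ge\beta_0+\eps$, and since the support of $f^\star$ in $Z,\xi_3$ is as small as we like, the Calderon--Vaillancourt / G\aa rding estimates (already used in the proof of Lemma~\ref{lem.spess}, cf.\ \eqref{lb-Nh}) give, for $\hbar$ small,
\[
\mathcal{N}_{\hbar}^{[k],\sharp}\ge (1-\eta)\,\Op_\hbar^w\!\left(\xi_3^2+(2k-1)\,\underline b(x_2,\xi_2,x_3)\right)-C\hbar^2 .
\]
Hence the symbol of $\mathcal{N}_{\hbar}^{[k],\sharp}-\lambda$ is, on a neighborhood of $\operatorname{supp}\chi$ in $\R^3_{x_2,\xi_2,x_3}$ times $\R_{\xi_3}$, bounded below by $c\hbar$ for some $c>0$.

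Then I would run the usual microlocal elliptic-inversion bootstrap. Write $\chi=\chi_0$ and pick a nested sequence of cutoffs $\chi_0\prec\chi_1\prec\chi_2\prec\cdots$, each supported in $\{b\ge\beta_0+\eps/2\}$ and equal to $1$ on $\operatorname{supp}\chi_{j-1}$. On $\operatorname{supp}\chi_1$ the full symbol of $\hbar^{-1}(\mathcal{N}_{\hbar}^{[k],\sharp}-\lambda)$ is elliptic (bounded below by $c>0$ and with symbol-class bounds), so there is a parametrix $P_\hbar=\Op_\hbar^w p_\hbar$ with $P_\hbar\,\hbar^{-1}(\mathcal{N}_{\hbar}^{[k],\sharp}-\lambda)\,\Op_\hbar^w\chi_1=\Op_\hbar^w\chi_1+\hbar^\infty S(1)+(\text{terms supported off }\operatorname{supp}\chi_0)$; composing with $\Op_\hbar^w\chi_0$ on the left and using $(\mathcal{N}_{\hbar}^{[k],\sharp}-\lambda)\psi=0$, symbolic calculus (every commutator gains a power of $\hbar$ and moves the support outward) yields $\Op_\hbar^w(\chi_0)\psi=\OO(\hbar^N)\|\psi\|$ for every $N$. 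Finally one reassembles: since the true eigenfunction of $\mathcal{N}_{\hbar}^{\sharp}$ below $\beta_0\hbar$ is an $\OO(\hbar^{-2})$-term linear combination (Corollary~\ref{tensorx1}) of such block eigenfunctions, and the polynomial loss $\hbar^{-2}$ is absorbed by $\hbar^\infty$, one gets $\Op_\hbar^w(\chi(x_2,\xi_2,x_3))\psi=\OO(\hbar^\infty)\|\psi\|$.

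The only genuinely delicate point is the interplay between the $x_1$-variable (quantized via the harmonic oscillator $\mathcal{I}_\hbar$) and the microlocalization in $(x_2,\xi_2,x_3)$: $\mathcal{N}_{\hbar}^{\sharp}$ is an operator-valued symbol in the transverse variables, so one must make sure the elliptic parametrix can be built uniformly in the Hermite index $k$, or — more cheaply — invoke the exact block decomposition $\mathcal{N}_{\hbar}^{\sharp}=\bigoplus_k\mathcal{N}_{\hbar}^{[k],\sharp}$ and the a priori bound $k\le K$ from Proposition~\ref{finite-k}, which makes the uniformity issue disappear. I expect this reduction to finitely many scalar two-dimensional problems to be the main thing to get right; once it is in place, the elliptic estimate is routine pseudodifferential calculus of the kind already invoked repeatedly in this section.
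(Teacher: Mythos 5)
Your reduction to the Hermite blocks $\mathcal{N}_{\hbar}^{[k],\sharp}$, $k\leq K$, via Proposition~\ref{finite-k} and Corollary~\ref{tensorx1} is exactly the paper's first step, and your worry about uniformity in $k$ is indeed resolved by the bound $k\leq K$. The genuine gap is in the elliptic-inversion bootstrap itself. On $\operatorname{supp}\chi$ the ellipticity gap of $\mathcal{N}_{\hbar}^{[k],\sharp}-\lambda$ is only of size $\hbar$, while the commutator $[\mathcal{N}_{\hbar}^{[k],\sharp},\Op_{\hbar}^w(\chi)]$ is a priori also of size $\hbar$: its leading symbol is $\tfrac{\hbar}{i}\,2\xi_{3}\partial_{3}\chi$, and $\xi_{3}$ is not small a priori. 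So after dividing by the gap, the error term in your bootstrap is $\OO(1)$, not $\OO(\hbar)$, and the claim that \enquote{every commutator gains a power of $\hbar$} does not close the argument as stated. Relatedly, $\hbar^{-1}(\mathcal{N}_{\hbar}^{[k],\sharp}-\lambda)$ does \emph{not} have uniform symbol-class bounds (the term $\hbar^{-1}\xi_{3}^{2}$ is unbounded in any $\hbar$-independent class), and its would-be inverse lies in a borderline exotic class in which each $\xi_{3}$-derivative costs $\hbar^{-1/2}$; the true gain per localization step is therefore $\hbar^{1/2}$, not $\hbar$, and it only appears after one exploits the smallness of $\xi_{3}$ (equivalently of $\hbar D_{x_{3}}\varphi$) on the relevant region, e.g.\ through the cancellation $\xi_{3}\cdot(\hbar^{-1}\xi_{3}^{2}+c)^{-1}=\OO(\hbar^{1/2})$ or through a second microlocalization $\xi_{3}=\hbar^{1/2}\tilde\xi_{3}$. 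Your write-up asserts this step is routine, which is precisely where it is not.

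The paper avoids the parametrix altogether and runs an energy/commutator argument: starting from the identity \eqref{variational-eve}, it first bounds the commutator as in \eqref{commutatorNhk}, deduces the key a priori estimate \eqref{xi3chi}, namely $\|\Op_{\hbar}^w(\xi_{3})\Op_{\hbar}^w(\chi)\varphi\|\leq C\hbar^{1/2}\|\Op_{\hbar}^w(\underline{\chi})\varphi\|$, feeds it back to upgrade the commutator term to $\OO(\hbar^{3/2})$, and then uses the G\aa rding lower bound $(\beta_{0}+\eps_{0})\hbar$ on the support of $\chi$ to obtain $\|\Op_{\hbar}^w(\chi)\varphi\|^{2}\leq C\hbar^{1/2}\|\Op_{\hbar}^w(\underline{\chi})\varphi\|^{2}$, which is then iterated over nested cutoffs. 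If you want to keep your parametrix formulation, you must either prove an analogue of \eqref{xi3chi} first, or work in the rescaled variable $\tilde\xi_{3}=\hbar^{-1/2}\xi_{3}$ (the scale the paper introduces later) so that the composition estimates genuinely produce the $\hbar^{1/2}$ gain per step; without one of these inputs the proposed bootstrap yields no decay at all.
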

\begin{proof}
Due to Corollary \ref{tensorx1}, it is sufficient to prove the estimate for a function in the form $\psi(x_{1},x_{2},x_{3})=e_{k,\hbar}(x_{1})\varphi(x_{2},x_{3})$ where $k$ lies in $\{1,\ldots,K\}$ and we have
$$\mathcal{N}^{\sharp}_{\hbar}\psi=\lambda\psi,\qquad \mbox{ or equivalently }\qquad\mathcal{N}^{[k], \sharp}_{\hbar}\varphi=\lambda\varphi,$$
where we recall \eqref{Nhk}. Then, we write
$$\mathcal{N}^{[k], \sharp}_{\hbar}\Op_{\hbar}^w\left(\chi\right)\varphi=\lambda\Op_{\hbar}^w\left(\chi\right)\varphi+\left[\mathcal{N}^{[k], \sharp}_{\hbar},\Op_{\hbar}^w\left(\chi\right)\right]\varphi$$
and it follows that
\begin{multline}\label{variational-eve}
\left\langle\mathcal{N}^{[k], \sharp}_{\hbar}\Op_{\hbar}^w\left(\chi\right)\varphi,\Op_{\hbar}^w\left(\chi\right)\varphi \right\rangle=\lambda\|\Op_{\hbar}^w\left(\chi\right)\varphi\|^2\\
+\left\langle \left[\mathcal{N}^{[k], \sharp}_{\hbar},\Op_{\hbar}^w\left(\chi\right)\right]\varphi, \Op_{\hbar}^w\left(\chi\right)\varphi\right\rangle.
\end{multline}
Rough pseudo-differential estimates imply that there exist $C>0$, $\hbar_{0}>0$ such that for all $\hbar\in(0,\hbar_{0})$,
\begin{multline}\label{commutatorNhk}
\left|\left\langle \left[\mathcal{N}^{[k], \sharp}_{\hbar},\Op_{\hbar}^w\left(\chi\right)\right]\varphi, \Op_{\hbar}^w\left(\chi\right)\varphi\right\rangle\right|\leq C\hbar^2\left\| \Op_{\hbar}^w\left(\underline{\chi}\right)\varphi\right\|^2+C\hbar\left\| \Op_{\hbar}^w\left(\underline{\chi}\right)\varphi\right\|^2\\
+C\hbar\langle \Op_{\hbar}^w\left(\partial_{3}\chi\right)\varphi,\Op_{\hbar}^w\left(\xi_{3}\right) \Op_{\hbar}^w\left(\chi\right)\varphi\rangle.
\end{multline}
Combining \eqref{commutatorNhk} and \eqref{variational-eve}, we get
\begin{equation}\label{xi3chi}
\left\|\Op_{\hbar}^w\left(\xi_{3}\right) \Op_{\hbar}^w\left(\chi\right)\varphi\right\|\leq C\hbar^{\frac{1}{2}}\| \Op_{\hbar}^w\left(\underline{\chi}\right)\varphi\|\,,
\end{equation}
where $\underline{\chi}$ is a smooth cutoff function living on a slightly larger support than $\chi$. By using \eqref{xi3chi}, we can improve the commutator estimate
$$\left|\left\langle \left[\mathcal{N}^{[k], \sharp}_{\hbar},\Op_{\hbar}^w\left(\chi\right)\right]\varphi, \Op_{\hbar}^w\left(\chi\right)\varphi\right\rangle\right|\leq C\hbar^{\frac{3}{2}}\left\| \Op_{\hbar}^w\left(\underline{\chi}\right)\varphi\right\|^2.$$
We infer  that, there exist $C>0$, $\hbar_{0}>0$ such that for $\hbar\in(0,\hbar_{0})$,
$$\left\langle\mathcal{N}^{[k], \sharp}_{\hbar}\Op_{\hbar}^w\left(\chi\right)\varphi,\Op_{\hbar}^w\left(\chi\right)\varphi \right\rangle\leq \beta_{0}\hbar\| \Op_{\hbar}^w\left(\chi\right)\varphi\|^2+C\hbar^{\frac{3}{2}}\left\| \Op_{\hbar}^w\left(\underline{\chi}\right)\varphi\right\|^2.$$
By using the semiclassical G\aa rding inequality and the support of $\chi$, we get
$$\left\langle\mathcal{N}^{[k], \sharp}_{\hbar}\Op_{\hbar}^w\left(\chi\right)\varphi,\Op_{\hbar}^w\left(\chi\right)\varphi \right\rangle\geq (\beta_{0}+\eps_{0})\hbar\left\|\Op_{\hbar}^w\left(\chi\right)\varphi\right\|^2$$
and we deduce
$$\left\|\Op_{\hbar}^w\left(\chi\right)\varphi\right\|^2\leq C\hbar^{\frac{1}{2}}\left\| \Op_{\hbar}^w\left(\underline{\chi}\right)\varphi\right\|^2.$$
The conclusion follows by a standard iteration argument.
\end{proof}
The following proposition is concerned by the microlocalization with respect to $\xi_{3}$.
\begin{proposition}\label{xi3-loc-N}
Let $\chi_{0}$ be a smooth cutoff function being $0$ in a neighborhood of $0$ and let $\delta\in\left(0,\frac{1}{2}\right)$.  If $\lambda$ is an eigenvalue of $\mathcal{N}_{\hbar}^\sharp$ such that $\lambda\leq \beta_{0}\hbar$ and if $\psi$ is an associated eigenfunction, then we have
$$\Op_{\hbar}^w\left(\chi_{0}\left(\hbar^{-\delta}\xi_{3}\right)\right)\psi=\OO(\hbar^{\infty})\|\psi\|.$$
\end{proposition}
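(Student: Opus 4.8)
The plan is to run the variational/commutator argument of Proposition~\ref{space-loc-N}, with the region $\{b\geq\beta_0+\eps\}$ replaced by the region $\{\xi_3^2\gtrsim\hbar^{2\delta}\}$. The key point is that $\delta<\tfrac12$ forces $\hbar^{2\delta}\gg\hbar$, so on $\operatorname{supp}\chi_0(\hbar^{-\delta}\,\cdot)$ the kinetic term $\hbar^2D_{x_3}^2$ already produces an energy far above $\beta_0\hbar$, which is incompatible with $\lambda\leq\beta_0\hbar$.

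Since $\Op_{\hbar}^w\bigl(\chi_0(\hbar^{-\delta}\xi_3)\bigr)$ is a Fourier multiplier in $x_3$, it commutes with the Hermite decomposition in $x_1$, i.e.\ with $\mathcal{I}_\hbar$; hence, by Corollary~\ref{tensorx1}, it suffices to treat $\psi=e_{k,\hbar}(x_1)\varphi(x_2,x_3)$ with $k\in\{1,\dots,K\}$ and $\mathcal{N}_\hbar^{[k],\sharp}\varphi=\lambda\varphi$, $\lambda\leq\beta_0\hbar$. Writing $\tilde\Theta=\Op_{\hbar}^w\bigl(\chi_0(\hbar^{-\delta}\xi_3)\bigr)$ on $L^2(\R^2)$, I would start from
\[
\langle\mathcal{N}_\hbar^{[k],\sharp}\tilde\Theta\varphi,\tilde\Theta\varphi\rangle=\lambda\|\tilde\Theta\varphi\|^2+\bigl\langle[\mathcal{N}_\hbar^{[k],\sharp},\tilde\Theta]\varphi,\tilde\Theta\varphi\bigr\rangle .
\]
Restricting \eqref{lb-Nh} to the $k$-th Hermite sector gives $\mathcal{N}_\hbar^{[k],\sharp}\geq(1-\eta)\bigl(\Op_{\hbar}^w(\xi_3^2)+(2k-1)\hbar\Op_{\hbar}^w b\bigr)$; the G\aa rding inequality bounds the $b$-term from below by $-C\hbar\|\tilde\Theta\varphi\|^2$, and since $\Op_{\hbar}^w(\xi_3)$ and $\tilde\Theta$ are commuting Fourier multipliers while $\xi_3^2\geq c^2\hbar^{2\delta}$ on $\operatorname{supp}\chi_0(\hbar^{-\delta}\,\cdot)$, one has
\[
\|\Op_{\hbar}^w(\xi_3)\tilde\Theta\varphi\|^2=\|\Op_{\hbar}^w\bigl(\xi_3\chi_0(\hbar^{-\delta}\xi_3)\bigr)\varphi\|^2\geq c^2\hbar^{2\delta}\|\tilde\Theta\varphi\|^2 .
\]
As $\hbar^{2\delta}\gg\hbar\geq\lambda/\beta_0$, for small $\hbar$ this yields $\|\tilde\Theta\varphi\|^2\leq C\hbar^{-2\delta}\bigl|\bigl\langle[\mathcal{N}_\hbar^{[k],\sharp},\tilde\Theta]\varphi,\tilde\Theta\varphi\bigr\rangle\bigr|$.

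It remains to estimate the commutator. The term $\hbar^2D_{x_3}^2$ commutes with $\tilde\Theta$; the commutator with $(2k-1)\hbar\Op_{\hbar}^w b$ is $\OO(\hbar^{2-\delta})$ in operator norm --- a factor $\hbar$ is gained from the Poisson bracket and a factor $\hbar^{-\delta}$ lost per $\xi_3$-derivative of $\chi_0(\hbar^{-\delta}\xi_3)$ --- and the commutator with $\Op_{\hbar}^w f^{\star,\sharp}(\hbar,(2k-1)\hbar,\cdot)$ is of even smaller order: its symbol carries a factor $\chi_0'(\hbar^{-\delta}\xi_3)$ localizing $|\xi_3|\sim\hbar^\delta$, where $\partial_{x_3}f^{\star,\sharp}=\OO(\hbar^{3/2}+|\xi_3|^3)=\OO(\hbar^{3\delta})$ by the third-order vanishing of $f^\star$ recorded in Theorem~\ref{normal-form1} and $\delta<\tfrac12$, whence $\OO(\hbar^{1+2\delta})$. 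Both commutators are moreover localized in the annulus $|\xi_3|\sim\hbar^\delta$, which stays away from $0$ on the $\hbar^\delta$ scale, so up to $\OO(\hbar^\infty)\|\varphi\|$ one may insert a slightly larger cut-off $\underline{\tilde\Theta}=\Op_{\hbar}^w(\underline\chi_0(\hbar^{-\delta}\xi_3))$, with $\underline\chi_0$ again vanishing near $0$. Combining,
\[
\|\tilde\Theta\varphi\|^2\leq C\bigl(\hbar^{2-3\delta}+\hbar\bigr)\|\underline{\tilde\Theta}\varphi\|\,\|\tilde\Theta\varphi\|+\OO(\hbar^\infty)\|\varphi\|^2 ,
\]
and since $2-3\delta>0$, Young's inequality gives $\|\tilde\Theta\varphi\|\leq C\hbar^{\rho}\|\underline{\tilde\Theta}\varphi\|+\OO(\hbar^\infty)\|\varphi\|$ with $\rho=\min(2-3\delta,1)>0$. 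Iterating this one-step estimate along an increasing family of cut-offs vanishing near $0$ --- the standard iteration used at the end of Proposition~\ref{space-loc-N} --- upgrades it to $\tilde\Theta\varphi=\OO(\hbar^\infty)\|\varphi\|$, which is the claim.

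The computations are routine symbolic calculus in the $\hbar^{\delta}$-rescaled symbol class; the only points requiring care are that $\chi_0(\hbar^{-\delta}\xi_3)$ costs no more than $\hbar^{-\delta}$ per $\xi_3$-derivative, and that the cubically vanishing $\xi_3$-part of $f^{\star,\sharp}$ is evaluated only on the thin annulus $|\xi_3|\sim\hbar^\delta$ coming from $\chi_0'$. The single genuine input is the elementary inequality $|\xi_3|\geq c\hbar^\delta\gg\hbar^{1/2}$ on $\operatorname{supp}\chi_0(\hbar^{-\delta}\,\cdot)$, which makes the kinetic term dominant and forces $\tilde\Theta\varphi$ to be negligible.
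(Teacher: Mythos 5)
Your proof is correct and takes essentially the same route as the paper: reduction to a single Hermite sector via Corollary \ref{tensorx1}, the commutator identity \eqref{variational-eve} applied with the cutoff $\chi_{0}(\hbar^{-\delta}\xi_{3})$, coercivity of the $\xi_{3}^2$ term at scale $\hbar^{2\delta}\gg\hbar$ (which the paper obtains via G\aa rding in the rescaled $\hbar^{1-\delta}$-quantization and you obtain by the exact Fourier-multiplier computation), a commutator bound gaining a positive power of $\hbar$ on a slightly enlarged cutoff, and the standard iteration. The minor difference in the per-step exponent ($\OO(\hbar^{2-\delta}+\hbar^{1+2\delta})$ for the commutator versus the paper's $\OO(\hbar^{\frac32-\delta})$) is immaterial, both hinging on $\delta<\tfrac12$.
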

\begin{proof}
We write again $\psi(x_{1},x_{2},x_{3})=e_{k,\hbar}(x_{1})\varphi(x_{2},x_{3})$ with $k\in\{1,\ldots,K\}$ and we have $\mathcal{N}^{[k], \sharp}_{\hbar}\varphi=\lambda\varphi$. We use again the formula \eqref{variational-eve} with $\chi_{0}\left(\hbar^{-\delta}\xi_{3}\right)$. We get the commutator estimate
\begin{multline*}
\left|\left\langle \left[\mathcal{N}^{[k], \sharp}_{\hbar},\Op_{\hbar}^w\left(\chi_{0}\left(\hbar^{-\delta}\xi_{3}\right)\right)\right]\varphi, \Op_{\hbar}^w\left(\chi_{0}\left(\hbar^{-\delta}\xi_{3}\right)\right)\varphi\right\rangle\right|\\
\leq C\hbar^{\frac{3}{2}-\delta}\left\| \Op_{\hbar}^w\left(\underline{\chi}_{0}\left(\hbar^{-\delta}\xi_{3}\right)\right)\varphi\right\|^2.
\end{multline*}
We have
$$\Op_{\hbar}^w\left( (\hbar^{-\delta}\xi_{3})^2\chi^2_{0}\left(\hbar^{-\delta}\xi_{3}\right)\right)=\Op_{\hbar^{1-\delta}}^w\left( \xi_{3}^2\chi^2_{0}\left(\xi_{3}\right)\right),$$
so that, with the G\aa rding inequality,
$$\left\langle\Op_{\hbar}^w\left( (\hbar^{-\delta}\xi_{3})^2\chi^2_{0}\left(\hbar^{-\delta}\xi_{3}\right)\right)\varphi, \varphi\right\rangle\geq (1-C\hbar^{1-\delta})\|\varphi\|^2.$$
We infer
\begin{multline*}
(\hbar^{2\delta}(1-Ch^{1-\delta})-\beta_{0}\hbar)\left\|\Op_{\hbar}^w\left(\chi_{0}\left(\hbar^{-\delta}\xi_{3}\right)\right)\varphi\right\|^2\\
\leq C\hbar^{\frac{3}{2}-\delta}\left\| \Op_{\hbar}^w\left(\underline{\chi}_{0}\left(\hbar^{-\delta}\xi_{3}\right)\right)\varphi\right\|^2.
\end{multline*}
\end{proof}
Using $\Op_{\hbar}^w f^\star(\hbar,\mathcal{I}_{\hbar}, x_{2}, \xi_{2}, x_{3}, \xi_{3})=\Op_{\hbar}^w f(\hbar,|z_{1}|^2, x_{2}, \xi_{2}, x_{3}, \xi_{3})$, we deduce the following in the same way.
\begin{proposition}\label{z1-loc-N}
Let $\chi_{1}$ be a smooth cutoff function being $0$ in a neighborhood of $0$ and let $\delta\in\left(0,\frac{1}{2}\right)$.  If $\lambda$ is an eigenvalue of $\mathcal{N}_{\hbar}^\sharp$ such that $\lambda\leq \beta_{0}\hbar$ and if $\psi$ is an associated eigenfunction, then we have
$$\Op_{\hbar}^w\left(\chi_{1}\left(\hbar^{-\delta}(x_{1},\xi_{1})\right)\right)\psi=\OO(\hbar^{\infty})\|\psi\|.$$
\end{proposition}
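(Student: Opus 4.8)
The plan is to reproduce, \emph{mutatis mutandis}, the commutator-and-bootstrap argument used for Propositions~\ref{space-loc-N} and \ref{xi3-loc-N}, the only change being that we now localize in the pair $(x_{1},\xi_{1})$ instead of $\xi_{3}$. The point of the identity $\Op_{\hbar}^w f^\star(\hbar,\mathcal{I}_{\hbar}, x_{2},\xi_{2},x_{3},\xi_{3})=\Op_{\hbar}^w f(\hbar,|z_{1}|^2,x_{2},\xi_{2},x_{3},\xi_{3})$ is that $\mathcal{N}_{\hbar}^\sharp$ can then be viewed as a genuine $\hbar$-pseudodifferential operator on $\R^{3}$, with symbol $\xi_{3}^{2}+|z_{1}|^{2}\underline{b}(x_{2},\xi_{2},x_{3})+f^{\star,\sharp}(\hbar,|z_{1}|^{2},x_{2},\xi_{2},x_{3},\xi_{3})$, so that one may commute it against a cutoff in the variables $z_{1}=(x_{1},\xi_{1})$ and work directly with the eigenfunction $\psi$ (no Hermite decomposition is needed). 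Fix $\chi_{1}\in\mathcal{C}^\infty$ vanishing on $\{|z_{1}|\leq 2r_{0}\}$ and set $\Theta_{\hbar}=\chi_{1}(\hbar^{-\delta}z_{1})$, which is supported in $\{|z_{1}|\geq 2r_{0}\hbar^{\delta}\}$.

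From $\mathcal{N}_{\hbar}^\sharp\psi=\lambda\psi$ and testing against $\Op_{\hbar}^w(\Theta_{\hbar})\psi$ one gets, exactly as in \eqref{variational-eve},
\[
\langle \mathcal{N}_{\hbar}^\sharp\,\Op_{\hbar}^w(\Theta_{\hbar})\psi,\Op_{\hbar}^w(\Theta_{\hbar})\psi\rangle=\lambda\|\Op_{\hbar}^w(\Theta_{\hbar})\psi\|^{2}+\langle[\mathcal{N}_{\hbar}^\sharp,\Op_{\hbar}^w(\Theta_{\hbar})]\psi,\Op_{\hbar}^w(\Theta_{\hbar})\psi\rangle .
\]
For the commutator: the elliptic symbol $\xi_{3}^{2}$ depends on variables disjoint from $z_{1}$, hence $\Op_{\hbar}^w(\xi_{3}^{2})$ commutes exactly with $\Op_{\hbar}^w(\Theta_{\hbar})$; and since neither $\underline{b}$ nor the $(\hbar,x_{2},\xi_{2},x_{3})$-dependence of $f^{\star,\sharp}$ involves $z_{1}$, one has $\{|z_{1}|^{2}\underline{b},\Theta_{\hbar}\}=\underline{b}\,\{|z_{1}|^{2},\Theta_{\hbar}\}$, with $|z_{1}|\sim\hbar^{\delta}$ on $\mathrm{supp}\,\nabla\Theta_{\hbar}$, so that $\{|z_{1}|^{2},\Theta_{\hbar}\}=\OO(1)$ there. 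Together with the lower-order contribution of $f^{\star,\sharp}=\OO(|z_{1}|^{3}+|\xi_{3}|^{3}+\hbar^{3/2})$ this gives $[\mathcal{N}_{\hbar}^\sharp,\Op_{\hbar}^w(\Theta_{\hbar})]=\OO(\hbar)$ with symbol carried by the thin shell $\{|z_{1}|\sim\hbar^{\delta}\}$, whence
\[
\bigl|\langle[\mathcal{N}_{\hbar}^\sharp,\Op_{\hbar}^w(\Theta_{\hbar})]\psi,\Op_{\hbar}^w(\Theta_{\hbar})\psi\rangle\bigr|\leq C\hbar\,\|\Op_{\hbar}^w(\underline{\Theta}_{\hbar})\psi\|^{2},
\]
$\underline{\Theta}_{\hbar}$ being a slightly enlarged cutoff, still supported away from $0$.

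For the left-hand side I would combine the lower bound \eqref{lb-Nh}, which gives $\mathcal{N}_{\hbar}^\sharp\geq(1-\eta)\Op_{\hbar}^w(\xi_{3}^{2}+|z_{1}|^{2}\underline{b})\geq c_{0}\,\Op_{\hbar}^w(|z_{1}|^{2})-C\hbar$ for some $c_{0}>0$ (using $\underline{b}>0$ and $\xi_{3}^{2}\geq 0$), with a \emph{localized} G\aa rding inequality $\langle\Op_{\hbar}^w(|z_{1}|^{2})\Op_{\hbar}^w(\Theta_{\hbar})\psi,\Op_{\hbar}^w(\Theta_{\hbar})\psi\rangle\geq c\,\hbar^{2\delta}\|\Op_{\hbar}^w(\Theta_{\hbar})\psi\|^{2}-C\hbar\|\Op_{\hbar}^w(\underline{\Theta}_{\hbar})\psi\|^{2}$, valid because $|z_{1}|^{2}\geq 4r_{0}^{2}\hbar^{2\delta}$ on $\mathrm{supp}\,\Theta_{\hbar}$. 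Since $2\delta<1$, the scale $\hbar^{2\delta}$ beats both $\lambda\leq\beta_{0}\hbar$ and the $\OO(\hbar)$ errors, and comparing the two sides yields $\|\Op_{\hbar}^w(\Theta_{\hbar})\psi\|\leq C\hbar^{(1-2\delta)/2}\|\Op_{\hbar}^w(\underline{\Theta}_{\hbar})\psi\|$; a standard iteration, using a sequence of nested cutoffs all supported in $\{|z_{1}|\geq r_{0}\}$, then upgrades this to $\Op_{\hbar}^w\bigl(\chi_{1}(\hbar^{-\delta}z_{1})\bigr)\psi=\OO(\hbar^{\infty})\|\psi\|$.

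The main obstacle is precisely the localized G\aa rding lower bound for $\Op_{\hbar}^w(|z_{1}|^{2})$ on the range of $\Op_{\hbar}^w(\Theta_{\hbar})$: the scale $\hbar^{2\delta}$ lies strictly between the quantum scale $\hbar$ and the unit scale, so ordinary G\aa rding does not apply directly; one removes the difficulty by rescaling $z_{1}=\hbar^{\delta}w$, under which $\Op_{\hbar}^w(|z_{1}|^{2})$ becomes $\hbar^{2\delta}$ times a fixed elliptic operator quantized with the new parameter $\hbar^{1-2\delta}$ (this is exactly where $\delta<\tfrac12$ is used), so that the symbolic error is a relative $\OO(\hbar^{1-2\delta})$ and hence negligible. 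Everything else is a routine repetition of the arguments already carried out in Propositions~\ref{space-loc-N}--\ref{xi3-loc-N}. Alternatively, and more quickly, one may invoke Corollary~\ref{tensorx1}: $\psi$ is a finite combination of the Hermite modes $e_{k,\hbar}(x_{1})$, $k\leq K$, each of which is $\OO(\hbar^{\infty})$-concentrated in $\{|z_{1}|\lesssim\hbar^{1/2}\}$, a region disjoint from $\mathrm{supp}\,\chi_{1}(\hbar^{-\delta}\cdot)$ since $\delta<\tfrac12$, while $\Op_{\hbar}^w\bigl(\chi_{1}(\hbar^{-\delta}z_{1})\bigr)$ acts only on the $x_{1}$ variable.
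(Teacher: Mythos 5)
Your proposal is correct and follows essentially the route the paper intends: after using $\Op_{\hbar}^w f^\star(\hbar,\mathcal{I}_{\hbar},\cdot)=\Op_{\hbar}^w f(\hbar,|z_{1}|^2,\cdot)$, the paper's one-line proof (``in the same way'') is precisely the commutator--G\aa rding--iteration scheme of Propositions~\ref{space-loc-N} and \ref{xi3-loc-N}, with the ellipticity $|z_{1}|^2\underline{b}\geq b_{0}|z_{1}|^{2}\gtrsim\hbar^{2\delta}$ beating $\lambda\leq\beta_{0}\hbar$ because $\delta<\tfrac12$, and your rescaling $z_{1}=\hbar^{\delta}w$ (new parameter $\hbar^{1-2\delta}$) is the exact analogue of the paper's trick $\Op_{\hbar}^w\bigl((\hbar^{-\delta}\xi_{3})^{2}\chi_{0}^{2}(\hbar^{-\delta}\xi_{3})\bigr)=\Op_{\hbar^{1-\delta}}^w\bigl(\xi_{3}^{2}\chi_{0}^{2}(\xi_{3})\bigr)$. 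Your shortcut via Corollary~\ref{tensorx1} and the $\OO(\hbar^{1/2})$-concentration of the finitely many Hermite modes is also valid (the $\OO(\hbar^{-2})$ number of terms is harmless) and is consistent with how the paper treats the neighboring propositions.
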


\begin{proposition}
The spectra of $\mathcal{L}_{\hbar,\bA}$ and  $\mathcal{N}_{\hbar}^\sharp$ below $\beta_0 \hbar$ coincide modulo $\OO(\hbar^\infty)$.
\end{proposition}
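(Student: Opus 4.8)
The plan is to establish the two inclusions by a quasimode argument combined with a counting argument. On one side I would use the Agmon localization of the eigenfunctions of $\mathcal{L}_{\hbar,\bA}$ (already recalled) to transfer them, via $U_\hbar$, into quasimodes of $\mathcal{N}_\hbar^\sharp$; on the other side I would use the microlocalization of the eigenfunctions of $\mathcal{N}_\hbar^\sharp$ proved in Propositions \ref{space-loc-N}, \ref{xi3-loc-N} and \ref{z1-loc-N} to transfer them back to quasimodes of $\mathcal{L}_{\hbar,\bA}$. The crucial structural input in both transfers is that, by Theorem \ref{normal-form1}(c), the symbol of the remainder $\mathcal{R}_\hbar$ has vanishing Taylor series in $(x_1,\xi_1,\xi_3,\hbar)$, so that $\mathcal{R}_\hbar$ acts as $\OO(\hbar^\infty)$ on any function microlocalized at $(x_1,\xi_1,\xi_3)=\OO(\hbar^\delta)$ for some $\delta>0$.

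First I would treat the direction from $\mathcal{L}_{\hbar,\bA}$ to $\mathcal{N}_\hbar^\sharp$. Let $(\lambda,\psi)$ be an eigenpair with $\lambda\le\beta_0\hbar$. The Agmon/Persson estimates give $\psi=\chi_1(q)\psi+\OO(\hbar^\infty)\|\psi\|$ with $\chi_1$ supported near $\{b\le\beta_0\}\subset D(0,\eps)$, so $\psi$ lives, modulo $\OO(\hbar^\infty)$, in the chart where Theorem \ref{normal-form1} applies; moreover, since $\lambda=\OO(\hbar)$, $\psi$ is microlocalized near $\Sigma$, the image of $\{x_1=\xi_1=\xi_3=0\}$ under the canonical transformation carried by $U_\hbar$. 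Hence $\tilde\psi:=U_\hbar^*\psi$ is spatially localized in $D(0,\eps)$ and microlocalized at $(x_1,\xi_1,\xi_3)=\OO(\hbar^\delta)$ (the quantitative scale coming from the harmonic-oscillator structure of $\mathcal{N}_\hbar$, exactly as in the commutator/Gårding arguments of Propositions \ref{xi3-loc-N}--\ref{z1-loc-N}, applied to the reduced two-dimensional operators via Corollary \ref{tensorx1}). Then $\mathcal{R}_\hbar\tilde\psi=\OO(\hbar^\infty)\|\tilde\psi\|$, and in the region where $\tilde\psi$ lives the cutoffs entering $\mathcal{N}_\hbar^\sharp$ (namely $\underline b$, $\chi$, $f^{\star,\sharp}$) all equal their unrestricted counterparts, so $\mathcal{N}_\hbar^\sharp\tilde\psi=\lambda\tilde\psi+\OO(\hbar^\infty)\|\tilde\psi\|$. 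Applying this to an orthonormal eigenbasis of the range of $\mathbf{1}_{(-\infty,\beta_0\hbar)}(\mathcal{L}_{\hbar,\bA})$, the spectral theorem gives that $\mathcal{N}_\hbar^\sharp$ possesses at least $\mathsf{N}(\mathcal{L}_{\hbar,\bA},\beta_0\hbar)$ eigenvalues, counted with multiplicity, lying $\OO(\hbar^\infty)$-close to those of $\mathcal{L}_{\hbar,\bA}$.

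The reverse direction is symmetric. By Lemma \ref{lem.spess} the part of the spectrum of $\mathcal{N}_\hbar^\sharp$ below $\beta_0\hbar$ is discrete. If $(\lambda,\varphi)$ is such an eigenpair, then Corollary \ref{tensorx1} together with Propositions \ref{space-loc-N}, \ref{xi3-loc-N}, \ref{z1-loc-N} shows that $\varphi$ is, modulo $\OO(\hbar^\infty)$, microlocalized near $x_1=\xi_1=\xi_3=0$ and near $\{b\le\beta_0\}\subset D(0,\eps)$. On this region $\mathcal{N}_\hbar^\sharp=\mathcal{N}_\hbar+\OO(\hbar^\infty)$ and $\mathcal{R}_\hbar=\OO(\hbar^\infty)$, so $(\mathcal{N}_\hbar+\mathcal{R}_\hbar)\varphi=\lambda\varphi+\OO(\hbar^\infty)\|\varphi\|$, and $U_\hbar\varphi$ is an $\OO(\hbar^\infty)$-quasimode of $\mathcal{L}_{\hbar,\bA}$ at energy $\lambda\le\beta_0\hbar<\inf\mathfrak{s}_{\ess}(\mathcal{L}_{\hbar,\bA})$ (using \eqref{spessentiel} and $\beta_0<b_1$); hence $\lambda$ lies $\OO(\hbar^\infty)$-close to the discrete spectrum of $\mathcal{L}_{\hbar,\bA}$ below $\beta_0\hbar$. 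Running this on an orthonormal eigenbasis yields the opposite counting inequality.

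Finally, since both $\mathsf{N}(\mathcal{L}_{\hbar,\bA},\beta_0\hbar)$ and $\mathsf{N}(\mathcal{N}_\hbar^\sharp,\beta_0\hbar)$ are polynomially bounded in $\hbar^{-1}$ by Corollary \ref{numbers-L-N}, the two one-sided comparisons of the spectral projections (onto the spectrum below $\beta_0\hbar$, up to an arbitrary $\OO(\hbar^\infty)$ margin) can be combined in the standard fashion: the ranks are forced to agree and the eigenvalues to be pairwise $\OO(\hbar^\infty)$-close, counting multiplicities, which is exactly the asserted coincidence of spectra below $\beta_0\hbar$ modulo $\OO(\hbar^\infty)$. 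I expect the only genuinely delicate points to be, first, upgrading the qualitative statement ``$\psi$ microlocalized near $\Sigma$'' into the quantitative scale $(x_1,\xi_1,\xi_3)=\OO(\hbar^\delta)$ that is needed to absorb the infinitely flat remainder $\mathcal{R}_\hbar$, and second, the bookkeeping that turns the two Hausdorff-type spectral estimates into an equality of spectra with multiplicities; both rely on the commutator and Gårding machinery together with the polynomial eigenvalue counts already established.
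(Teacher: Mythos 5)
Your proposal is correct and is essentially the argument the paper intends: its own proof is only the pointer to \cite[Section 4.3]{RVN13}, which carries out precisely this quasimode exchange (Agmon and $\hbar^{\delta}$-microlocalization so that the flat remainder $\mathcal{R}_{\hbar}$ and the cutoffs in $\mathcal{N}^{\sharp}_{\hbar}$ contribute only $\OO(\hbar^{\infty})$, followed by the polynomially bounded eigenvalue counts to match the spectra with multiplicities). The only cosmetic point is that the quantitative scale $(x_{1},\xi_{1},\xi_{3})=\OO(\hbar^{\delta})$ for $U_{\hbar}^{*}\psi$ is most naturally obtained from the proposition asserting $\psi=\chi_{0}\left(\hbar^{-2\delta}\mathcal{L}_{\hbar,\bA}\right)\chi_{1}(q)\psi+\OO(\hbar^{\infty})\|\psi\|$ transferred through $U_{\hbar}$ by Egorov, rather than from Propositions \ref{xi3-loc-N} and \ref{z1-loc-N}, which concern eigenfunctions of $\mathcal{N}^{\sharp}_{\hbar}$.
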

\begin{proof}
We refer to \cite[Section 4.3]{RVN13} which contains similar arguments.
\end{proof}
This proposition provides the point \eqref{normal-form1-d-a} in Corollary \ref{normal-form1-d}. With Proposition~\ref{finite-k}, we deduce the point \eqref{normal-form1-d-b}.

\section{Second Birkhoff normal form}\label{BNF2}

\subsection{Birkhoff analysis of the first level}
This section is devoted to the proofs of Theorems \ref{pre-normal-form2} and \ref{normal-form2}.

The goal now is to normalize a $\hbar$-pseudo-differential
operator $\mathcal{N}_{\hbar}^{[1]}$ on $\R^2$ whose Weyl symbol has the form
\[
N^{[1]}_{\hbar} = \xi_3^2 + \h b(x_2,\xi_2,x_3) + r_\h(x_2,\xi_2,x_3,\xi_3),
\]
where $r_\h$ is a classical symbol with the following asymptotic expansion:
\[
r_\h = r_0 + \h r_1 + \h^2 r_2 + \cdots
\]
(in the symbol class topology), where each $r_\ell$ has a formal expansion in $\xi_3$ of the form
\begin{equation}
r_\ell(x_2,\xi_2,x_3,\xi_3) \sim \sum_{2\ell+\beta \geq 3}
c_{\ell,\beta}(x_2,\xi_2,x_3) \xi_3^\beta.\label{eq:series}
\end{equation}
The leading terms of $N^{[1]}_{\hbar}$ are:
\begin{equation}
  N^{[1]}_{\hbar} = \xi_3^2 + \h b(x_2,\xi_2,x_3) + c_{1,1}(x_2,\xi_2,x_3)\h \xi_3 +
  \OO(\h\xi_3^2) + \OO(\xi_3^3) + \OO(\h^2).
\label{eq:N}
\end{equation}

\subsubsection{First normalization of the symbol}
We consider the following local change of variables $\hat{\varphi}(x_2,\xi_2,x_3,\xi_3) = (\hat x_2, \hat \xi_2, \hat x_3, \hat\xi_3)$:
\begin{equation}
  \left\{\begin{aligned}
    \label{eq:non-symplectic}
    \hat x_2 & := x_2 + \xi_3 \partial_2 s(x_2,\xi_2)\,,\\
    \hat \xi_2 & := \xi_2 +\xi_3 \partial_1 s(x_2,\xi_2)\,,\\
    \hat x_3 & := x_3 - s(x_2,\xi_2)\,, \\
    \hat \xi_3 & := \xi_3\,.
  \end{aligned}
\right.
\end{equation}
It is easy to check that the differential of $\hat{\varphi}$ is
invertible as soon as $\xi_3$ is small enough. Moreover, we have
\[
\hat{\varphi}^* \omega_0 - \omega_0 = \OO(|\xi_3|).
\]
By the Darboux-Weinstein theorem (see for instance~\cite[Lemma
2.4]{RVN13}), there exists a local diffeomorphism $\psi$ such that
\begin{equation}
  \psi = \textup{Id} + \OO(\xi_3^2) \quad \text{ and } \quad \psi^* \hat{\varphi}^* \omega_0 = \omega_0.
  \label{eq:darboux}
\end{equation}

Using the improved Egorov theorem, one can find a unitary Fourier Integral Operator $V_\h$
such that the Weyl symbol of $V_\h^* \mathcal{N}_{\hbar}^{[1]} V_\h$
is $\hat N_{\hbar} := N^{[1]}_{\hbar}\circ\hat{\varphi}\circ\psi +
\OO(\h^2)$. From \eqref{eq:darboux}, and~\eqref{eq:non-symplectic}, we
see that $\hat{r}_\h := r_\h\circ \hat{\varphi}\circ\psi$ is still of
the form~\eqref{eq:series}, with modified coefficients
$c_{\ell,\beta}$. Thus, using the new variables and a Taylor expansion
in $\xi_3$, we get
\begin{multline*}
  \hat N_{\hbar}  = \hat\xi_3^2 + \h b(\hat x_2 + \OO(\hat\xi_3),\hat\xi_2 + \OO(\xi_3),\hat x_3 + s(\hat x_2 + \OO(\hat\xi_3),\hat\xi_2 + \OO(\hat\xi_3))  + \OO(\hat\xi_3^2))\\ 
 + \OO(\hat\xi_3^3) + \hat r_{\hbar}+\OO(\h^2)
\end{multline*}
and thus
\begin{multline} \label{eq:hatN}
 \hat N_{\hbar}    = \hat\xi_3^2 + \h b(\hat x_2, \hat \xi_2, \hat x_3 + s(\hat x_2,
  \hat \xi_2)) + \h\hat\xi_3g(\hat x_2, \hat \xi_2, \hat x_3) \\
  +\OO(\h\hat\xi_3^2)  + \hat r_\h + \OO(\hat\xi_3^3) + \OO(\h^2),
\end{multline}
for some smooth function $g(\hat x_2, \hat \xi_2, \hat x_3)$.\\
 Therefore $\hat N_{\hbar}$ has the following form:
\[ 
\hat N_{\hbar}  = \hat\xi_3^2 + \h b(\hat x_2, \hat \xi_2, \hat x_3 + s(\hat x_2,
  \hat \xi_2)) +  \hat c_{1,1}(x_2,\hat\xi_2,\hat x_3)\h \hat\xi_3 +
  \OO(\h\hat\xi_3^2) + \OO(\hat\xi_3^3) + \OO(\h^2).
\]

\subsubsection{Where the second harmonic oscillator appears} We now drop all the hats off the variables. We use a Taylor expansion with respect to $x_3$,
which, in view of~\eqref{eq:critical}, yields:
\[
b(x_2, \xi_2, x_3 + s(x_2, \xi_2)) =
b(x_2,\xi_2,s(x_2,\xi_2)) + \frac{x_3^2}2\partial_3^2
b(x_2,\xi_2,s(x_2,\xi_2)) + \OO(x_3^3).
\]
We let: 
\begin{equation}\label{def:beta}
\nu = (\tfrac{1}{2}\partial_3^2
b(x_2,\xi_2,s(x_2,\xi_2)))^{1/4}\mbox{ and } \gamma=\ln \nu.
\end{equation}
We introduce the change of coordinates $(\check x_{2}, \check x_{3},
\check\xi_{2}, \check\xi_{3})=C(x_{2},x_{3},\xi_{2},\xi_{3})$ defined
by:
\begin{equation}
\left\{\begin{array}{ll}
\check  x_{3}  &= \nu x_{3}\,,\\
\check  \xi_{3}& =\nu^{-1} \xi_{3}\,,\\
\check x_{2}   &=x_{2}+\frac{\dr \gamma}{\dr\xi_{2}}x_{3}\xi_{3}\,,\\
\check \xi_{2}   &=\xi_{2}-\frac{\dr \gamma}{\dr x_{2}}x_{3}\xi_{3}\,,\\
\end{array}\right.\label{eq:check}
\end{equation}
for which one can check that $C^*\omega_0 - \omega_0 = \OO(x_3\xi_3) =
\OO(\xi_3)$. As before, we can make this local diffeomorphism
symplectic by the Darboux-Weinstein theorem, which
modifies~\eqref{eq:check} by $\OO(\xi_3^2)$. In the new variables
(which we call $(x_{2},x_{3},\xi_{2},\xi_{3})$ again), the symbol $\check N_{\hbar}$
has the form:
\begin{align*}
  \check N_{\hbar} & = \nu^2(x_2,\xi_2)\left(\xi_3^2 + \h x_3^2\right) + \h
  b(x_2,\xi_2,s(x_2,\xi_2)) + \check c_{1,1}(x_2,\xi_2,x_3)\h \xi_3 \nonumber \\
  &  + \OO(\h x_3^3) +\OO(\h\xi_3^2) + \OO(\xi_3^3)
  + \OO(\h^2),\label{eq:N-final}
\end{align*}
for some smooth function $\check c_{1,1}(x_2,\xi_2,x_3)$.

\subsubsection{Normalizing the remainder}
 The next step is to get rid of the term
$\check c_{1,1}(x_2,\xi_2,x_3)\h \xi_3\,$.  Let
\[
a(x_2,\xi_2,x_3) := -\frac{1}{2}\int_0^{x_3} \check c_{1,1}(x_2,\xi_2,t)dt\,.
\]
Since $\check c_{1,1}$ is compactly supported, $a$ is bounded, and one can
form the unitary pseudo-differential operator $\exp(iA)$, $A=\opweyl(a)$. We have
\[
\exp (-iA)  \Op_{\hbar}^w\left(\check N_{\hbar}\right) \exp (iA) = \Op_{\hbar}^w\left(\check N_{\hbar}\right) + \exp(-iA)[ \Op_{\hbar}^w\left(\check N_{\hbar}\right), \exp(iA)].
\]
The symbol of $[\exp(-iA) \Op_{\hbar}^w\left(\check N_{\hbar}\right), \exp(iA)]$ is
$$\frac{\h}{i}e^{-ia}\{N, e^{ia}\} + \OO(\h^2) = \h
\{\check N_{\hbar},a\} + \OO(\h^2) = \h\{\check N_0,a\} + \OO(\h^2),$$ where $\check N_0$
is the principal symbol of $\check N_{\hbar}$, which satisfies:
\[
\check N_0 = \xi_3^2 + \OO(\xi_3^3).
\]
Therefore $\{\check N_{\hbar},a\} = \{\xi_3^2, a\} + \OO(\xi_3^2)$. Since 
\[
\{\xi_3^2, a\} = 2 \xi_3 \deriv{a}{x_3} =  - \xi_3 \check c_{1,1},
\]
we get
\[
\exp (-iA) \Op_{\hbar}^w\left(\check N_{\hbar}\right) \exp (iA) =  \opweyl(\check N_{\hbar}  - \h\xi_3 \check c_{1,1} + \OO(\h\xi_3^2) + \OO(\h^2)),
\]
which shows that we can remove the coefficient of $\h \xi_3$. The new operator given by the conjugation formula $\underline{\mathcal{N}}^{[1]}_{\hbar}= \exp (-iA)  \Op_{\hbar}^w\left(\check N_{\hbar}\right) \exp (iA) $ has a symbol of the form
\begin{equation}
  \underline{N}^{[1]}_{\hbar}  = \nu^2(x_2,\xi_2)\left(\xi_3^2 + \h x_3^2\right) + \h
  b(x_2,\xi_2,s(x_2,\xi_2)) 
   + \underline{r}_\h,\label{eq:N-final}
\end{equation}
where $\underline{r}_\h=\OO(\h x_3^3) +\OO(\h\xi_3^2) + \OO(\xi_3^3) +
\OO(\h^2)$.

This proves Theorem \ref{pre-normal-form2}.

\subsubsection{The second Birkhoff normal form}\label{BNF2b}
We now want to perform a Birkhoff normal form for $\underline{\mathcal
  N}^{[1],\sharp}_{\hbar}$ relative to the ``second harmonic oscillator''
$$\underline{\nu}^2(x_2,\xi_2)\left(\xi_3^2 + \h
  x_3^2\right).$$ 
Using Notation \ref{notation}, we introduce the new
semiclassical parameter $h=\h^{\frac{1}{2}}$, and use the relation
\[
\opweyl(\underline{N}^{[1],\sharp}_{\hbar}) = {\Op_{h}^w}(\underline{\mathsf N}_{h}^{[1],\sharp}).
\]
Thus, let $\tilde\xi_j := \h^{-1/2}\xi_j$. 
The new symbol $\underline{\mathsf N}_{h}^{[1],\sharp}$ has the form:
\begin{multline*}
 \underline{\mathsf N}_{h}^{[1],\sharp}(x_2,\tilde \xi_2,x_3,\tilde \xi_3) = 
  h^2\left(\underline{\nu}^2(x_2,h \tilde \xi_2)(\tilde\xi_3^2 + x_3^2) +
    \underline{b}(x_2, h\tilde \xi_2,s(x_2, h\tilde\xi_2))\right.\\
+\left. h^{-2} \underline{r}_{ h^2}^{\sharp}(x_2, h \tilde
    \xi_2,x_3, h\tilde \xi_3)\right).
\end{multline*}
We introduce momentarily a new parameter $\mu$ and define
\begin{multline*}
  \underline{\mathsf N}_{h}^{[1],\sharp} (x_2,\tilde \xi_2,x_3,\tilde
  \xi_3;\mu) := \underline{\nu}^2(x_2,\mu \tilde \xi_2)(\tilde\xi_3^2
  + x_3^2) +
  \underline{b}(x_2,\mu\tilde \xi_2,s(x_2,\mu\tilde\xi_2))\\
  + h^{-2}\underline{r}_{h^2}^{\sharp}(x_2,\mu \tilde \xi_2,x_3,
  h\tilde \xi_3).
\end{multline*}
Notice that $\underline{\mathsf N}_{h}^{[1],\sharp}  (x_2,\tilde \xi_2,x_3,\tilde \xi_3; h) = h^{-2}\underline{\mathsf N}_{h}^{[1],\sharp}(x_2,\tilde \xi_2,x_3,\tilde \xi_3)$. We define now a space of functions suitable for the
Birkhoff normal form in $(x_3, \tilde \xi_3, h)$. Let us now use the notation of the Appendix introduced in \eqref{eq.Cm} in the case when the family of smooth linear maps $\R^2\to\R^2$ is given by  
$$\varphi_{\mu, \R^2}(x_{2},\tilde\xi_{2})=(x_{2},\mu\tilde\xi_{2})\,.$$
Let
\[
\mathscr{F} := \mathcal{C}(1)_{\R^2},
\]
where the index $\R^2$ means that we consider symbols on $\R^2$.
More explicitly, we have
\[
\mathscr{F}=\{d\text{ s. t. } \exists c \in S(1; [0,1]\times(0,1])_{\R^2} :
d(x_2,\tilde \xi_2 ; \mu, h) =c(\varphi_{\mu,\R^2}(x_{2},\tilde\xi_{2}) ; \mu, h)\}\,.
\]
Then we define
\[
\mathscr{E} := \mathscr{F}[\![x_3,\tilde \xi_3, h]\!]\,,
\]
endowed with the full Poisson bracket 
\[
\mathscr{E}\times \mathscr{E} \ni (f,g) \mapsto \{f,g\} = \sum_{j=2,3}\deriv{f}{\tilde
  \xi_j}\deriv{g}{x_j} - \deriv{g}{\tilde \xi_j}\deriv{f}{x_j} \in
\mathscr{E},
\]
and the corresponding Moyal bracket $[f,g]$. We remark that the formal
Taylor series of the symbol $ \underline{\mathsf N}_{h}^{[1],\sharp} (x_2,\tilde \xi_2,x_3,\tilde \xi_3;\mu)$ with respect to $(x_3,\tilde\xi_3, h)$
belongs to $\mathscr{E}$.  We may apply the semiclassical Birkhoff
normal form relative to the main term $\underline{\nu}^2(x_2,\mu \tilde
\xi_2)(\tilde\xi_3^2 + x_3^2) $ exactly as in Section \ref{subsec.Birkhoff} (and also \cite[Proposition 2.7]{RVN13}), where we use the fact that the function
\[
(x_2,\tilde \xi_2,x_3,\tilde \xi_3 ;
   \mu,  h) \mapsto (\underline{\nu}^2(x_2,\mu \tilde \xi_2))^{-1}
\]
belongs to $\mathscr{E}$ because $\underline{\nu}^2>C>0$ uniformly
with respect to $\mu$. Let us consider $\gamma\in\mathscr{E}$ the
formal Taylor expansion of $h^{-2}\underline{r}_{h^2}^{\sharp}(x_2,\mu
\tilde \xi_2,x_3, h\tilde \xi_3)$ with respect to $(x_3,\tilde
\xi_3,h)$. The series $\gamma$ is of valuation $3$ and we obtain two
formal series $\kappa,\tau\in\mathscr{E}$ of valuation at least $3$
such that
\[
[\kappa, x_3^2 +\tilde \xi_3^2] =0
\]
and
\[
e^{ih^{-1} \textup{ad}_\tau}(\underline{\nu}^2(x_2,\mu \tilde
\xi_2)(\tilde\xi_3^2 + x_3^2) + \gamma) = \underline{\nu}^2(x_2,\mu
\tilde \xi_2)(\tilde\xi_3^2 + x_3^2) + \kappa.
\]
The coefficients of $\tau$ are in $S(1)$ and one can find a smooth
function $\tau_{h}\in S(1)$ with compact support with respect to $(x_{3},
\tilde\xi_{3}, h)$ and whose Taylor series in $(x_{3}, \tilde\xi_{3},
h)$ is $\tau$. By the Borel summation, $\tau_h$ will actually lie in $S(m')$ with
$m'(x_2,\tilde\xi_2, x_3, \tilde \xi_3)=\langle
(x_3,\tilde\xi_3)\rangle ^{-k}$ for any $k>0$, uniformly for small
$h>0$ and $\mu\in [0,1]$. Notice that $ \underline{\mathsf
  N}_{h}^{[1],\sharp} \in \mathcal{C}(m)$ with $m=\langle
(x_3,\tilde\xi_3)\rangle^2\geq 1$, and that $m m' = \OO(1)$.  

Then, we can apply Theorem \ref{theo:egorov3} with the family of
endomorphisms of $\R^4$ defined 
$$\varphi_{\mu, \R^4}(x_2,\tilde \xi_2, x_3, \tilde
\xi_3)= (x_2,\mu\tilde\xi_2, x_3, \tilde \xi_3)\,.$$ 
Thus, the new
operator
\[
\mathfrak{M}_{h}=e^{ih^{-1}\Op_{h}^w \tau_{h}} \underline{\mathfrak{
    N}}^{[1],\sharp}_{h} e^{-ih^{-1}\Op_{h}^w \tau_{h}}
\]
is a pseudo-differential operator whose Weyl symbol belongs to the
class $\mathcal{C}(m)$ modulo $h^\infty S(1)$ (see the notations of Theorem \ref{normal-form2}). Moreover, thanks to
Theorem \ref{theo:egorov4}, its symbol $\mathsf{M}_{h}$ admits the
following Taylor expansion (with respect to $(x_{3},
\tilde\xi_{3},h)$)
$$\tilde b(x_2,\mu\tilde \xi_2,s(x_2,\mu\tilde\xi_2))+\underline{\nu}^2(x_2,\mu\tilde \xi_2)(\tilde\xi_3^2 + x_3^2) + \kappa.$$
We write $\kappa=\sum_{m+2\ell\geq 3} c_{m,\ell}(x_{2}, \mu\tilde \xi_{2}) |\tilde z_{3}|^{\star 2m} h^\ell$ and we may find a smooth function $g^\star(x_{2},\mu\tilde\xi_{2}, Z, h)$ such that its Taylor series with respect to $Z$, $h$ is 
$$\sum_{2m+2\ell\geq 3} c_{m,\ell}(x_{2}, \mu\tilde \xi_{2}) Z^{m}  h^\ell.$$
We may now replace $\mu$ by $h$, which achieves the proof of Theorem
\ref{normal-form2}.

\subsection{Spectral reduction to the second normal form}
This section is devoted to the proof of Corollary \ref{normal-form2-d}.
\subsubsection{From $\mathcal{N}^{[1],\sharp}_{\hbar}$ to $\underline{\mathcal{N}}^{[1],\sharp}_{\hbar}$}
In this section, we prove Corollary \ref{pre-normal-form2-d}.
\begin{lemma}\label{number-N-uN}
We have
$$\mathsf{N}\left(\mathcal{N}^{[1],\sharp}_{\hbar},\beta_{0}\hbar\right)=\OO(\hbar^{-2}),\qquad \mathsf{N}\left(\underline{\mathcal{N}}^{[1],\sharp}_{\hbar},\beta_{0}\hbar\right)=\OO(\hbar^{-2}).$$
\end{lemma}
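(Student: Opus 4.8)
The first estimate is essentially free. By \eqref{Nheta1} and \eqref{Nhk}, $\mathcal{N}^{[1],\sharp}_{\hbar}$ is nothing but the $k=1$ summand in the orthogonal Hermite decomposition $\mathcal{N}_{\hbar}^{\sharp}=\bigoplus_{k\geq1}\mathcal{N}_{\hbar}^{[k],\sharp}$, so $\mathsf{N}(\mathcal{N}^{[1],\sharp}_{\hbar},\beta_{0}\hbar)\leq\mathsf{N}(\mathcal{N}_{\hbar}^{\sharp},\beta_{0}\hbar)=\OO(\hbar^{-2})$ by Corollary~\ref{numbers-L-N}. (The same bound also follows from the argument below applied to $\mathcal{N}^{[1],\sharp}_{\hbar}$ itself: since $f^{\star,\sharp}$ has arbitrarily small support in $\xi_{3}$ and, at $I=\hbar$, is $\OO(\xi_{3}^{3})+\OO(\hbar\xi_{3})+\OO(\hbar^{2})$, one gets $N^{[1],\sharp}_{\hbar}\geq(1-\eta)\bigl(\xi_{3}^{2}+\hbar\underline{b}(x_{2},\xi_{2},x_{3})\bigr)$ for $\hbar$ and the support of $f^{\star}$ small.)

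For $\underline{\mathcal{N}}^{[1],\sharp}_{\hbar}$ the plan is: (i) a pointwise symbol lower bound by a fixed confining model, (ii) Gårding, (iii) Weyl's law. For (i): by Theorem~\ref{pre-normal-form2} and Corollary~\ref{pre-normal-form2-d}, $\underline{r}^{\sharp}_{\hbar}=\chi\,\underline{r}_{\hbar}$ with $\underline{r}_{\hbar}=\OO(\hbar x_{3}^{3})+\OO(\hbar\xi_{3}^{2})+\OO(\xi_{3}^{3})+\OO(\hbar^{2})$ and $\chi$ supported in $D(0,2\eps)$. On the support of $\chi$ one has $|x_{3}|,|\xi_{3}|\leq2\eps$, hence $\hbar|x_{3}|^{3}\leq2\eps\,\hbar x_{3}^{2}$, $|\xi_{3}|^{3}\leq2\eps\,\xi_{3}^{2}$, $\hbar\xi_{3}^{2}\leq\xi_{3}^{2}$, while off the support $\underline{r}^{\sharp}_{\hbar}$ vanishes. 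Since $\underline{\nu}^{2}\geq c_{0}>0$ and $\underline{b}(x_{2},\xi_{2},s(x_{2},\xi_{2}))\geq b_{0}>0$, for every $\eta\in(0,1)$ I choose first $\eps$ and then $\hbar$ small so that
\[
\underline{N}^{[1],\sharp}_{\hbar}\ \geq\ (1-\eta)\Bigl(\underline{\nu}^{2}(x_{2},\xi_{2})(\xi_{3}^{2}+\hbar x_{3}^{2})+\hbar\,\underline{b}(x_{2},\xi_{2},s(x_{2},\xi_{2}))\Bigr)\ \geq\ (1-\eta)\,\hbar\,p,
\]
where $p:=c_{0}(\xi_{3}^{2}+x_{3}^{2})+\underline{b}(x_{2},\xi_{2},s(x_{2},\xi_{2}))$ (using $\hbar\leq1$). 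The symbol $p$ is $\hbar$-independent; taking the extension $\underline{b}$ to be confining (harmless and compatible with \eqref{hyp4}), the set $\{p<E\}$ is bounded for every $E$.

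For (ii)--(iii): the semiclassical Gårding inequality, applied to the nonnegative symbol $\underline{N}^{[1],\sharp}_{\hbar}-(1-\eta)\hbar p$ after the usual localization of the spectral subspace of $\underline{\mathcal{N}}^{[1],\sharp}_{\hbar}$ below $\beta_{0}\hbar$ to a bounded region of phase space (where that symbol and $p$ have controlled seminorms), yields $\underline{\mathcal{N}}^{[1],\sharp}_{\hbar}\geq(1-\eta)\hbar\,\Op_{\hbar}^{w}(p)-C\hbar$. Since the Gårding loss $C\hbar$ and the threshold $\beta_{0}\hbar$ are both $\OO(\hbar)$, the min-max principle gives
\[
\mathsf{N}\bigl(\underline{\mathcal{N}}^{[1],\sharp}_{\hbar},\beta_{0}\hbar\bigr)\ \leq\ \mathsf{N}\Bigl(\Op_{\hbar}^{w}(p),\tfrac{\beta_{0}+C}{1-\eta}\Bigr)=\OO(\hbar^{-2}),
\]
the last equality being Weyl's law for the confining two-dimensional $\hbar$-pseudo-differential operator $\Op_{\hbar}^{w}(p)$ at a fixed energy (equivalently a Lieb--Thirring or Cwikel--Lieb--Rozenblum bound, or simply $\operatorname{Tr}e^{-\Op_{\hbar}^{w}(p)}=\OO(\hbar^{-2})$, exactly as in the proof of Corollary~\ref{numbers-L-N}).

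The only point requiring care is the symbol-class bookkeeping in step (ii): $p$ (hence $\underline{N}^{[1],\sharp}_{\hbar}$) grows at infinity, so one must first confine the relevant spectral projector to a bounded set --- just as was done for $\mathcal{N}_{\hbar}^{\sharp}$ earlier --- or replace the crude $\OO(\hbar)$ Gårding loss by the $\OO(\hbar^{2})$ of Fefferman--Phong, which then does not even require tampering with the extension $\underline{b}$. Everything else is routine; the bound $\OO(\hbar^{-2})$ is intentionally very rough (the true order of both counting functions is $\OO(\hbar^{-1/2})$), but it is all that is needed here.
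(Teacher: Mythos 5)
Your argument is correct and essentially coincides with the paper's: the first bound is exactly Proposition \ref{finite-k} combined with Corollary \ref{numbers-L-N} (your operator is the $k=1$ block of the Hermite decomposition), and the second is obtained, as the paper intends, by repeating the method of Corollary \ref{numbers-L-N} --- a lower bound by $(1-\eta)\hbar$ times a fixed confining two-dimensional operator, followed by the min--max principle and Weyl-type counting. The only difference is bookkeeping: the paper absorbs $\underline{r}^{\sharp}_{\hbar}$ as a \emph{relative} error (Calder\'on--Vaillancourt on its small support, as for \eqref{lb-Nh}), so the comparison threshold stays at $(1-\eta)^{-1}\beta_{0}<\tilde\beta_{0}$ and \eqref{hyp4} already makes the sublevel set compact, so that neither a confining re-choice of $\underline{b}$ nor Fefferman--Phong is needed; incidentally, the sharp order of both counting functions below $\beta_{0}\hbar$ is $\hbar^{-3/2}$ rather than $\hbar^{-1/2}$ (the latter concerns the window $b_{0}\hbar+c\sigma^{1/2}\hbar^{3/2}$ of Corollary \ref{coro:main}).
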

\begin{proof}
The first estimate comes from Proposition \ref{finite-k} and Corollary \ref{numbers-L-N}. The second estimate can be obtained by the same method as in the proof of Corollary \ref{numbers-L-N}.
\end{proof}
Let us now summarize the microlocalization properties of the eigenfunctions of $\underline{\mathcal{N}}^{[1],\sharp}_{\hbar}$ in the following proposition.
\begin{proposition}\label{micro-uN}
Let $\chi_{0}$ be a smooth cutoff function on $\R$ being $0$ in a neighborhood of $0$ and let $\delta\in\left(0,\frac{1}{2}\right)$. Let $\chi$ be a smooth cutoff function being $0$ on the bounded set $\{x_{3}^2+\underline{b}(x_{2},\xi_{2},s(x_{2},\xi_{2}))\leq\beta_{0}\}$ and $1$ on the set $\{x_{3}^2+\underline{b}(x_{2},\xi_{2},s(x_{2},\xi_{2}))\geq\beta_{0}+\tilde\eps\}$, with $\tilde\eps>0$. If $\lambda$ is an eigenvalue of $\underline{\mathcal{N}}^{[1],\sharp}_{\hbar}$ such that $\lambda\leq \beta_{0}\hbar$ and if $\psi$ is an associated eigenfunction, then we have
$$\Op_{\hbar}^w\left(\chi(x_{2},\xi_{2},x_{3})\right)\psi=\OO(\hbar^{\infty})\|\psi\|,$$
and 
$$\Op_{\hbar}^w\left(\chi_{0}\left(\hbar^{-\delta}\xi_{3}\right)\right)\psi=\OO(\hbar^{\infty})\|\psi\|.$$
\end{proposition}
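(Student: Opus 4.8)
The plan is to repeat, in this two-dimensional setting, the scheme used for $\mathcal{N}_{\hbar}^{\sharp}$ in Propositions~\ref{space-loc-N}, \ref{xi3-loc-N} and \ref{z1-loc-N}: test the eigenvalue equation against $\Op_{\hbar}^{w}(\chi)\psi$ (resp. against $\Op_{\hbar}^{w}(\chi_{0}(\hbar^{-\delta}\xi_{3}))\psi$), estimate the resulting commutator, invoke a semiclassical G\aa rding lower bound on the support of the cut-off, and iterate. A simplification compared to Proposition~\ref{space-loc-N} is that $\underline{\mathcal{N}}^{[1],\sharp}_{\hbar}$ already acts on $L^{2}(\R^{2}_{x_{2},x_{3}})$, so no tensorial reduction \textit{à la} Corollary~\ref{tensorx1} in the variable $x_{1}$ is needed; I would work directly with a normalized eigenfunction $\psi(x_{2},x_{3})$ associated with an eigenvalue $\lambda\leq\beta_{0}\hbar$.

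For the spatial localization, I would write, as in~\eqref{variational-eve},
\[
\langle \underline{\mathcal{N}}^{[1],\sharp}_{\hbar}\Op_{\hbar}^{w}(\chi)\psi,\Op_{\hbar}^{w}(\chi)\psi\rangle=\lambda\|\Op_{\hbar}^{w}(\chi)\psi\|^{2}+\langle[\underline{\mathcal{N}}^{[1],\sharp}_{\hbar},\Op_{\hbar}^{w}(\chi)]\psi,\Op_{\hbar}^{w}(\chi)\psi\rangle .
\]
Since $\chi$ does not involve $\xi_{3}$, and since in the symbol $\underline{N}^{[1],\sharp}_{\hbar}=\underline{\nu}^{2}(x_{2},\xi_{2})(\xi_{3}^{2}+\hbar x_{3}^{2})+\hbar\,\underline{b}(x_{2},\xi_{2},s(x_{2},\xi_{2}))+\underline{r}^{\sharp}_{\hbar}$ the remainder $\underline{r}^{\sharp}_{\hbar}$ is compactly supported in space and is $\OO(\hbar\xi_{3}^{2})+\OO(\xi_{3}^{3})+\OO(\hbar)$ there, the commutator has symbol $\tfrac{\hbar}{i}\{\underline{N}^{[1],\sharp}_{\hbar},\chi\}+\OO(\hbar^{2})$, whose only term not immediately $\OO(\hbar^{3/2})$ after integration is the first-order one $2\hbar\,\underline{\nu}^{2}\,\xi_{3}\,\partial_{x_{3}}\chi$. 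As in Proposition~\ref{space-loc-N}, I would first exploit the coercivity $\underline{N}^{[1],\sharp}_{\hbar}\geq c_{0}\xi_{3}^{2}-C\hbar$ — valid because the constant extension $\underline{\nu}$ is bounded below by some $c_{0}>0$ and $\underline{r}^{\sharp}_{\hbar}$ is a bounded symbol — to get the analogue of~\eqref{xi3chi}, i.e. $\|\Op_{\hbar}^{w}(\xi_{3})\Op_{\hbar}^{w}(\chi)\psi\|\leq C\hbar^{1/2}\|\Op_{\hbar}^{w}(\underline{\chi})\psi\|$ for a slightly enlarged cut-off $\underline{\chi}$, and then bootstrap to improve the commutator contribution to $\OO(\hbar^{3/2})\|\Op_{\hbar}^{w}(\underline{\chi})\psi\|^{2}$. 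Comparing with the lower bound $\underline{N}^{[1],\sharp}_{\hbar}\geq\hbar\big(\underline{\nu}^{2}x_{3}^{2}+\underline{b}(x_{2},\xi_{2},s(x_{2},\xi_{2}))\big)-C\hbar^{2}$ and using G\aa rding on the support of $\chi$, where $\underline{\nu}^{2}x_{3}^{2}+\underline{b}\geq\beta_{0}+\eps_{0}$ for some $\eps_{0}>0$ (here $\underline{\nu}^{2}\geq c_{0}>0$ is used again), I would get $\|\Op_{\hbar}^{w}(\chi)\psi\|^{2}\leq C\hbar^{1/2}\|\Op_{\hbar}^{w}(\underline{\chi})\psi\|^{2}$, and the usual iteration over a nested family of cut-offs yields $\Op_{\hbar}^{w}(\chi)\psi=\OO(\hbar^{\infty})\|\psi\|$.

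For the localization in $\xi_{3}$ at scale $\hbar^{\delta}$, $\delta\in(0,\tfrac12)$, I would reproduce the proof of Proposition~\ref{xi3-loc-N} with $\chi_{0}(\hbar^{-\delta}\xi_{3})$ in place of $\chi$: the Poisson bracket of the leading part $\underline{\nu}^{2}(\xi_{3}^{2}+\hbar x_{3}^{2})+\hbar\,\underline{b}(x_{2},\xi_{2},s(x_{2},\xi_{2}))$ with $\chi_{0}(\hbar^{-\delta}\xi_{3})$ is small (the terms $\underline{\nu}^{2}\xi_{3}^{2}$ and $\hbar\,\underline{b}$ even bracket to zero, and — using the spatial localization just obtained, so that $x_{3}$ stays bounded and $\underline{r}^{\sharp}_{\hbar}$ is genuinely a bounded symbol — the rest contributes factors $\hbar^{-\delta}$ compensated by gains from $\underline{r}^{\sharp}_{\hbar}$ or from the compact $x_{3}$-support), so the commutator is $\OO(\hbar^{1+\eta})$ for some $\eta>0$; then, exactly as in Proposition~\ref{xi3-loc-N}, the rescaling identity $\Op_{\hbar}^{w}\big((\hbar^{-\delta}\xi_{3})^{2}\chi_{0}^{2}(\hbar^{-\delta}\xi_{3})\big)=\Op_{\hbar^{1-\delta}}^{w}\big(\xi_{3}^{2}\chi_{0}^{2}(\xi_{3})\big)$ together with G\aa rding gives $\underline{\nu}^{2}\xi_{3}^{2}\geq(c_{0}-C\hbar^{1-\delta})\hbar^{2\delta}$ on the range of $\Op_{\hbar}^{w}(\chi_{0}(\hbar^{-\delta}\xi_{3}))$, and since $\hbar^{2\delta}\gg\beta_{0}\hbar$ one concludes by iteration as before. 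I expect the only genuine difficulty to be — as in Proposition~\ref{space-loc-N} and in the two-dimensional analysis of~\cite[Section~4]{RVN13} — the control of the first-order term $\hbar\,\underline{\nu}^{2}\,\xi_{3}\,\partial_{x_{3}}\chi$ in the spatial-localization commutator, which forces the two-step argument (an a priori $\hbar^{1/2}$-bound on $\Op_{\hbar}^{w}(\xi_{3})\Op_{\hbar}^{w}(\chi)\psi$, followed by a bootstrap); the remaining steps are routine repetitions of arguments already carried out in the previous section.
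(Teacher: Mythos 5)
Your proposal follows essentially the same route as the paper: the paper's proof of this proposition is literally a one-line reference back to Propositions \ref{space-loc-N} and \ref{xi3-loc-N}, and your write-up is exactly the intended adaptation (commutator estimate, a priori bound on $\Op_{\hbar}^w(\xi_{3})\Op_{\hbar}^w(\chi)\psi$, G\aa rding on the support of the cutoff, iteration, and the rescaling identity for the $\hbar^{-\delta}\xi_{3}$ cutoff), together with the correct remark that no tensorial reduction in $x_{1}$ is needed here. The only caveat, which is shared with the paper's own formulation rather than introduced by you, is that the G\aa rding step really controls the weighted effective potential $\underline{\nu}^{2}x_{3}^{2}+\underline{b}$, so your claimed lower bound $\underline{\nu}^{2}x_{3}^{2}+\underline{b}\geq\beta_{0}+\eps_{0}$ on $\operatorname{supp}\chi$ needs either $\underline{\nu}^{2}\geq 1$ or the localization set to be taken as $\{\underline{\nu}^{2}x_{3}^{2}+\underline{b}\leq\beta_{0}\}$, a harmless adjustment for the way the proposition is used afterwards.
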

\begin{proof}
The proof follows exactly the same lines as for Propositions \ref{space-loc-N} and \ref{xi3-loc-N}.
\end{proof}
Lemma \ref{number-N-uN} and Proposition \ref{micro-uN} on the one hand and Propositions \ref{space-loc-N} and \ref{xi3-loc-N} on the other hand are enough to deduce from Theorem \ref{pre-normal-form2} the point \eqref{pre-normal-form2-d-a} in Corollary \ref{pre-normal-form2-d}. The point \eqref{pre-normal-form2-d-b} easily follows from Corollary \ref{normal-form1-d}.

\subsubsection{From $\underline{\mathfrak{N}}^{[1],\sharp}_{h}$ to $\mathfrak{M}^\sharp_{h}$}
Let us now prove the point \eqref{normal-form2-d-a} in Corollary \ref{normal-form2-d}. We get the following rough estimate of the number of eigenvalues.
\begin{lemma}\label{number-N-uN'}
We have
\begin{equation}\label{eq.N-uN'1}
\mathsf{N}\left(\underline{\mathfrak{N}}^{[1],\sharp}_{h},\beta_{0}h^2\right)=\mathsf{N}\left(\mathfrak{M}_{h},\beta_{0}h^2\right)=\OO(h^{-4})\,,
\end{equation}
\begin{equation}\label{eq.N-uN'2}
\mathsf{N}\left(\mathfrak{M}^\sharp_{h},\beta_{0}h^2\right)=\OO(h^{-4})\,.
\end{equation}
\end{lemma}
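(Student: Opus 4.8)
The plan is to establish the three counting estimates in turn, reducing two of them to facts already at our disposal and handling the third by a direct, deliberately crude lower bound in the spirit of Lemma~\ref{lem.spess} and Corollary~\ref{numbers-L-N}.

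First I would dispose of $\underline{\mathfrak{N}}^{[1],\sharp}_{h}$. By Notation~\ref{notation} and part~(a) of Theorem~\ref{normal-form2}, this operator is literally $\underline{\mathcal{N}}^{[1],\sharp}_{\hbar}$ with $\hbar=h^2$, merely re-read as an $h$-pseudodifferential operator; in particular it has the same spectrum and the same eigenvalue counting function. Since $\beta_{0}h^2=\beta_{0}\hbar$, the second estimate of Lemma~\ref{number-N-uN} then gives at once $\mathsf{N}(\underline{\mathfrak{N}}^{[1],\sharp}_{h},\beta_{0}h^2)=\OO(\hbar^{-2})=\OO(h^{-4})$. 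Next, because Theorem~\ref{normal-form2} supplies an \emph{exactly} unitary operator $W_{h}$ with $\mathfrak{M}_{h}=W_{h}^*\underline{\mathfrak{N}}^{[1],\sharp}_{h}W_{h}$ (all remainder terms $h^2\mathsf{R}_{h}+h^\infty S(1)$ included), the operators $\mathfrak{M}_{h}$ and $\underline{\mathfrak{N}}^{[1],\sharp}_{h}$ are unitarily equivalent and thus share the same counting function. This settles \eqref{eq.N-uN'1}.

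For $\mathfrak{M}^\sharp_{h}$, which is obtained from $\mathfrak{M}_{h}$ by simply deleting the remainder $h^2\mathsf{R}_{h}+h^\infty S(1)$ and is therefore not \emph{a priori} spectrally comparable to it, I would argue directly. Writing $\mathsf{M}^\sharp_{h}=h^2\big(\underline{b}(x_2,h\tilde\xi_2,s(x_2,h\tilde\xi_2))+(\tilde\xi_3^2+x_3^2)\underline{\nu}^2(x_2,h\tilde\xi_2)+g^\star\big)$, one uses that $\underline{\nu}^2\geq\nu_0^2>0$ uniformly, that the $Z$-support of $g^\star$ may be taken as small as we wish, and the semiclassical Calder\'on--Vaillancourt and G\aa rding inequalities, to obtain, for any small $\eta>0$ and $h$ small, a bound of the form
$$\mathfrak{M}^\sharp_{h}\geq (1-\eta)\,h^2\,\Op_h^w\big(\underline{b}(x_2,h\tilde\xi_2,s(x_2,h\tilde\xi_2))+\tilde\xi_3^2+x_3^2\big).$$
The operator on the right is confining below $\beta_0/(1-\eta)<\tilde\beta_0$ (by the defining property of $\underline{b}$), hence has discrete spectrum there, so by the min-max principle $\mathsf{N}(\mathfrak{M}^\sharp_{h},\beta_0 h^2)$ is bounded by the number of its eigenvalues below $\beta_0/(1-\eta)$; the Weyl asymptotics — after the change of variable $\eta_2=h\tilde\xi_2$, which contributes one factor $h^{-1}$ of phase-space volume — bound this by $\OO(h^{-3})$, a fortiori by $\OO(h^{-4})$. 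This proves \eqref{eq.N-uN'2}.

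The only genuinely delicate point will be that rough lower bound on $\mathfrak{M}^\sharp_{h}$: one has to absorb the term $g^\star(h,\mathcal{J}_{h},x_2,h\tilde\xi_2)$, which is of the same nominal order $h^2$ as the principal part. This is handled exactly as the term $f^\star$ was handled in the proof of Lemma~\ref{lem.spess} — exploiting that $g^\star$ has arbitrarily small support in $Z$, so a small Calder\'on--Vaillancourt norm, and that it vanishes to order three at $Z=h=0$. Everything else is routine.
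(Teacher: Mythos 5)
Your proof is correct and is essentially the paper's own argument: for \eqref{eq.N-uN'1} the paper likewise combines the fact that $\underline{\mathfrak{N}}^{[1],\sharp}_{h}$ is just $\underline{\mathcal{N}}^{[1],\sharp}_{\hbar}$ re-read in the $h$-quantization (so Lemma \ref{number-N-uN} applies) with the exact unitarity of $W_{h}$ from Theorem \ref{normal-form2}, and for \eqref{eq.N-uN'2} it likewise absorbs $g^{\star}$ (small $Z$-support, Calder\'on--Vaillancourt, G\aa rding) to bound $\mathfrak{M}^{\sharp}_{h}$ from below by $\Op_{h}^w\left(h^2\underline{b}(x_2,h\tilde\xi_2,s(x_2,h\tilde\xi_2))\right)+\tfrac{c}{2}h^2\mathcal{J}_{h}$, concluding by separation of variables and the min-max principle. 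Two small points of detail: your prefactor $(1-\eta)$ in front of $\tilde\xi_{3}^2+x_{3}^2$ tacitly assumes $\underline{\nu}^2\geq 1$ and should be a constant of the form $\min(\nu_0^2,1)-\eta$, which changes nothing in the count; and rather than invoking a Weyl law for the $h$-dependent (anisotropically scaled) symbol, it is cleaner to exploit that your comparison operator is a sum of commuting operators in separate variables, namely $\Op_{\hbar}^w\left(\underline{b}(x_2,\xi_2,s(x_2,\xi_2))\right)$ (a fixed $1$D confining symbol, giving $\OO(\hbar^{-1})=\OO(h^{-2})$ eigenvalues below the threshold) plus a positive multiple of $\mathcal{J}_{h}$ (explicit spectrum $(2k-1)h$, so $\OO(h^{-1})$ admissible levels), which is exactly the paper's ``separation of variables and min-max'' step and yields your $\OO(h^{-3})\subset\OO(h^{-4})$ rigorously.
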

\begin{proof}
First, we notice that $\underline{\mathfrak{N}}^{[1],\sharp}_{h}$ and $\mathfrak{M}_{h}$ are unitarily equivalent so that \eqref{eq.N-uN'1} holds. Then, given $\eta>0$ and $h$ small enough and up to shrinking the support of $g^\star$ and by using the Calderon-Vaillancourt theorem (as in the proof of Lemma \ref{lem.spess}), $\mathfrak{M}^\sharp_{h}\geq\widetilde{\mathfrak{M}}^\sharp_{h}$ in the sense of quadratic forms,
with
$$\widetilde{\mathfrak{M}}^\sharp_{h}=\Op_{h}^w\left(h^2 \underline{b}(x_2,h \tilde\xi_2,s(x_2, h\tilde\xi_2))\right)+h^2\mathcal{J}_{h}\Op_{h}^w\left(\left(\underline{\nu}^2(x_2,h\tilde\xi_2)\right)-\eta\right)\,.$$
Since $\underline{\nu}^2\geq c>0$, we get
\begin{multline*}
\Op_{h}^w\left(h^2 \underline{b}(x_2,h \tilde\xi_2,s(x_2, h\tilde\xi_2))\right)+h^2\mathcal{J}_{h}\Op_{h}^w\left(\left(\underline{\nu}^2(x_2,h\tilde\xi_2)\right)-\eta\right)\\
\geq \Op_{h}^w\left(h^2 \underline{b}(x_2,h \tilde\xi_2,s(x_2, h\tilde\xi_2))\right)+\frac{c}{2}h^2\mathcal{J}_{h}\,.
\end{multline*}
We deduce the upper bound \eqref{eq.N-uN'2} by separation of variables and the min-max principle.
\end{proof}

The following proposition deals with the microlocal properties of the eigenfunctions of $\underline{\mathfrak{N}}^{[1],\sharp}_{h}$.
\begin{proposition}\label{microloc-gN}
Let $\eta\in(0,1), \delta\in\left(0,\frac{\eta}{2}\right), C>0$. Let $\chi$ be a smooth cutoff function being $0$ on $\{\underline{b}(x_{2},\xi_{2},s(x_{2},\xi_{2}))\leq\beta_{0}\}$ and being $1$ on the set $\{\underline{b}(x_{2},\xi_{2},s(x_{2},\xi_{2}))\geq\beta_{0}+\tilde\eps\}$, with $\tilde\eps>0$. Let also $\chi_{1}$ be a smooth cutoff function on $\R^2$, being $0$ in a neighborhood of $0$.

If $\lambda$ is an eigenvalue of $\underline{\mathfrak{N}}^{[1],\sharp}_{h}$ such that $\lambda\leq \beta_{0}h^2$ and if $\psi$ is an associated eigenfunction, we have
\begin{equation}\label{chix2htxi2}
\Op_{h}^w\left(\chi(x_{2},h\tilde\xi_{2})\right)\psi=\OO(h^{\infty})\|\psi\|
\end{equation}
and if $\lambda$ is an eigenvalue of $\underline{\mathfrak{N}}^{[1],\sharp}_{h}$ such that $\lambda\leq b_{0}h^2+Ch^{2+\eta}$ and if $\psi$ is an associated eigenfunction, we have
\begin{equation}\label{chix3txi3}
\Op_{h}^w\left(\chi_{1}(h^{-\delta}(x_{3},\tilde\xi_{3}))\right)\psi=\OO(h^{\infty})\|\psi\|.
\end{equation}
\end{proposition}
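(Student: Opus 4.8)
The plan is to establish both estimates by the quadratic-form-and-commutator scheme already used for Propositions~\ref{space-loc-N}, \ref{xi3-loc-N} and~\ref{micro-uN}, carried out now in the $h$-quantization. For a cutoff $\mathcal{X}=\Op_{h}^w(X)$ and an eigenpair $(\lambda,\psi)$ of $\underline{\mathfrak{N}}^{[1],\sharp}_{h}$ one starts from
\[
\langle\underline{\mathfrak{N}}^{[1],\sharp}_{h}\mathcal{X}\psi,\mathcal{X}\psi\rangle=\lambda\|\mathcal{X}\psi\|^{2}+\langle[\underline{\mathfrak{N}}^{[1],\sharp}_{h},\mathcal{X}]\psi,\mathcal{X}\psi\rangle,
\]
bounds the left-hand side from below by G\aa rding on $\mathrm{supp}\,X$, and iterates. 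Recall that the Weyl symbol of $\underline{\mathfrak{N}}^{[1],\sharp}_{h}$ is $h^{2}\underline\nu^{2}(x_{2},h\tilde\xi_{2})(\tilde\xi_{3}^{2}+x_{3}^{2})+h^{2}\underline b(x_{2},h\tilde\xi_{2},s(x_{2},h\tilde\xi_{2}))+h^{2}\underline{\mathsf r}_{h}$, with $\underline{\mathsf r}_{h}$ supported in $D(0,2\eps)$ and of size $\OO(|x_{3}|^{3})+\OO(h^{2}\tilde\xi_{3}^{2})+\OO(h\tilde\xi_{3}^{3})+\OO(h^{2})$. Any eigenfunction with $\lambda\leq\beta_{0}h^{2}$ satisfies Proposition~\ref{micro-uN} — with an exponent $\delta'$ that we may take as close to $\tfrac12$ as we wish — so modulo $\OO(h^{\infty})\|\psi\|$ we may replace $\psi$ by its microlocalization to $\{|\tilde\xi_{3}|\leq Ch^{2\delta'-1}\}$; since moreover $|x_{3}|\leq2\eps$ on $\mathrm{supp}\,\underline{\mathsf r}_{h}$ and $\OO(|x_{3}|^{3})\leq\eps\,\OO(x_{3}^{2}+\tilde\xi_{3}^{2})$, the term $h^{2}\underline{\mathsf r}_{h}$ is, on the relevant range, dominated by $\eps\,h^{2}\mathcal{J}_{h}$ plus a term negligible next to the spectral gaps at hand. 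Together with $\underline b\geq b_{0}$, $\underline\nu^{2}\geq c>0$ and G\aa rding this yields the a priori bounds $\langle\mathcal{J}_{h}\psi,\psi\rangle=\OO(1)\|\psi\|^{2}$ when $\lambda\leq\beta_{0}h^{2}$, and $\langle\mathcal{J}_{h}\psi,\psi\rangle=\OO(h^{\eta})\|\psi\|^{2}$ when $\lambda\leq b_{0}h^{2}+Ch^{2+\eta}$.

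For~\eqref{chix2htxi2} I take $\mathcal{X}=\Op_{h}^w(\chi(x_{2},h\tilde\xi_{2}))$ (which is $\Op_{\hbar}^w(\chi(x_{2},\xi_{2}))$ in the $\hbar$-picture). The crucial remark is that $\chi$ does not involve $(x_{3},\tilde\xi_{3})$ and depends on $\tilde\xi_{2}$ only through $h\tilde\xi_{2}$, so in the Poisson bracket of $\chi$ with the symbol of $\underline{\mathfrak{N}}^{[1],\sharp}_{h}$ every $\partial_{\tilde\xi_{2}}$ produces an extra factor $h$; hence $[\underline{\mathfrak{N}}^{[1],\sharp}_{h},\mathcal{X}]=\OO(h^{4})(\mathcal{J}_{h}+\mathrm{Id})$ up to operators on a slightly enlarged support. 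On $\mathrm{supp}\,\chi$ one has $\underline b(x_{2},h\tilde\xi_{2},s(x_{2},h\tilde\xi_{2}))\geq\beta_{0}+\delta_{0}$ for some $\delta_{0}>0$ (the support of $\chi$ being compact and contained in $\{\underline b(\cdot,s)>\beta_{0}\}$), so G\aa rding, the positivity of the oscillator term and the smallness of $\underline{\mathsf r}_{h}$ give $\langle\underline{\mathfrak{N}}^{[1],\sharp}_{h}\mathcal{X}\psi,\mathcal{X}\psi\rangle\geq(\beta_{0}+\tfrac{\delta_{0}}{2})h^{2}\|\mathcal{X}\psi\|^{2}$. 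Combined with $\lambda\leq\beta_{0}h^{2}$ and the $\OO(1)$ bound on $\mathcal{J}_{h}$, this yields $\|\mathcal{X}\psi\|^{2}\leq Ch^{2}\|\Op_{h}^w(\underline\chi(x_{2},h\tilde\xi_{2}))\psi\|^{2}$ with $\underline\chi$ on a larger support, and the usual iteration turns the gain $h^{2}$ into $h^{\infty}$.

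For~\eqref{chix3txi3} I take $\mathcal{X}=\Op_{h}^w(\chi_{1}(h^{-\delta}(x_{3},\tilde\xi_{3})))$. On $\mathrm{supp}\,\chi_{1}(h^{-\delta}\cdot)$ one has $x_{3}^{2}+\tilde\xi_{3}^{2}\geq c_{1}h^{2\delta}$, so — after absorbing $\OO(|x_{3}|^{3})$ into the oscillator term (this is where $\eps$ is taken small) and the $\tilde\xi_{3}$-part of $\underline{\mathsf r}_{h}$ into a term $\OO(h^{m})$ with $m>2+2\delta$ (this is where Proposition~\ref{micro-uN} is used with $\delta'$ close to $\tfrac12$) — G\aa rding gives $\langle\underline{\mathfrak{N}}^{[1],\sharp}_{h}\mathcal{X}\psi,\mathcal{X}\psi\rangle\geq(b_{0}h^{2}+c'h^{2+2\delta})\|\mathcal{X}\psi\|^{2}$. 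On the other hand $\chi_{1}(h^{-\delta}\cdot)$ has derivatives of size $h^{-\delta}$ but only in $(x_{3},\tilde\xi_{3})$, while the gradient of the symbol of $\underline{\mathfrak{N}}^{[1],\sharp}_{h}$ in those variables — coming only from $h^{2}\underline\nu^{2}(\tilde\xi_{3}^{2}+x_{3}^{2})$ and the remainder, $\underline b(\cdot,s)$ being independent of $(x_{3},\tilde\xi_{3})$ — is $\OO(h^{2+\delta})$ on the transition region $|(x_{3},\tilde\xi_{3})|\sim h^{\delta}$; hence the Poisson bracket is $\OO(h^{2})$ and $[\underline{\mathfrak{N}}^{[1],\sharp}_{h},\mathcal{X}]=\OO(h^{3})$ relative. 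Since $\lambda\leq b_{0}h^{2}+Ch^{2+\eta}$ and $2\delta<\eta$, we get $(c'h^{2+2\delta}-Ch^{2+\eta})\|\mathcal{X}\psi\|^{2}\leq Ch^{3}\|\Op_{h}^w(\underline\chi_{1}(h^{-\delta}\cdot))\psi\|\,\|\mathcal{X}\psi\|$, i.e. $\|\mathcal{X}\psi\|\leq Ch^{1-2\delta}\|\Op_{h}^w(\underline\chi_{1}(h^{-\delta}\cdot))\psi\|$; since $\delta<\tfrac12$, iterating (each step gaining $h^{1-2\delta}>0$ at the price of a slightly larger support) gives $\mathcal{X}\psi=\OO(h^{\infty})\|\psi\|$.

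The main obstacle, compared to the earlier microlocalization lemmas, is the commutator bookkeeping in~\eqref{chix2htxi2}: one must exploit the decoupling, at the level of the principal symbol, between the slow variables $(x_{2},h\tilde\xi_{2})$ and the fast harmonic-oscillator variables $(x_{3},\tilde\xi_{3})$, so that localizing in the $(x_{2},\tilde\xi_{2})$-plane costs only powers of $h$ and no negative power of the localization scale. The remaining delicate point is to check that every remainder contribution — especially the $\OO(|x_{3}|^{3})$ and $\OO(h\tilde\xi_{3}^{3})$ terms of $\underline{\mathsf r}_{h}$ — is of lower order than the relevant spectral gap, which is precisely where Proposition~\ref{micro-uN} and the smallness of $\eps$ are used.
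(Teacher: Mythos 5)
Your argument is correct and follows essentially the same scheme as the paper: for \eqref{chix3txi3} the commutator--G\aa rding--iteration argument, with commutator of size $\OO(h^{3})$, spectral gap $h^{2}\left(ch^{2\delta}-Ch^{\eta}\right)$ and gain $h^{1-2\delta}$ per step, is exactly the paper's proof. The only (minor) divergence is that for \eqref{chix2htxi2} the paper simply notes it is a consequence of Proposition~\ref{micro-uN} (since $\underline{\mathfrak{N}}^{[1],\sharp}_{h}$ is the operator $\underline{\mathcal{N}}^{[1],\sharp}_{\hbar}$ viewed in the $\hbar$-quantization, and a cutoff supported in $\{\underline{b}(\cdot,s)\geq\beta_{0}+\tilde\eps\}$ is covered by the cutoffs treated there), whereas you re-derive it by a direct commutator estimate exploiting the $h\tilde\xi_{2}$ structure of the symbol; both routes work.
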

\begin{proof}
The estimate \eqref{chix2htxi2} is a consequence of Proposition \ref{micro-uN}. Then, let us write the symbol of $\underline{\mathfrak{N}}^{[1],\sharp}_{h}$:
$$\underline{\mathsf{N}}^{[1],\sharp}_{h}=h^2\underline{\nu}^2(x_2,h\tilde\xi_2)\left(\tilde\xi_3^2 +x_3^2\right)+ h^2 \underline{b}(x_2,h\tilde\xi_2,s(x_2,h\tilde\xi_2))+\underline{R}^\sharp_{h^2}(x_{2},h\tilde\xi_{2}, x_{3},h\tilde\xi_{3}).$$
We write
\begin{multline*}
\left\langle\underline{\mathfrak{N}}^{[1],\sharp}_{h}\Op_{h}^w\left(\chi_{1}(h^{-\delta}(x_{3},\tilde\xi_{3}))\right)\psi,\Op_{h}^w\left(\chi_{1}(h^{-\delta}(x_{3},\tilde\xi_{3}))\right)\right\rangle\\
=\lambda\|\Op_{h}^w\left(\chi_{1}(h^{-\delta}(x_{3},\tilde\xi_{3}))\right)\psi\|^2\\
+\left\langle\left[\underline{\mathfrak{N}}^{[1],\sharp}_{h},\Op_{h}^w\left(\chi_{1}(h^{-\delta}(x_{3},\tilde\xi_{3}))\right)\right],\Op_{h}^w\left(\chi_{1}(h^{-\delta}(x_{3},\tilde\xi_{3}))\right)\psi\right\rangle.
\end{multline*}
We get
\begin{multline*}
\left\langle\left[\underline{\mathfrak{N}}^{[1],\sharp}_{h},\Op_{h}^w\left(\chi_{1}(h^{-\delta}(x_{3},\tilde\xi_{3}))\right)\right],\Op_{h}^w\left(\chi_{1}(h^{-\delta}(x_{3},\tilde\xi_{3}))\right)\psi\right\rangle\\
\leq Ch^{3}\left\|\Op_{h}^w\left(\underline{\chi}_{1}(h^{-\delta}(x_{3},\tilde\xi_{3}))\right)\psi\right\|^2,
\end{multline*}
where we have used \eqref{chix2htxi2}.
Then, we use that 
$$\underline{b}(x_2,h\tilde\xi_2,s(x_2,h\tilde\xi_2))\geq b_0,\qquad \underline{\nu}^2(x_2,h\tilde\xi_2)\geq c_{0}>0,\qquad \lambda\leq b_{0}h^2+Ch^{2+\eta},$$
and the G\aa rding inequality to deduce
\begin{multline*}
h^2\left(Ch^{2\delta}-Ch^{\eta}\right)\left\|\Op_{h}^w\left(\chi_{1}(h^{-\delta}(x_{3},\tilde\xi_{3}))\right)\psi\right\|^2\\
\leq Ch^{3}\left\|\Op_{h}^w\left(\underline{\chi}_{1}(h^{-\delta}(x_{3},\tilde\xi_{3}))\right)\psi\right\|^2.
\end{multline*}
The desired estimate follows by an iteration argument.
\end{proof}

In the same way we can deal with $\mathfrak{M}^\sharp_{h}$.
\begin{proposition}\label{microloc-gM}
Let $\eta\in(0,1), \delta\in\left(0,\frac{\eta}{2}\right), C>0$. Let $\chi$ be a smooth cutoff function being $0$ on $\{\underline{b}(x_{2},\xi_{2},s(x_{2},\xi_{2}))\leq\beta_{0}\}$ and being $1$ on the set $\{\underline{b}(x_{2},\xi_{2},s(x_{2},\xi_{2}))\geq\beta_{0}+\tilde\eps\}$, with $\tilde\eps>0$. If $\lambda$ is an eigenvalue of $\mathfrak{M}^{\sharp}_{h}$ such that $\lambda\leq \beta_{0}h^2$ and if $\psi$ is an associated eigenfunction, we have
\begin{equation}\label{chix2htxi2-gM}
\Op_{h}^w\left(\chi(x_{2},h\tilde\xi_{2})\right)\psi=\OO(h^{\infty})\|\psi\|
\end{equation}
and if $\lambda$ is an eigenvalue of $\mathfrak{M}^{\sharp}_{h}$ such that $\lambda\leq b_{0}h^2+Ch^{2+\eta}$ and if $\psi$ is an associated eigenfunction, we have
\begin{equation}\label{chix3txi3-gM}
\Op_{h}^w\left(\chi_{1}(h^{-\delta}(x_{3},\tilde\xi_{3}))\right)\psi=\OO(h^{\infty})\|\psi\|.
\end{equation}
\end{proposition}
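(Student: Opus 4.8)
The plan is to rerun the proofs of Propositions~\ref{space-loc-N}, \ref{xi3-loc-N} and \ref{microloc-gN}, so I only record the structural facts about $\mathfrak{M}^{\sharp}_{h}$ that make those arguments work, and indicate where the adaptation is not automatic. By Corollary~\ref{normal-form2-d} the Weyl symbol of $\mathfrak{M}^{\sharp}_{h}$ is
\[
\mathsf{M}_{h}^{\sharp}=h^2\underline{b}(x_2,h\tilde\xi_2,s(x_2,h\tilde\xi_2))+h^2\mathcal{J}_{h}\,\Op_{h}^w\underline{\nu}^2(x_2,h\tilde\xi_2)+h^2 g^{\star}(h,\mathcal{J}_{h},x_2,h\tilde\xi_2)\,,
\]
where $\mathcal{J}_{h}\geq 0$, $\underline{b}\geq b_0>0$, $\underline{\nu}^2\geq c_0>0$, and $g^{\star}$ is of order three with respect to $(\mathcal{J}_{h}^{1/2},h^{1/2})$ with compact support in its second variable. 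Since that support may be taken as small as we wish, functional calculus in $\mathcal{J}_{h}$ gives $|h^2\Op_{h}^w g^{\star}(h,\mathcal{J}_{h},x_2,h\tilde\xi_2)|\leq\tfrac{c_0}{4}h^2\mathcal{J}_{h}+Ch^{7/2}$, whence the two lower bounds that play the roles of ``$\underline b\geq b_0$'', ``$\underline\nu^2\geq c_0$'' in Proposition~\ref{microloc-gN}:
\[
\mathfrak{M}^{\sharp}_{h}\geq h^2\Op_{h}^w\underline{b}-Ch^{7/2}
\qquad\text{and}\qquad
\mathfrak{M}^{\sharp}_{h}\geq \tfrac{c_0}{2}h^2\mathcal{J}_{h}+h^2\Op_{h}^w\underline{b}-Ch^{7/2}\,.
\]

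For \eqref{chix2htxi2-gM} (where $\lambda\leq\beta_0 h^2$) I would conjugate $\mathfrak{M}^{\sharp}_{h}\psi=\lambda\psi$ with $\Op_{h}^w(\chi(x_2,h\tilde\xi_2))$ exactly as in \eqref{variational-eve}. In $[\mathfrak{M}^{\sharp}_{h},\Op_{h}^w(\chi)]$ the operator $\mathcal{J}_{h}$ enters only through the commutator of $\Op_{h}^w(\chi)$ with $\Op_{h}^w\underline{\nu}^2$, since $\mathcal{J}_{h}$ acts on the disjoint variables $(x_3,\tilde\xi_3)$; moreover every commutator of $\Op_{h}^w(\chi(x_2,h\tilde\xi_2))$ with a symbol depending on $\tilde\xi_2$ only through $h\tilde\xi_2$ carries two extra powers of $h$. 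Hence $[\mathfrak{M}^{\sharp}_{h},\Op_{h}^w(\chi)]=\OO(h^{3})(1+\mathcal{J}_{h}^{3/2})$, and $\mathcal{J}_{h}=\OO(1)$ on the relevant spectral subspace by the second lower bound. Combining the first lower bound with the G\aa rding inequality on the support of $\chi$, where $\underline{b}(x_2,\xi_2,s(x_2,\xi_2))\geq\beta_0+\eps_0$ for some $\eps_0>0$ (exactly as in Proposition~\ref{space-loc-N}), one obtains $\|\Op_{h}^w(\chi)\psi\|^2\leq C h\,\|\Op_{h}^w(\underline{\chi})\psi\|^2$, and the standard iteration over a growing family of cutoffs yields \eqref{chix2htxi2-gM}.

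For \eqref{chix3txi3-gM} (where $\lambda\leq b_0 h^2+Ch^{2+\eta}$) I would first extract from the second lower bound the a priori microlocalization $\langle\mathcal{J}_{h}\psi,\psi\rangle=\OO(h^{\eta})\|\psi\|^2$: testing the eigenvalue equation against $\psi$ and using $\Op_{h}^w\underline{b}\geq b_0-Ch$ gives $\tfrac{c_0}{2}h^2\langle\mathcal{J}_{h}\psi,\psi\rangle\leq (\lambda-b_0 h^2)\|\psi\|^2+Ch^{7/2}\|\psi\|^2=\OO(h^{2+\eta})\|\psi\|^2$. Then, conjugating with $\Op_{h}^w(\chi_1(h^{-\delta}(x_3,\tilde\xi_3)))$ as in the proof of \eqref{chix3txi3}, using \eqref{chix2htxi2-gM} to discard the base region where $\underline{\nu}^2,\underline{b}$ are not controlled, and noting that $\tilde\xi_3^2+x_3^2\geq c\,h^{2\delta}$ on the support of $\chi_1(h^{-\delta}\cdot)$, that the commutator is $\OO(h^{3})$ (plus an $\OO(h^{3-\delta+\eta/2})$ coming from the $\mathcal{J}_h$-dependence of $g^{\star}$), and that the $g^{\star}$ term itself contributes $\OO(h^{2+3\eta/2})$ on this subspace, one arrives at
\[
h^2\bigl(c\,h^{2\delta}-C(h^{\eta}+h^{3\eta/2}+h^{1-\delta+\eta/2}+h)\bigr)\|\Op_{h}^w(\chi_1(h^{-\delta}\cdot))\psi\|^2\leq C h^{3}\,\|\Op_{h}^w(\underline{\chi}_1(h^{-\delta}\cdot))\psi\|^2\,.
\]
Since $2\delta<\eta$ and $\delta<\tfrac12$, the parenthesis is $\geq\tfrac{c}{2}h^{2\delta}$ for $h$ small, and an iteration argument gives \eqref{chix3txi3-gM}.

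The step calling for genuine care is the treatment of the operator-valued remainder $h^2 g^{\star}(h,\mathcal{J}_{h},x_2,h\tilde\xi_2)$: one must turn its ``order three in $(\mathcal{J}_{h}^{1/2},h^{1/2})$'' property into operator inequalities uniform in $h$ (this is where the functional calculus in $\mathcal{J}_{h}$, the smallness of the support of $g^{\star}$, and the a priori bound $\mathcal{J}_{h}=\OO(h^{\eta})$ on the eigenspace are essential), and then to keep track, in each commutator, of the powers of $\mathcal{J}_{h}$ that appear, bounding them by the same a priori microlocalization. Apart from this bookkeeping, the proof is a verbatim repetition of Propositions~\ref{space-loc-N}, \ref{xi3-loc-N} and \ref{microloc-gN}.
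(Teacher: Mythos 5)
Your proposal is correct and takes essentially the same route as the paper, whose proof simply goes back to the $\hbar$-quantization (your remark that commutators of symbols depending on $(x_{2},h\tilde\xi_{2})$ gain two powers of $h$ is exactly the content of that step) and repeats the arguments of Propositions~\ref{space-loc-N}, \ref{xi3-loc-N} and \ref{microloc-gN}. Your additional bookkeeping for the operator-valued remainder $g^{\star}$ — absorbing it via the smallness of its $Z$-support and the a priori control of $\mathcal{J}_{h}$ (which is cleanest using that $\mathfrak{M}^{\sharp}_{h}$ commutes with $\mathcal{J}_{h}$, as in the decomposition $\mathfrak{M}^{\sharp}_{h}=\bigoplus_{k}\mathfrak{M}^{[k],\sharp}_{h}$) — fills in details the paper leaves implicit.
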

\begin{proof}
In order to get \eqref{chix2htxi2-gM}, it is enough to go back to the representation with semiclassical $\hbar$, that is $\mathfrak{M}^{\sharp}_{h}=\mathcal{M}^{\sharp}_{\hbar}$. Indeed the microlocal estimate follows by the same arguments as in Propositions \ref{space-loc-N} and \ref{xi3-loc-N}. Then, \eqref{chix3txi3-gM} follows as in Proposition \ref{microloc-gN}.
\end{proof}

Propositions \ref{microloc-gN} and \ref{microloc-gM} and Theorem \ref{normal-form2} standardly imply the point \eqref{normal-form2-d-a} in Corollary \ref{normal-form2-d}. 
\subsubsection{From $\mathfrak{M}^\sharp_{h}$ to $\mathfrak{M}^{[1],\sharp}_{h}$}
Let us now prove the point \eqref{normal-form2-d-b} in Corollary \ref{normal-form2-d}. Note that the point \eqref{normal-form2-d-c} is just a reformulation of \eqref{normal-form2-d-b}.

Let us consider the Hilbertian decomposition $\mathfrak{M}^\sharp_{h}=\bigoplus_{k\geq1}\mathfrak{M}^{[k],\sharp}_{h}$, where the symbol $\mathsf{M}_{h}^{[k], \sharp}$ of $\mathfrak{M}^{[k],\sharp}_{h}$ is
$$h^2 \underline{b}(x_2,h \tilde\xi_2,s(x_2, h\tilde\xi_2))+(2k-1)h^3\underline{\nu}^2(x_2,h\tilde\xi_2) + h^2 g^{\star}(h, (2k-1)h,x_{2},h\tilde\xi_{2}).$$
There exists $h_{0}>0$ such that for all $k\geq 1$ and $h\in(0,h_{0})$,
\begin{multline*}
\langle \mathfrak{M}_{h}^{[k], \sharp}\psi,\psi\rangle\\
\geq \langle\Op_{h}^w\left(h^2 \underline{b}(x_2,h \tilde\xi_2,s(x_2, h\tilde\xi_2))+(2k-1)h^3(\underline{\nu}^2(x_2,h\tilde\xi_2)-\eps)\right)\psi,\psi\rangle.
\end{multline*}
Since each eigenfunction of $\mathfrak{M}_{h}^{[k],\sharp}$ associated with an eigenvalue less than $\beta_{0}h^2$ provides an eigenfunction of $\mathfrak{M}^\sharp_{h}$, we infer that the eigenfunctions of $\mathfrak{M}_{h}^{[k], \sharp}$ are uniformly microlocalized in a $(x_{2},\xi_{2})$-neighborhood of $(0,0)$ as small as we want. Therefore, on the range of $\mathds{1}_{(-\infty,b_{0}h^2)}(\mathfrak{M}_{h}^{[k], \sharp})$, we have
\begin{multline*}
\langle \mathfrak{M}_{h}^{[k], \sharp}\psi,\psi\rangle\\
\geq \langle\Op_{h}^w\left(h^2 \underline{b}(x_2,h \tilde\xi_2,s(x_2, h\tilde\xi_2))+(2k-1)h^3(\nu^2(0,0)-2\eps)\right)\psi,\psi\rangle.
\end{multline*}
and, with the G\aa rding inequality in the $\hbar$-quantization, we get
$$\langle \mathfrak{M}_{h}^{[k], \sharp}\psi,\psi\rangle\geq \langle\Op_{h}^w\left(h^2b_{0}+(2k-1)h^3(\nu^2(0,0)-\eps)-Ch^4\right)\psi,\psi\rangle.$$
This implies the point \eqref{normal-form2-d-b} in Corollary \ref{normal-form2-d}.

\section{Third Birkhoff normal form}\label{BNF3}

\subsection{Birkhoff analysis of the first level}
In this section we prove Theorem \ref{normal-form3}.

We consider $\mathcal{M}^{[1],\sharp}_{\hbar}=\Op_{\hbar}^w\left(M^{[1],\sharp}_{\hbar}\right)$, with
$$M_{\hbar}^{[1],\sharp}=\hbar \underline{b}(x_2, \xi_2,s(x_2, \xi_2))+ \hbar^{\frac{3}{2}}\underline{\nu}^2(x_2,\xi_2) + \hbar g^{\star}(\hbar^{\frac{1}{2}},\hbar^{\frac{1}{2}},x_{2},\xi_{2}).$$
By using a Taylor expansion, we get, 
\begin{multline}
  M_{\hbar}^{[1],\sharp}=\hbar
  b_{0}+\frac{\hbar}2\mathsf{Hess}_{(0,0)}\underline{b}(x_{2},\xi_{2},
  s(x_2,
  \xi_2))+\hbar^{\frac{3}{2}}\nu^2(0,0)+cx_{2}\hbar^{\frac{3}{2}}+d\xi_{2}\hbar^{\frac{3}{2}}
  \\ + \hbar\OO((\hbar^{\frac{1}{2}},z_{2})^3),
\end{multline}
where $c=\partial_{x_{2}}\nu^2(0,0)$ and
$d=\partial_{\xi_{2}}\nu^2(0,0)$, and we have identified the Hessian
with its quadratic form in $(x_2,\xi_2)$.

Then, there exists a linear symplectic change of variables that diagonalizes the Hessian, so that, if $L_{\hbar}$ is the associated unitary transform, 
$$L_{\hbar}^*\mathcal{M}^{[1],\sharp}_{\hbar}L_{\hbar}=\Op_{\hbar}^w\left(\hat M_{\hbar}^{[1],\sharp}\right),$$
with
$$\hat M_{\hbar}^{[1],\sharp}=\hbar b_{0}+\frac{\hbar}2\dd(x_{2}^2+\xi_{2}^2)+\hbar^{\frac{3}{2}}\nu^2(0,0)+\hat cx_{2}\hbar^{\frac{3}{2}}+\hat d\xi_{2}\hbar^{\frac{3}{2}}+\hbar\OO((\hbar^{\frac{1}{2}},z_{2})^3),$$
where 
$$\theta=\sqrt{\det\mathsf{Hess}_{(0,0)}b(x_{2},\xi_{2},s(x_2,\xi_2))}\,.$$
Since $(\partial_{x_{3}}b)(x_{2},\xi_{2},s(x_2,\xi_2))=0$ and $(0,0)$ is a critical point of $s$, we notice that $\partial^2_{x_{2} x_{3}}b(0,0,0)=\partial^2_{\xi_{2} x_{3}}b(0,0,0)=0$. Thus 
$$\det\mathsf{Hess}_{(0,0,0)}b(0,0,0)=\theta^2 \partial^2_{x_{3}}b(0,0,0).$$
Using that $b$ is identified with $b\circ\chi$ (see Remarks \ref{rem.thm1} and \ref{rem.Tchi}), this provides the expression given in \eqref{equ:alpha}.

Note that $\hat c^2+\hat d^2=\|(\nabla_{x_{2},\xi_{2}}\nu^2)(0,0)\|^2$
since the symplectic transform is in fact a rotation. Moreover we have
$$\dd(x_{2}^2+\xi_{2}^2)+\hat cx_{2}\hbar^{\frac{1}{2}}+ 
\hat d\xi_{2}\hbar^{\frac{1}{2}}=\dd\left(\left(x_{2}- \frac{\hat c
      \hbar^{\frac{1}{2}}}{\dd}\right)^2+\left(\xi_{2}-\frac{\hat d
      \hbar^{\frac{1}{2}}}{\dd}\right)^2\right)-\hbar\frac{\hat
  c^2+\hat d^2}{\dd}.$$ Thus, there exists a unitary transform $\hat
U_{\hbar^{\frac{1}{2}}}$, which is in fact an $\hbar$-Fourier Integral
Operator whose phase admits a Taylor expansion in powers of
$\hbar^{\frac{1}{2}}$, such that
$$\hat U_{\hbar^{\frac{1}{2}}}^*L_{\hbar}^*\mathcal{M}^{[1],\sharp}_{\hbar}L_{\hbar}\hat U_{\hbar^{\frac{1}{2}}}=:\underline{\mathcal{F}}_{\hbar}=\Op_{\hbar}^w\left(\underline{F}_{\hbar}\right),$$
where
$$\underline{F}_{\hbar}= \hbar b_{0}+\hbar^{\frac{3}{2}}\nu^2(0,0)- \frac{\|(\nabla_{x_{2},\xi_{2}}\nu^2)(0,0)\|^2}{2\dd}\hbar^2+ \hbar\left(\frac{\dd}2 |z_{2}|^2+\OO((\hbar^{\frac{1}{2}},z_{2})^3)\right).$$
Now we perform a semiclassical Birkhoff normal form in the space of
formal series $\R\formel{x_{2},\xi_{2},\hbar^{\frac{1}{2}}}$ equipped
with the degree such that $x_{2}^\ell\xi_{2}^m\hbar^{\frac{n}{2}}$ is
$\ell+m+n$ and endowed with the Moyal product. Let
$\underline{F}_{\hbar}^T$ be the full Taylor series of
$\underline{F}_{\hbar}$. We find a formal series
$\tau(x_2,\xi_2,\h^{\frac{1}{2}})$ with a valuation at least $3$ such
that
\[
e^{i\h^{-1} \textup{ad}_\tau}   \underline{F}_{\hbar}^T = F_\h^T,
\]
where $F_\h^T$ is a formal series of the form
$$F_{\hbar}^T=\hbar b_{0}+\hbar^{\frac{3}{2}}\nu^2(0,0)-
\frac{\|(\nabla_{x_{2},\xi_{2}}\nu^2)(0,0)\|^2}{2\dd}\hbar^2+
\frac{\dd}2\hbar |z_{2}|^2+\hbar
k^T(\hbar^{\frac{1}{2}},|z_{2}|^2),$$
and $k^T$ is a formal series in $\R\formel{\hbar^{\frac{1}{2}},|z_{2}|^2}$ (and that can be also written as a formal series in Moyal power of $|z_{2}|^2$, say $(k^T)^\star$).

Let $\tilde\tau(x_2,\xi_2,\mu)$ be a compactly supported function
whose Taylor expansion at $(0,0,0)$ is equal to $\tau(x_2,\xi_2,\mu)$.
By the Egorov theorem~\ref{theo:egorov2}, uniformly with respect to
the parameter $\mu$, we obtain that
$$e^{-i\hbar^{-1}\Op_{\hbar}^w\left(\tilde \tau\right)}
\Op_\h^w(\underline{F}_{\mu^2})
e^{i\hbar^{-1}\Op_{\hbar}^w\left(\tilde\tau\right)}=:\Op_{\hbar}^w(\tilde
F_{\mu})$$ is an $\hbar$-pseudo-differential operator depending
smoothly on $\mu$. Expanding $\tilde F_\mu$ in powers of $\mu$ in the
$S(1)$ topology, and letting $\mu=\sqrt{\h}$, we see that $\tilde
F_{\sqrt\hbar} = F_{\hbar} + \tilde G_{\hbar}$, where
$$F_{\hbar}=\hbar b_{0}+\hbar^{\frac{3}{2}}\nu^2(0,0)-
\frac{\|(\nabla_{x_{2},\xi_{2}}\nu^2)(0,0)\|^2}{2\dd}\hbar^2+
\frac{\dd}2\hbar |z_{2}|^2+\hbar k(\hbar^{\frac{1}{2}},|z_{2}|^2),$$
with $k$ a smooth function with a support as small as desired
w.r.t. its second variable, and $\tilde G_h=\h\OO(\abs{z_2}^\infty)$. It remains to notice that $\Op_{\h}^w\left( k(\h^{\frac{1}{2}},|z_{2}|^2)\right)$ can be written as $k^{\star}(\h^{\frac{1}{2}},\mathcal{K}_{\h})$ modulo $\Op_{\h}^w\left(\mathcal{O}(|z_{2}|^\infty)\right)$. This achieves the proof of Theorem \ref{normal-form3}.

\subsection{Spectral reduction to the third normal form}
Corollary \ref{normal-form3-d} is a consequence of the following lemma and proposition.

\begin{lemma}
We have
$$\mathsf{N}\left(\mathcal{M}^{[1],\sharp}_{\hbar}, \beta_{0}\hbar\right)=\OO(\hbar^{-2}),\qquad \mathsf{N}\left(\mathcal{F}_{\hbar},b_{0}\hbar+C\hbar^{1+\eta}\right)=\OO(\hbar^{-1+\eta}).$$
\end{lemma}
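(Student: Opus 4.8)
The plan is to prove the two estimates separately; neither is deep, and both rely only on results already at hand.

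\emph{The first estimate.} By Corollary~\ref{normal-form2-d}, the operator $\mathcal{M}^{[1],\sharp}_{\hbar}$, read with the semiclassical parameter $h=\hbar^{\frac12}$, is exactly $\mathfrak{M}^{[1],\sharp}_{h}$; and by construction $\mathfrak{M}^{[1],\sharp}_{h}$ is the block of the orthogonal decomposition $\mathfrak{M}^{\sharp}_{h}=\bigoplus_{k\ge 1}\mathfrak{M}^{[k],\sharp}_{h}$ attached to the first Hermite mode of $\mathcal{J}_{h}$ (one substitutes $h=(2\cdot1-1)h$ for $\mathcal{J}_{h}$). Hence $\mathfrak{M}^{[1],\sharp}_{h}$ is (unitarily equivalent to) the restriction of $\mathfrak{M}^{\sharp}_{h}$ to one summand of an invariant orthogonal splitting, so that for every $\Lambda\in\R$ one has $\mathsf{N}(\mathcal{M}^{[1],\sharp}_{\hbar},\Lambda)=\mathsf{N}(\mathfrak{M}^{[1],\sharp}_{h},\Lambda)\le\mathsf{N}(\mathfrak{M}^{\sharp}_{h},\Lambda)$. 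Taking $\Lambda=\beta_{0}\hbar=\beta_{0}h^{2}$ and applying \eqref{eq.N-uN'2} of Lemma~\ref{number-N-uN'} yields $\mathsf{N}(\mathcal{M}^{[1],\sharp}_{\hbar},\beta_{0}\hbar)=\OO(h^{-4})=\OO(\hbar^{-2})$.

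\emph{The second estimate.} Here I would use the explicit form of $\mathcal{F}_{\hbar}$ from Theorem~\ref{t:main}. Exactly as in the proof of Corollary~\ref{coro:main}, the bound $k^{\star}(\hbar^{\frac12},Z)=\OO((\hbar+\abs{Z})^{\frac32})$ together with the smallness of the support of $k^{\star}$ gives, for $\hbar$ small and some small $\eta'>0$,
\[
\frac{\dd}{2}\mathcal{K}_{\hbar}+k^{\star}(\hbar^{\frac12},\mathcal{K}_{\hbar})\ \ge\ \frac{\dd}{2}(1-\eta')\,\mathcal{K}_{\hbar},
\]
since on the spectrum of $\mathcal{K}_{\hbar}$ one has $\mathcal{K}_{\hbar}\ge\hbar$, hence $\hbar+\mathcal{K}_{\hbar}\le 2\mathcal{K}_{\hbar}$ and $\abs{k^{\star}(\hbar^{\frac12},\mathcal{K}_{\hbar})}\le\tfrac{\dd}{2}\eta'\mathcal{K}_{\hbar}$. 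Absorbing $-\tfrac{\zeta}{2\dd}\hbar^{2}$ into $\sigma^{\frac12}\hbar^{\frac32}$, we obtain the operator inequality
\[
\mathcal{F}_{\hbar}\ \ge\ b_{0}\hbar+\tfrac12\sigma^{\frac12}\hbar^{\frac32}+\tfrac{\dd}{2}(1-\eta')\,\hbar\,\mathcal{K}_{\hbar},
\]
valid for $\hbar$ small. The min-max principle then gives
\[
\mathsf{N}\bigl(\mathcal{F}_{\hbar},\,b_{0}\hbar+C\hbar^{1+\eta}\bigr)\ \le\ \mathsf{N}\!\left(\mathcal{K}_{\hbar},\ \frac{2\bigl(C\hbar^{1+\eta}-\tfrac12\sigma^{\frac12}\hbar^{\frac32}\bigr)}{\dd(1-\eta')\,\hbar}\right).
\]
Since $\eta<\tfrac12$, the term $\hbar^{\frac32}$ is negligible in front of $\hbar^{1+\eta}$, so the right-hand threshold is $\le C'\hbar^{\eta}$ for $\hbar$ small; as the eigenvalues of $\mathcal{K}_{\hbar}$ are $(2m-1)\hbar$, $m\in\N^{*}$, we conclude that $\mathsf{N}(\mathcal{F}_{\hbar},b_{0}\hbar+C\hbar^{1+\eta})\le\#\{m\in\N^{*}:(2m-1)\hbar\le C'\hbar^{\eta}\}=\OO(\hbar^{\eta-1})=\OO(\hbar^{-1+\eta})$.

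\emph{Main obstacle.} There is essentially none. The first bound is immediate once $\mathcal{M}^{[1],\sharp}_{\hbar}$ is recognised as a block of $\mathfrak{M}^{\sharp}_{h}$, already counted in Lemma~\ref{number-N-uN'}; the second reduces to the trivial eigenvalue count for the rescaled harmonic oscillator $\mathcal{K}_{\hbar}$. The only point demanding a little care is the bookkeeping of powers of $\hbar$: one exploits the small support of $k^{\star}$ to obtain the clean lower bound for $\mathcal{F}_{\hbar}$, and uses the hypothesis $\eta<\tfrac12$ to discard $\sigma^{\frac12}\hbar^{\frac32}$ against the spectral window $C\hbar^{1+\eta}$.
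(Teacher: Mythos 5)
Your proposal is correct and takes essentially the same route as the paper: the first bound is exactly the reduction to Lemma~\ref{number-N-uN'} via the Hilbertian decomposition $\mathfrak{M}^{\sharp}_{h}=\bigoplus_{k\geq1}\mathfrak{M}^{[k],\sharp}_{h}$ (with $\mathcal{M}^{[1],\sharp}_{\hbar}=\mathfrak{M}^{[1],\sharp}_{h}$), and the second is the comparison with the harmonic oscillator $\mathcal{K}_{\hbar}$ in $x_{2}$ that the paper invokes, carried out with the same absorption of $k^{\star}$ as in the proof of Corollary~\ref{coro:main}.
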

\begin{proof}
The first estimate follows from Lemma \ref{number-N-uN'} and the second one from a comparison with the harmonic oscillator in $x_{2}$.
\end{proof}
The last proposition concerns the microlocalization of the eigenfunctions.
\begin{proposition}
Let $\eta\in(0,1), \delta\in\left(0,\frac{\eta}{2}\right), C>0$. Let $\chi$ be a smooth cutoff function being $0$ in a bounded neighborhood of $(0,0)$ and $1$ outside a bounded neighborhood of $(0,0)$. If $\lambda$ is an eigenvalue of $\mathcal{M}^{[1],\sharp}_{\hbar}$ or of $\mathcal{F}_{\hbar}$ such that $\lambda\leq b_{0}\hbar+C\hbar^{1+\eta}$ and if $\psi$ is an associated eigenfunction, we have
$$\Op_{\hbar}^w\left(\chi(\hbar^{-\delta}(x_{2},\xi_{2}))\right)\psi=\OO(\hbar^\infty).$$
\end{proposition}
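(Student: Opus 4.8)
The plan is to run the commutator (energy--localization) argument already used for Propositions~\ref{space-loc-N}, \ref{xi3-loc-N} and \ref{microloc-gN}, now with a cutoff localizing $(x_{2},\xi_{2})$ at the phase--space scale $\hbar^{\delta}$; the present statement is the third--stage analogue of Proposition~\ref{microloc-gN}. The operators $\mathcal{M}^{[1],\sharp}_{\hbar}$ and $\mathcal{F}_{\hbar}$ are handled in the same way, so I describe $\mathcal{M}^{[1],\sharp}_{\hbar}$, whose symbol is
\[
M^{[1],\sharp}_{\hbar}=\hbar\,\underline{b}(x_{2},\xi_{2},s(x_{2},\xi_{2}))+\hbar^{3/2}\underline{\nu}^{2}(x_{2},\xi_{2})+\OO(\hbar^{2}).
\]
First I would invoke Proposition~\ref{microloc-gM} (written in the $\hbar$--quantization, $\mathfrak{M}^{\sharp}_{h}=\mathcal{M}^{\sharp}_{\hbar}$) together with the fibered decomposition $\mathfrak{M}^{\sharp}_{h}=\bigoplus_{k}\mathfrak{M}^{[k],\sharp}_{h}$, so that $\psi=\Op_{\hbar}^{w}(\chi_{\mathcal{V}})\psi+\OO(\hbar^{\infty})\|\psi\|$ for a fixed cutoff $\chi_{\mathcal{V}}$ supported in a small ball $\mathcal{V}$ around $(0,0)$. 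On $\mathcal{V}$, Assumption~\ref{non-deg2} gives $\underline{b}(x_{2},\xi_{2},s(x_{2},\xi_{2}))-b_{0}\geq c_{0}(x_{2}^{2}+\xi_{2}^{2})$; moreover $(0,0)$ is then a critical point of $(x_{2},\xi_{2})\mapsto\underline{b}(x_{2},\xi_{2},s(x_{2},\xi_{2}))$, so that $\nabla_{(x_{2},\xi_{2})}\bigl[\underline{b}(x_{2},\xi_{2},s(x_{2},\xi_{2}))\bigr]=\OO(|(x_{2},\xi_{2})|)$.

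Set $\Theta_{\hbar}=\Op_{\hbar}^{w}\bigl(\chi(\hbar^{-\delta}(x_{2},\xi_{2}))\bigr)$ and, for a slightly wider cutoff $\underline{\chi}$, $\underline{\Theta}_{\hbar}=\Op_{\hbar}^{w}\bigl(\underline{\chi}(\hbar^{-\delta}(x_{2},\xi_{2}))\bigr)$; testing $\mathcal{M}^{[1],\sharp}_{\hbar}\psi=\lambda\psi$ against $\Theta_{\hbar}\psi$ yields
\[
\langle\mathcal{M}^{[1],\sharp}_{\hbar}\Theta_{\hbar}\psi,\Theta_{\hbar}\psi\rangle=\lambda\|\Theta_{\hbar}\psi\|^{2}+\langle[\mathcal{M}^{[1],\sharp}_{\hbar},\Theta_{\hbar}]\psi,\Theta_{\hbar}\psi\rangle .
\]
On the support of $\chi(\hbar^{-\delta}(x_{2},\xi_{2}))$ one has $|(x_{2},\xi_{2})|\geq c\hbar^{\delta}$, hence there $M^{[1],\sharp}_{\hbar}-b_{0}\hbar\geq c'\hbar^{1+2\delta}$ (using $\underline{\nu}^{2}\geq c>0$ and absorbing the $\OO(\hbar^{2})$ term, which is licit since $\delta<\tfrac12$); combined with the microlocalization to $\mathcal{V}$, the semiclassical Gårding inequality applied at the scale $\hbar^{\delta}$ — exactly as in the proofs of Propositions~\ref{xi3-loc-N} and \ref{microloc-gN} — gives
\[
\langle\mathcal{M}^{[1],\sharp}_{\hbar}\Theta_{\hbar}\psi,\Theta_{\hbar}\psi\rangle\geq\bigl(b_{0}\hbar+c''\hbar^{1+2\delta}\bigr)\|\Theta_{\hbar}\psi\|^{2}+\OO(\hbar^{\infty})\|\psi\|^{2}.
\]
The delicate point is the commutator bound $|\langle[\mathcal{M}^{[1],\sharp}_{\hbar},\Theta_{\hbar}]\psi,\Theta_{\hbar}\psi\rangle|\leq C\hbar^{2}\|\underline{\Theta}_{\hbar}\psi\|^{2}$: the Moyal bracket produces one factor $\hbar$ and one Poisson bracket, and in $\{\,\hbar\,\underline{b}(\cdot,s(\cdot))\,,\,\chi(\hbar^{-\delta}(x_{2},\xi_{2}))\,\}$ the factor $\hbar^{-\delta}$ coming from differentiating the rescaled cutoff is compensated by $\nabla[\underline{b}(\cdot,s(\cdot))]=\OO(\hbar^{\delta})$ on the support of $\nabla\chi(\hbar^{-\delta}(x_{2},\xi_{2}))$; the terms $\hbar^{3/2}\underline{\nu}^{2}$ and $\OO(\hbar^{2})$ contribute $\OO(\hbar^{5/2-\delta})$ and $\OO(\hbar^{3-\delta})$, all $\OO(\hbar^{2})$ since $\delta<\tfrac12$. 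Combining, and using $\lambda\leq b_{0}\hbar+C\hbar^{1+\eta}$ with $2\delta<\eta<1$, one obtains for $\hbar$ small
\[
\|\Theta_{\hbar}\psi\|^{2}\leq C\hbar^{1-2\delta}\|\underline{\Theta}_{\hbar}\psi\|^{2}+\OO(\hbar^{\infty})\|\psi\|^{2};
\]
since $1-2\delta>0$, iterating over a finite chain of nested cutoffs at the scale $\hbar^{\delta}$ — as at the end of the proof of Proposition~\ref{space-loc-N} — gives $\Op_{\hbar}^{w}\bigl(\chi(\hbar^{-\delta}(x_{2},\xi_{2}))\bigr)\psi=\OO(\hbar^{\infty})\|\psi\|$.

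For $\mathcal{F}_{\hbar}$ I would argue the same way, with $\mathcal{K}_{\hbar}=\Op_{\hbar}^{w}(x^{2}+\xi^{2})$ in the role of $\Op_{\hbar}^{w}(\underline{b}(x_{2},\xi_{2},s(x_{2},\xi_{2})))$ — no preliminary localization being needed, since $x^{2}+\xi^{2}$ is proper, and the perturbation $\hbar\,k^{\star}(\hbar^{1/2},\mathcal{K}_{\hbar})=\hbar\,\OO((\hbar+\mathcal{K}_{\hbar})^{3/2})$ producing only commutator terms that are $o(\hbar^{2})$. Alternatively, since $\mathcal{F}_{\hbar}$ is a strictly increasing function of $\mathcal{K}_{\hbar}$ on the relevant energy range, $\psi$ is a Hermite eigenfunction of $\mathcal{K}_{\hbar}$ whose $\mathcal{K}_{\hbar}$--eigenvalue is $\OO(\hbar^{\eta})\ll\hbar^{2\delta}$, and the claim follows from the microlocal concentration of such low--index Hermite states on $\{\,x^{2}+\xi^{2}\lesssim\hbar^{\eta}\,\}$, which is disjoint from the support of $\chi(\hbar^{-\delta}(x_{2},\xi_{2}))$. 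The main obstacle is precisely the commutator estimate: one genuinely needs $\OO(\hbar^{2})$ rather than the naive $\OO(\hbar^{2-\delta})$ — equivalently, one must use that $(0,0)$ is a \emph{critical} point of the effective symbol, which is where the non-degeneracy of the minimum (Assumption~\ref{non-deg2}) enters — because $\delta$ is only constrained by $\delta<\eta/2<\tfrac12$, not by $\delta<\tfrac13$.
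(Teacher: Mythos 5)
Your proposal is correct and follows essentially the same route as the paper, whose proof is just the one-line remark that the argument is the same commutator--G\aa rding--iteration scheme as in Proposition~\ref{microloc-gN} (with the preliminary localization near $(0,0)$ coming from the fibered decomposition of $\mathfrak{M}^{\sharp}_{h}$, exactly as you invoke it). Your additional observation that the commutator must be estimated by $\OO(\hbar^{2})$ rather than the naive $\OO(\hbar^{2-\delta})$ --- using that $(0,0)$ is a critical point of $\underline{b}(\cdot,s(\cdot))$ so that its gradient is $\OO(\hbar^{\delta})$ on the support of the derivatives of the rescaled cutoff --- is precisely the point that makes the transplanted argument work for all $\delta<\eta/2$, and is a useful detail the paper leaves implicit.
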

\begin{proof}
The proof is similar with the one of Proposition \ref{microloc-gN}.
\end{proof}

\appendix
\section{Egorov theorems}\label{app.egorov}
We start with the classical result (see for instance \cite[Theorem
11.1]{Z13} and~\cite[Théorème IV.10]{Ro87}).
\begin{theo}[\mbox{\cite[Theorem 11.1, Remark (ii)]{Z13}}]
\label{theo:egorov1}
  Let $P$ and $Q$ be $h$-pseudo-differential operators on $\R^d$, with
  $P\in\Op_{h}^w\left( S(1)\right)$ and $Q\in \Op_{h}^w\left(S(1)\right)$.  Then the operator $e^{\frac{i}{h}Q} P
  e^{-\frac{i}{h}Q}$ is a pseudo-differential operator in $\Op_{h}^w\left(S(1)\right)$, and
  \[
e^{\frac{i}{h}Q} P e^{-\frac{i}{h}Q} - \Op^w_h(p\circ \kappa) \in
  h \Op_{h}^w\left(S(1)\right)\,.
\] 
Here $p$ is the Weyl symbol of $P$, and the canonical
transformation $\kappa$ is the time-1 Hamiltonian flow associated with
principal symbol of $Q$.
\end{theo}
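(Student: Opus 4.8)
The plan is to run the classical Heisenberg--picture argument. Set $P(t):=e^{\frac{it}{h}Q}Pe^{-\frac{it}{h}Q}$, so that $P(0)=P$, the operator of interest is $P(1)$, and $P(t)$ solves the Heisenberg equation $\frac{d}{dt}P(t)=\frac{i}{h}[Q,P(t)]$. The goal is to show that $P(t)=\Op_{h}^w(p\circ\kappa_t)$ modulo $h\,\Op_{h}^w(S(1))$, where $\kappa_t$ is the time-$t$ flow of the Hamiltonian vector field $X_q$ of the principal symbol $q$ of $Q$; taking $t=1$ then gives the statement, with $\kappa=\kappa_1$.

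First I would check that $\kappa_t$ is well defined on $[0,1]$ and acts continuously on $S(1)$. Since $Q\in\Op_{h}^w(S(1))$, its principal symbol satisfies $q\in S(1)$, so $X_q=(\partial_\xi q,-\partial_x q)$ is a smooth bounded vector field all of whose derivatives are bounded; by the Cauchy--Lipschitz theorem together with a Gr\"onwall estimate the flow $\kappa_t$ exists globally and, for each multi-index, the derivatives of $\kappa_t$ are bounded uniformly for $t\in[0,1]$. Hence $p\circ\kappa_t\in S(1)$ with seminorms bounded uniformly on $[0,1]$, and $t\mapsto p\circ\kappa_t$ solves the transport equation $\frac{d}{dt}(p\circ\kappa_t)=\{q,p\circ\kappa_t\}$.

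Next, set $A(t):=\Op_{h}^w(p\circ\kappa_t)$ and compute the defect $R(t):=\frac{d}{dt}A(t)-\frac{i}{h}[Q,A(t)]$. By the Weyl symbolic calculus, $\frac{i}{h}[\Op_{h}^w(a),\Op_{h}^w(b)]=\Op_{h}^w(\{a,b\})+h\,\Op_{h}^w(S(1))$ for $a,b\in S(1)$ (the $h^{-1}$ being absorbed, the remainder being genuinely $O(h)$ since for Weyl quantization the even terms of the Moyal commutator vanish), so $\frac{i}{h}[Q,A(t)]=\Op_{h}^w(\{q,p\circ\kappa_t\})+h\,\Op_{h}^w(S(1))$; on the other hand $\frac{d}{dt}A(t)=\Op_{h}^w(\{q,p\circ\kappa_t\})$ by the transport equation. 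Therefore $R(t)\in h\,\Op_{h}^w(S(1))$, with seminorms uniform on $[0,1]$. Since $P(t)-A(t)$ solves the same Heisenberg equation with forcing $-R(t)$ and vanishes at $t=0$, Duhamel's formula yields
\[
P(t)-A(t)=-\int_0^t e^{\frac{i(t-s)}{h}Q}\,R(s)\,e^{-\frac{i(t-s)}{h}Q}\,\dx s .
\]
It then remains to show that conjugation by the unitary group $e^{\frac{i\sigma}{h}Q}$ maps $\Op_{h}^w(S(1))$ into itself, continuously and uniformly for $\sigma\in[0,1]$ and $h\in(0,1]$: this simultaneously gives $P(1)\in\Op_{h}^w(S(1))$ and that the integrand above lies in $h\,\Op_{h}^w(S(1))$, hence $P(1)-A(1)\in h\,\Op_{h}^w(S(1))$, which is the claim.

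The main obstacle is exactly this last point, whose apparent circularity (controlling the conjugation seems to presuppose the theorem) is resolved by Beals's commutator characterization of $\Op_{h}^w(S(1))$: an $L^2$-bounded operator $B$ lies in $\Op_{h}^w(S(1))$ iff all iterated commutators $h^{-N}\operatorname{ad}_{\ell_1}\cdots\operatorname{ad}_{\ell_N}B$, with $\ell_j$ quantizations of linear symbols, are bounded on $L^2$ uniformly in $h$. Writing $B_\sigma:=e^{\frac{i\sigma}{h}Q}Be^{-\frac{i\sigma}{h}Q}$ one has $\operatorname{ad}_{\ell}B_\sigma=e^{\frac{i\sigma}{h}Q}\bigl(\operatorname{ad}_{\ell_\sigma}B\bigr)e^{-\frac{i\sigma}{h}Q}$ with $\ell_\sigma:=e^{-\frac{i\sigma}{h}Q}\ell\,e^{\frac{i\sigma}{h}Q}$; solving the linear ODE $\frac{d}{d\sigma}\ell_\sigma=\frac{i}{h}[\ell_\sigma,Q]$ at the symbol level shows $\ell_\sigma$ is again the quantization of a symbol in $S(1)$ (plus $O(h)$). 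Iterating, and using that $e^{\frac{i\sigma}{h}Q}$ is unitary, bounds all the required commutators of $B_\sigma$ in terms of finitely many commutators of $B$, uniformly on $[0,1]$. Feeding this back into the Duhamel identity, together with the uniform-in-$h$ Gr\"onwall bounds from the earlier steps, completes the argument; the only real work is making every estimate uniform in $h$ and in $t\in[0,1]$.
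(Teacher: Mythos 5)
The paper itself gives no proof of this statement --- it is imported verbatim from \cite[Theorem 11.1, Remark (ii)]{Z13} --- and your argument reproduces the standard proof from that reference: the Heisenberg/Duhamel comparison of $e^{\frac{it}{h}Q}Pe^{-\frac{it}{h}Q}$ with $\Op_{h}^w\left(p\circ\kappa_t\right)$, combined with Beals's commutator characterization to obtain membership in $\Op_{h}^w\left(S(1)\right)$ and to break the apparent circularity in controlling the conjugation. Your outline is correct; the only step you compress is the iterated-commutator bookkeeping in the Beals argument, where $\ell_\sigma$ agrees with the quantization of a (linear plus $S(1)$) symbol only up to an $\OO(h)$ error that is merely $L^2$-bounded rather than pseudo-differential, so that the induction on the number of commutators must be organized with some care, exactly as in the cited proof.
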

From this classical version of Egorov's theorem, one can deduce the
following refinement that is useful when $p$ does not belong to $S(1)$
(as it is the case in this paper).
\begin{theo}
\label{theo:egorov2}
  Let $P$ and $Q$ be $h$-pseudo-differential operators on $\R^d$, with
  $P\in \Op_{h}^w\left(S(m)\right)$ and $Q\in \Op_{h}^w\left(S(m')\right)$, where $m$ and $m'$ are order functions
  such that:
  \begin{equation}
    m' = \OO(1); \quad m m' = \OO(1).\label{equ:egorov0}
\end{equation}

Then the operator $e^{\frac{i}{h}Q} P e^{-\frac{i}{h}Q}$ is a
pseudo-differential whose symbol is in $S(m)$, and $e^{\frac{i}{h}Q} P
e^{-\frac{i}{h}Q} - \Op^w_h(p\circ \kappa) \in h \Op_{h}^w\left(S(1)\right)$.
\end{theo}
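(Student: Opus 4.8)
The plan is to derive Theorem~\ref{theo:egorov2} from the $S(1)$ version (Theorem~\ref{theo:egorov1}) by a Heisenberg--Duhamel argument, the only new point being that the extra decay $mm'=\OO(1)$ forces all the commutator corrections to fall back into $S(1)$. Write $P=\Op_h^w(p)$ with $p\in S(m)$, let $q_0\in S(m')$ be the principal symbol of $Q$, and let $\kappa_t=\exp(tH_{q_0})$ be its Hamiltonian flow (so $\kappa_1=\kappa$, with the sign convention matching Theorem~\ref{theo:egorov1}). Since $m'=\OO(1)$ is an order function, $q_0$ and all its derivatives are bounded, so $H_{q_0}$ is a bounded vector field with bounded derivatives; hence $\kappa_t$ is globally defined, $\kappa_t-\mathrm{Id}$ and all its derivatives are bounded uniformly for $t\in[0,1]$, and, because $m(X)\leq C_0\langle X-Y\rangle^{N_0}m(Y)$ and $|\kappa_t(X)-X|$ is bounded, $m\circ\kappa_t\asymp m$ uniformly. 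Consequently $a_t:=p\circ\kappa_t$ is a family of symbols in $S(m)$, uniformly bounded for $t\in[0,1]$, with $a_0=p$ and $a_1=p\circ\kappa$. Note also that $Q\in\Op_h^w(S(m'))\subset\Op_h^w(S(1))$ since $m'$ is bounded, so that $e^{itQ/h}$ is unitary (as in Theorem~\ref{theo:egorov1}) and that theorem applies to conjugation by it.

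Next I would set $D(t):=e^{itQ/h}Pe^{-itQ/h}-\Op_h^w(a_t)$, so $D(0)=0$. Differentiating in $t$, using $\partial_t a_t=\{q_0,a_t\}$ (valid because $q_0$ is constant along its own flow) together with the Weyl calculus identity $\tfrac{i}{h}[Q,\Op_h^w(a_t)]=\Op_h^w(\{q_0,a_t\})+h\,\Op_h^w(r_t)$, one obtains the transport equation
\[
D'(t)=\tfrac{i}{h}[Q,D(t)]+h\,\Op_h^w(r_t).
\]
This is the one place where $mm'=\OO(1)$ is used: $r_t$ is a finite sum of terms, each a product of derivatives of the full symbol of $Q$ (an element of $S(m')$) with derivatives of $a_t$ (an element of $S(m)$), hence $r_t\in S(m'm)=S(1)$, uniformly in $t\in[0,1]$ and $h\in(0,1]$. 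Duhamel's formula then yields
\[
D(1)=h\int_0^1 e^{i(1-s)Q/h}\,\Op_h^w(r_s)\,e^{-i(1-s)Q/h}\,\dx s.
\]

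To conclude, I would apply Theorem~\ref{theo:egorov1} to the integrand: since $r_s\in S(1)$ with seminorms uniform in $s$, and $(1-s)Q\in\Op_h^w(S(1))$, the operator $e^{i(1-s)Q/h}\Op_h^w(r_s)e^{-i(1-s)Q/h}$ is a pseudo-differential operator whose symbol is bounded in $S(1)$ uniformly for $s\in[0,1]$ and $h\in(0,1]$ (uniformity over bounded families and bounded conjugation times being immediate from the constructive proof of Theorem~\ref{theo:egorov1}); integrating in $s$ keeps the symbol in $S(1)$, so $D(1)\in h\Op_h^w(S(1))$. Hence $e^{iQ/h}Pe^{-iQ/h}=\Op_h^w(p\circ\kappa)+h\Op_h^w(S(1))$ is a pseudo-differential operator with symbol $p\circ\kappa+\OO(h)\in S(m)$ (using that $m$ is bounded below, as is the case throughout this paper, e.g. $m=\langle(x_3,\tilde\xi_3)\rangle^2\geq 1$). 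The main obstacle is purely technical: making the operator-level Duhamel computation rigorous when $P$ is unbounded (its symbol lies in $S(m)$ with $m$ possibly unbounded). This is dealt with by first truncating the symbol of $P$ to a large ball, so that it becomes compactly supported — hence in $S(1)$ — but with $S(m)$-seminorms uniform in the truncation radius; running the argument above, whose final bounds come from the uniform $S(1)$-control of the $r_t$ (not from applying Theorem~\ref{theo:egorov1} to $p$ itself); and letting the radius tend to infinity, the uniform $S(1)$-bounds on the remainder symbols providing, in the limit, a symbol in $S(1)$.
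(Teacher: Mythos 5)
Your proposal is correct and follows essentially the same route as the paper: both compare $e^{\frac{i}{h}Q}Pe^{-\frac{i}{h}Q}$ with $\Op_h^w(p\circ\kappa^t)$ by differentiating in time, observe that the hypothesis $m'=\OO(1)$, $mm'=\OO(1)$ forces the commutator and transport terms to cancel to leading order with a remainder in $h\,\Op_h^w\left(S(1)\right)$, integrate (your Duhamel formula is exactly the paper's \eqref{equ:egorov-preuve-integral}), and apply Theorem~\ref{theo:egorov1} to the conjugated integrand. Your extra remarks (truncation to justify the operator-level computation for unbounded $P$, and the need for $m$ bounded below to place the $hS(1)$ remainder inside $S(m)$) only add care beyond what the paper's own proof records.
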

\begin{proof}
The proof is based on the following observation.  In order to compare $\Op_h^w(p\circ \kappa^t)$ and
  $e^{\frac{it}{h}Q} P e^{-\frac{it}{h}Q}$, we consider the
  derivative:
\begin{multline*}
\frac{d}{d\tau} \left(e^{\frac{i\tau}{h}Q} \Op_h^w(p\circ
  \kappa^{t-\tau}) e^{-\frac{i\tau}{h}Q}\right)\\ 
  = e^{\frac{i\tau}{h}Q}
\left( \frac{i}{h}[Q, \Op_h^w(p\circ \kappa^{t-\tau})] +
  \frac{d}{d\tau} \Op_h^w(p\circ \kappa^{t-\tau})
\right)e^{-\frac{i\tau}{h}Q}.
\end{multline*}
From Hypothesis~\eqref{equ:egorov0}, the term $[Q, \Op_h^w(p\circ
\kappa^{t-\tau})]$ belongs to $\Op_{h}^w\left(S(1)\right)$; moreover, if we denote by $q_0$
the principal symbol of $Q$, we have
\[
\frac{d}{d\tau} \Op_h^w(p\circ \kappa^{t-\tau}) = - \Op_h^w
(\{q_0,p\circ \kappa^{t-\tau}\}),
\]
which implies that this term is also in $\Op_{h}^w\left(S(1)\right)$. By symbolic calculus, we see that
\begin{equation}
  \frac{i}{h}[Q, \Op_h^w(p\circ \kappa^{t-\tau})] + \frac{d}{d\tau}
  \Op_h^w(p\circ \kappa^{t-\tau}) \in h \Op_{h}^w\left(S(1)\right),
\label{equ:egorov-preuve-derive}
\end{equation}
uniformly for $t,\tau$ in compact sets.  It follows by integration
from $0$ to $t$ that
\begin{equation}
  e^{\frac{it}{h}Q} P e^{-\frac{it}{h}Q} = \Op_h^w(p\circ \kappa^t) +
  h \int_0^t e^{\frac{is}{h}Q} P_1(s) e^{-\frac{is}{h}Q} ds,
\label{equ:egorov-preuve-integral}
\end{equation}
for some $P_1(s)\in \Op_{h}^w\left(S(1)\right)$, uniformly for $s\in[0,t]$. Applying
Theorem~\ref{theo:egorov1} to the integrand, we see that
$e^{\frac{it}{h}Q} P e^{-\frac{it}{h}Q} - \Op_h^w(p\circ \kappa^t) \in
h \Op_{h}^w\left(S(1)\right)$.
\end{proof}
In order to quantize the formal Birkhoff procedure of Section
\ref{BNF2b}, one needs to consider symbols in a class $\mathcal{C}$
stable under the Moyal product. For that purpose we first define the families of symbols $S(m; [0,1]\times(0,1])$, that is of smooth functions $a : \R^{2d} \times [0,1]\times(0,1] \rightarrow \mathbb{C}$ such that, for any $\alpha\in\N^{2d}$,
there exists $C_\alpha$ such that, $\forall (z ; \mu, h)\in\R^{2d}\times [0,1]\times(0,1]$,
$$
|\partial_{z}^\alpha a (z ; \mu, h)| \leq C_\alpha m(z)\,
$$
and where $m$ is an order function on $\R^{2d}$. The pair $(\mu,h)$ is considered as a parameter.

Then, let $(\varphi_{\mu})_{\mu\in [0,1]}$ be a smooth family of linear maps $\R^{2d}\to \R^{2d}$ and define the following families of symbols on $\R^{2d}$ by
 \begin{multline}\label{eq.Cm}
\mathcal{C}(m)= \Big\{ a\in S(m; [0,1]\times(0,1]);\quad a (z;\mu, h) = \tilde a (\varphi_{\mu}(z); \mu, h)\\
                                  \text{ with } \tilde a\in S(m; [0,1]\times(0,1])\Big\}\,.
\end{multline}

\begin{theo}
  \label{theo:egorov3}
 Let $P$ and
  $Q$ be $h$-pseudo-differential operators on $\R^d$, with $P\in
  \Op_{h}^w\left(\mathcal{C}(m)\right)$ and $Q\in \Op_{h}^w\left(\mathcal{C}(m')\right)$, where $m$ and
  $m'$ are order functions such that:
  \begin{equation*}
 m\geq 1;\quad   m' = \OO(1); \quad m m' = \OO(1).\label{equ:egorov}
  \end{equation*}

Then $e^{\frac{i}{h}Q} P e^{-\frac{i}{h}Q} = \tilde P + R$, where
$\tilde P \in\Op_{h}^w\left( \mathcal{C}(m)\right)$,
$R\in h^\infty \Op_{h}^w\left(S(1)\right)$, and with $\tilde P - \Op^w_h(p\circ \kappa) \in h
\Op_{h}^w\left(\mathcal{C}(1)\right)$.
\end{theo}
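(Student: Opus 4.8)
The plan is to follow closely the proof of Theorem~\ref{theo:egorov2}, supplying two extra ingredients: the stability of the class $\mathcal{C}(\cdot)$ under the symbolic operations that occur, and an iteration of the Egorov step to all orders in $h$ rather than stopping at order $h^1$. \emph{Stability of $\mathcal{C}(\cdot)$:} I would first check that, uniformly in $(\mu,h)\in[0,1]\times(0,1]$, one has $a\star b\in\mathcal{C}(m_1m_2)$ and $\{a,b\}\in\mathcal{C}(m_1m_2)$ whenever $a\in\mathcal{C}(m_1)$ and $b\in\mathcal{C}(m_2)$. Writing $a=\tilde a\circ\varphi_\mu$ and $b=\tilde b\circ\varphi_\mu$, every $z$-derivative of $a$ is ${}^t\varphi_\mu$ applied to a derivative of $\tilde a$; since $\varphi_\mu$ is linear and depends smoothly on $\mu$, these are uniformly bounded linear maps, so the symbol estimates for $a$ reduce to those of $\tilde a$ up to uniform constants. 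Expressing the Moyal product through the constant-coefficient bidifferential operator attached to $\omega_{0}$ and composing with $\varphi_\mu$, one gets $a\star b=\tilde c\circ\varphi_\mu$ where $\tilde c$ is the Moyal-type product of $\tilde a$ and $\tilde b$ associated with the $\mu$-dependent constant-coefficient bilinear form obtained from $\omega_{0}$ and $\varphi_\mu$; this form being bounded uniformly in $\mu$, the usual symbolic calculus gives $\tilde c\in S(m_1m_2)$ uniformly in $(\mu,h)$, hence $a\star b\in\mathcal{C}(m_1m_2)$, and likewise for its order-$h$ term, the Poisson bracket.

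\emph{The flow preserves $\mathcal{C}(m)$.} Let $q_0\in\mathcal{C}(m')$ be the principal symbol of $Q$, say $q_0=\tilde q_0\circ\varphi_\mu$, and let $\kappa^t$ be the flow of $X_{q_0}$. Along a trajectory $z(t)$ of $X_{q_0}$, the function $w(t):=\varphi_\mu(z(t))$ satisfies a closed ODE $\dot w=\Lambda_\mu\nabla\tilde q_0(w)$, where $\Lambda_\mu$ is the constant matrix built from $\varphi_\mu$ and the standard symplectic structure; its right-hand side has uniformly (in $\mu\in[0,1]$) bounded derivatives because $m'=\OO(1)$. Denoting by $\tilde\kappa^t_\mu$ the flow of this ODE, we get $\varphi_\mu\circ\kappa^t=\tilde\kappa^t_\mu\circ\varphi_\mu$, so for $p=\tilde p\circ\varphi_\mu\in\mathcal{C}(m)$ we obtain $p\circ\kappa^t=(\tilde p\circ\tilde\kappa^t_\mu)\circ\varphi_\mu\in\mathcal{C}(m)$, uniformly for $\mu\in[0,1]$ and $t$ in compact sets. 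In particular $\Op_h^w(p\circ\kappa)\in\Op_h^w(\mathcal{C}(m))$, which will be the leading part of the answer.

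\emph{Iterating the Egorov step.} I would then run verbatim the computation in the proof of Theorem~\ref{theo:egorov2} — differentiating $e^{\frac{i\tau}{h}Q}\Op_h^w(p\circ\kappa^{t-\tau})e^{-\frac{i\tau}{h}Q}$ in $\tau$ and integrating — to reach, as in~\eqref{equ:egorov-preuve-integral}, the identity $e^{\frac{i}{h}Q}Pe^{-\frac{i}{h}Q}=\Op_h^w(p\circ\kappa)+h\int_0^1 e^{\frac{is}{h}Q}P_1(s)e^{-\frac{is}{h}Q}\,ds$. The improvement over Theorem~\ref{theo:egorov2} is that, by the two previous steps, $P_1(s)$ actually belongs to $\Op_h^w(\mathcal{C}(m'm))\subseteq\Op_h^w(\mathcal{C}(1))$ (the commutator $[Q,\Op_h^w(p\circ\kappa^{t-\tau})]$ and the transport term $\Op_h^w(\{q_0,p\circ\kappa^{t-\tau}\})$ are of order $m'm=\OO(1)$, and the $\OO(h)$ symbolic-calculus corrections stay in $\mathcal{C}(1)$ since $m'\cdot1=\OO(1)$), uniformly in $s$ and $\mu$. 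Since the hypotheses of the theorem now hold with $m$ replaced by $1$, I can apply the same argument to $e^{\frac{is}{h}Q}P_1(s)e^{-\frac{is}{h}Q}$ and iterate $N$ times, obtaining $e^{\frac{i}{h}Q}Pe^{-\frac{i}{h}Q}=\Op_h^w(p\circ\kappa)+\sum_{k=1}^{N-1}h^k\Op_h^w(c_k)+h^NR_N$ with $c_k\in\mathcal{C}(1)$ (uniformly in $\mu$) and $R_N$ uniformly bounded on $L^2$. A Borel summation of $\sum_{k\geq1}h^{k-1}c_k$ inside $\mathcal{C}(1)$ then yields $r\in\mathcal{C}(1)$ such that $\tilde P:=\Op_h^w(p\circ\kappa)+h\Op_h^w(r)$ satisfies $\tilde P\in\Op_h^w(\mathcal{C}(m))$ (using $\mathcal{C}(1)\subseteq\mathcal{C}(m)$ since $m\geq1$), $\tilde P-\Op_h^w(p\circ\kappa)\in h\Op_h^w(\mathcal{C}(1))$, and $R:=e^{\frac{i}{h}Q}Pe^{-\frac{i}{h}Q}-\tilde P\in h^\infty\Op_h^w(S(1))$, which is the claim.

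\emph{Main obstacle.} The $h^\infty$ gain is the routine part: it is the standard iteration of the $\OO(h)$ improvement at each step, possible here because $m'=\OO(1)$ keeps every remainder in $S(1)$ after the first reduction. The genuinely delicate point is the uniform-in-$\mu$ bookkeeping of the $\varphi_\mu$-structure — in particular that $p\circ\kappa^t$ stays in $\mathcal{C}(m)$ with constants independent of $\mu\in[0,1]$, including the degenerate value $\mu=0$, which is why the flow step is phrased through the pushed-forward flow $\tilde\kappa^t_\mu$ so that $\varphi_\mu$ is only ever applied and never inverted — and that the Moyal calculus respects $\mathcal{C}(\cdot)$ with $\mu$-uniform constants.
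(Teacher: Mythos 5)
Your proposal is correct and takes essentially the same route as the paper's own proof: stability of $\mathcal{C}(\cdot)$ under the Moyal calculus (which the paper states modulo $h^\infty S(1)$, an immaterial difference here), the conjugation identity $\varphi_\mu\circ\kappa^t=\tilde\kappa^t_\mu\circ\varphi_\mu$ exploiting linearity of $\varphi_\mu$ to get $p\circ\kappa\in\mathcal{C}(m)$ uniformly in $\mu$, an inductive iteration of the Egorov step from Theorem~\ref{theo:egorov2}, and a Borel summation of the $\mathcal{C}(1)$ coefficients parametrized by $\tilde z=\varphi_\mu(z)$, using $m\geq 1$ to place the correction back in $\mathcal{C}(m)$. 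No gaps to report.
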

\begin{proof}
  Since $\varphi_\mu$ is linear, one can see (using for
  instance~\cite[Theorem 4.17]{Z13}) that $\mathcal{C}$ is stable under
  the formal Moyal product, \emph{i.e.} for all order functions $m_1$
  and $m_2$, we have
  $$(\mathcal{C}(m_1)) \star
  (\mathcal{C}(m_2)) \subset \mathcal{C}(m_1m_2) + h^\infty S(1).$$

  Let $\kappa$ be the canonical transformation associated with
  $Q$. Then, since $m\geq 1$, we have $p\circ \kappa\in\mathcal{C}(m)$; indeed, if we write the
  Hamiltonian flow of $Q$ in terms of the variable $\tilde z =
  \varphi_\mu(z)$, we see from the linearity of $\varphi_\mu$ that the
  components of the transformed vector field belong to
  $\mathcal{C}(m')$. Therefore $\varphi_\mu\circ \kappa$ is of the form
  $\tilde\kappa_\mu \circ \varphi_\mu$, for some diffeomorphism
  $\tilde\kappa_\mu$ depending smoothly on $\mu$.

  Therefore, both terms in~\eqref{equ:egorov-preuve-derive} belong to
  $\Op_{h}^w\left(\mathcal{C}(1)\right)$. Applying this argument inductively
  in~\eqref{equ:egorov-preuve-integral}, we may write, for any $k>0$,
\begin{equation*}
  e^{\frac{i}{h}Q} P e^{-\frac{i}{h}Q} - \Op_h^w(p\circ \kappa) -
  (h \tilde P_1 + h^2\tilde P_2 + \cdots + h^k\tilde P_k) \in h^{k+1}\Op_{h}^w\left(S(1)\right),
\end{equation*}
with $\tilde P_j\in \Op_{h}^w\left(\mathcal{C}(1)\right)$. By a Borel summation in $h$,
parametrized by $\tilde z = \varphi_\mu(z)$, we can find a symbol
$\hat P\in \Op_{h}^w\left(\mathcal{C}(1)\right)$ such that we have the asymptotic
expansion in $\Op_{h}^w\left(S(1)\right)$:
\[
\hat P \sim h \tilde P_1 + h^2\tilde P_2 + \cdots 
\]
We conclude by letting $\tilde P = \Op_h^w(p\circ \kappa) + \hat P$.
\end{proof}

We will also need to examine how the Egorov theorem behaves with
respect to taking formal power series of symbols. For this, it is
convenient to introduce a filtration of $S(m)$.
\begin{theo}\label{theo:egorov4}
  Let $m$ be an order function on $\R^{2d}$, and let
  $(\OO_j)_{j\in\N}$ be a filtration of $S(m)$, \emph{i.e.}:
  \[
  \OO_0 = S(m), \quad \OO_{j+1}\subset \OO_j.
  \]
  Let $P=\Op_{h}^w p$ and $Q=\Op_{h}^w q$ be $h$-pseudo-differential operators on $\R^d$, with
  $p\in S(m)$ and $q\in S(m')$, where $m'$ is an order function such
  that $m'$ and $mm'$ are bounded.

  Assume that:
\begin{equation}
  \tfrac{i}{h}\operatorname{ad}_q (\OO_j) \subset \OO_{j+1}; \qquad \forall j\geq 0.
\label{equ:egorov-formel}
\end{equation}

Then for any $k\geq 0$, the Weyl symbol of the pseudo-differential
operator $e^{\frac{i}{h}Q} P e^{-\frac{i}{h}Q} - \sum_{j=0}^k
\frac{1}{j!}(\tfrac{i}{h}\operatorname{ad}_Q )^jP$ belongs to
$\Op_{h}^w\left(\OO_{k+1}\right)$. In other words, the series of
$\exp(\tfrac{i}{h}\operatorname{ad}_Q)P$ converges to
$e^{\frac{i}{h}Q} P e^{-\frac{i}{h}Q}$ for the filtration
$(\OO_j)_{j\in\N}$.
\end{theo}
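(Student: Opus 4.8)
The plan is to expand the conjugation $e^{\frac{i}{h}Q}Pe^{-\frac{i}{h}Q}$ by Taylor's formula in an auxiliary ``time'' parameter, and then to control the remainder by showing that conjugation by $e^{\frac{it}{h}Q}$ preserves the filtration $(\OO_j)_{j\in\N}$.

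First I would introduce $P(t):=e^{\frac{it}{h}Q}Pe^{-\frac{it}{h}Q}$ for $t\in[0,1]$. Replacing $Q$ by $tQ$ in Theorem~\ref{theo:egorov2} (uniformly for $t\in[0,1]$), $P(t)$ is a pseudo-differential operator with Weyl symbol in $S(m)$ depending smoothly on $t$; differentiating gives $\frac{d}{dt}P(t)=\tfrac{i}{h}\operatorname{ad}_QP(t)$, and since $Q$ commutes with $e^{\frac{it}{h}Q}$ one has $\bigl(\tfrac{i}{h}\operatorname{ad}_Q\bigr)^{j}P(t)=e^{\frac{it}{h}Q}\bigl[(\tfrac{i}{h}\operatorname{ad}_Q)^{j}P\bigr]e^{-\frac{it}{h}Q}$ for every $j$. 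Taylor's formula with integral remainder at order $k$ then yields
\[
e^{\frac{i}{h}Q}Pe^{-\frac{i}{h}Q}=\sum_{j=0}^{k}\frac{1}{j!}\bigl(\tfrac{i}{h}\operatorname{ad}_Q\bigr)^{j}P+R_k,\qquad R_k=\frac{1}{k!}\int_0^1(1-t)^k\,e^{\frac{it}{h}Q}\,\Op_h^w(\tilde p)\,e^{-\frac{it}{h}Q}\,\mathrm{d}t,
\]
where $\tilde p=\bigl(\tfrac{i}{h}\operatorname{ad}_q\bigr)^{k+1}p$ is the Weyl symbol of $\bigl(\tfrac{i}{h}\operatorname{ad}_Q\bigr)^{k+1}P$. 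Applying hypothesis~\eqref{equ:egorov-formel} $(k+1)$ times gives $\tilde p\in\OO_{k+1}$, and since $m'$ is bounded one still has $\tilde p\in S(m)$.

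The crux is the following claim: if $a\in\OO_{k+1}\cap S(m)$, then the Weyl symbol $b(t)$ of $e^{\frac{it}{h}Q}\Op_h^w(a)e^{-\frac{it}{h}Q}$ belongs to $\OO_{k+1}$ for all $t$, smoothly. To prove it, I would note that $b(t)\in S(m)$ depends smoothly on $t$ (again Theorem~\ref{theo:egorov2}) and solves the symbol-level Duhamel equation $\dot b(t)=\tfrac{i}{h}\operatorname{ad}_q b(t)$, $b(0)=a$, with $\tfrac{i}{h}\operatorname{ad}_q$ the Moyal commutator. Since the Moyal bracket is a formal differential operator, passing to Taylor series in the filtration variables turns this into the same ODE in the ring of formal series, whose unique solution is $b(t)^{T}=\sum_{l\ge0}\frac{t^l}{l!}\bigl(\tfrac{i}{h}\operatorname{ad}_q\bigr)^{l}a^{T}$; this sum is well defined because $\bigl(\tfrac{i}{h}\operatorname{ad}_q\bigr)^{l}a\in\OO_{k+1+l}$ contributes only in degrees $\ge k+1+l$, so only finitely many terms occur in each degree. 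Each term lies in $\OO_{k+1}$ by~\eqref{equ:egorov-formel}, and $\OO_{k+1}$ is closed for the formal topology, so $b(t)^T\in\OO_{k+1}$, i.e. $b(t)\in\OO_{k+1}$.

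Granting the claim, I would conclude: $e^{\frac{it}{h}Q}\Op_h^w(\tilde p)e^{-\frac{it}{h}Q}=\Op_h^w(b(t))$ with $b(t)\in\OO_{k+1}$ uniformly and smoothly in $t$, hence $R_k=\Op_h^w\!\left(\tfrac{1}{k!}\int_0^1(1-t)^kb(t)\,\mathrm{d}t\right)$; since $\OO_{k+1}\cap S(m)$ is cut out by the vanishing of finitely many Taylor coefficients, it is stable under integration of smooth families, so the symbol of $R_k$ lies in $\OO_{k+1}$, which is the assertion. The main obstacle is the claim: one must check carefully that reading off Taylor series in the filtration variables intertwines the operator-level Duhamel equation with a formal ODE that manifestly respects the filtration, which is precisely where hypothesis~\eqref{equ:egorov-formel} — that $\tfrac{i}{h}\operatorname{ad}_q$ strictly raises the filtration degree — enters decisively.
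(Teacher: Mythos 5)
Your Taylor-expansion skeleton is the same as the paper's, but the way you handle the remainder opens a genuine gap. You write the remainder with $(\tfrac{i}{h}\operatorname{ad}_Q)^{k+1}$ applied to $P$ \emph{inside} the conjugation, so you then need the ``claim'' that conjugation by $e^{\frac{it}{h}Q}$ sends symbols in $\OO_{k+1}\cap S(m)$ to symbols in $\OO_{k+1}$, together with stability of $\OO_{k+1}$ under integration in $t$. Neither follows from the hypotheses of the theorem: the filtration $(\OO_j)$ is abstract --- the only assumptions are $\OO_0=S(m)$, the nesting $\OO_{j+1}\subset\OO_j$, and \eqref{equ:egorov-formel} --- and nothing is assumed about membership in $\OO_j$ being detected by Taylor coefficients, about $\OO_j$ being ``closed for the formal topology'', or about stability under integration of smooth families. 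Your proof of the claim invokes exactly these unstated properties, and even granting a Taylor-valuation filtration as in the paper's application, the key interchanges (Taylor series versus the exact Duhamel/Moyal evolution, termwise differentiation in $t$, uniqueness of the formal solution) are asserted rather than proved. In effect the claim is at least as strong as the theorem itself, so the whole content of the proof is loaded onto an unproved step.

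The fix is the one the paper uses, and you already have the identity needed for it: since $\operatorname{ad}_Q$ commutes with conjugation by $e^{\frac{it}{h}Q}$, the integral remainder can equally be written as $\frac{1}{k!}\,(\tfrac{i}{h}\operatorname{ad}_Q)^{k+1}\int_0^1(1-t)^k\, e^{\frac{it}{h}Q}P e^{-\frac{it}{h}Q}\,\dx t$, with the iterated commutators \emph{outside} the integral. By Theorem~\ref{theo:egorov2} (applied with $tQ$ in place of $Q$, uniformly for $t\in[0,1]$), the integral is a pseudo-differential operator with Weyl symbol in $S(m)=\OO_0$; applying hypothesis \eqref{equ:egorov-formel} $(k+1)$ times at the symbol level then places the remainder in $\Op_{h}^w\left(\OO_{k+1}\right)$, with no need for any invariance of $\OO_{k+1}$ under conjugation or integration.
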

\begin{proof}
By the Taylor formula, we can write
\[
e^{\frac{i}{h}Q} P e^{-\frac{i}{h}Q} = \sum_{j=0}^k
\frac{1}{j!}(\operatorname{ad}_{ih^{-1}Q} )^jP +
\frac{1}{k!}(\operatorname{ad}_{ih^{-1}Q})^{k+1} \int_0^1 (1-t)^k
e^{\frac{it}{h}Q} P e^{-\frac{it}{h}Q}dt.
\]
By Theorem~\ref{theo:egorov2}, we see that the integral belongs to
$\Op_{h}^w\left(S(m)\right)=\Op_{h}^w\left(\OO_0\right)$. Therefore, by Assumption~\eqref{equ:egorov-formel}, the
remainder in the Taylor formula lies in $\Op_{h}^w\left(\OO_{k+1}\right)$.
\end{proof}

\subsubsection*{Acknowledgments} The authors would like to thank Yves
Colin de Verdi\`ere for stimulating discussions. This work was
partially supported by the ANR (Agence Nationale de la Recherche),
project {\sc Nosevol} n$^{\rm o}$ ANR-11-BS01-0019 and by the Centre
Henri Lebesgue (program \enquote{Investissements d'avenir} -- n$^{\rm
  o}$ ANR-11-LABX-0020-01). Y. K. was partially supported by the
Russian Foundation of Basic Research, project 13-01-91052-NCNI-a. During the completion of this work, B. H. was Simons foundation visiting
fellow at the Isaac Newton Institute in Cambridge.

\def\cprime{$'$}

\end{document}